\documentclass[11pt,letterpaper]{article}
\usepackage[margin=1in]{geometry}
\usepackage{microtype}
\usepackage[utf8]{inputenc}
\usepackage{graphicx} 
\usepackage{tikz}
\usepackage{gensymb}  
\usepackage{enumitem}
\usepackage{amsmath}
\usepackage{amssymb}
\usepackage{amsthm}
\usepackage{thm-restate}
\usepackage{marvosym}  

\usepackage[colorlinks,allcolors=blue]{hyperref}
\usepackage{cleveref}
\newtheorem{theorem}{Theorem}
\newtheorem{fact}[theorem]{Fact}
\newtheorem{lemma}[theorem]{Lemma}
\newtheorem{corollary}[theorem]{Corollary}
\newtheorem{problem}[theorem]{Problem}

\begin{document}

\title{Near-Linear Time Maximum Clique in Circular Arc Graphs}

\author{Dario Moschetti \\ {\small University of Milan}
  \and Adam Polak \\ {\small Bocconi University}
  \and Jonas Schmidt \\ {\small Bocconi University}}

\date{}

\maketitle

\begin{abstract}
We present a near-linear time algorithm to find a maximum clique in circular arc graphs, the intersection graphs of arcs on the circumference of a circle, answering a 30-year-old open question posed by Bhattacharya and Kaller [J. Algorithms 1997] and Tiskin [CPM 2006]. Our algorithm runs in time $O(n\log^2 n)$, given $n$ circular arcs, where the graph is implicit. Previous approaches take at least quadratic time (unless the implicitly given graph is sparse). To overcome the quadratic barrier, we solve (structured and related) instances of bipartite matching by interpreting them as a single, offline 2D dominance problem, which in turn we model as a shortest paths problem in a planar grid graph. This allows for the final step, which builds on the theory of Monge matrices to obtain sparse representation and efficient distance computations in the grid graph.

Additionally, we provide an observation about the structure of circular arc cliques that leads to an algorithm for the weighted version of the problem with a new parameter, the minimum load $d$, which we define as the minimum, over all points on the circle, of the number of arcs containing that point. The algorithm runs in time $O(dn\log\log n + n\log n)$. This is always at least as good as the previous best maximum weight clique algorithm, which runs in time $O(m\log\log n + n\log n)$, where $m$ denotes the number of edges in the implicit intersection graph, because the minimum load satisfies $d \le 2m/n$.

The high level concepts behind our main algorithm for circular arc graphs also apply to a related graph class, interval filament graphs, which leads to an improved, randomized $O(n^\omega)$-time algorithm for maximum clique, where $\omega < 2.372$ is the matrix multiplication exponent. Last, we provide a quadratic fine-grained lower bound, conditional on the Orthogonal Vectors Hypothesis, which even applies to co-comparability graphs, a subclass of interval filament graphs.
\end{abstract}

\section{Introduction}\label{sec:intro}

Finding a maximum size clique in a graph is one of the most classical problems in theoretical computer science and belongs to Karp's list of 21 NP-complete problems~\cite{Karp72}.
That is why people study whether the maximum clique problem remains hard on restricted graph classes. One family of graph classes particularly studied in this context are \emph{geometric intersection graphs}, that is, graphs where every vertex corresponds to a region in a geometric space and two vertices are connected if and only if their corresponding regions intersect.
A lot of previous work has already identified geometric objects for which maximum clique remains NP-hard, such as straight line segments in the plane~\cite{cabello2013clique}, and many others where it becomes polynomial time solvable, such as axis-aligned rectangles in the plane~\cite{imai1983finding} or chords inside a circle~\cite{gavril1973algorithms,Tiskin06,Tiskin15}.
It is common to assume that the input is presented as $n$ geometric objects and the graph is implicit. 

One such famous graph class where the maximum clique problem is particularly easy to solve is the class of \emph{interval graphs}, where vertices correspond to intervals on the real number line.
Finding a maximum clique in interval graphs is even a common coding interview question: It can be solved in time $O(n \log n)$ by a simple sweep-line approach that maintains the set of active intervals. One of these sets must form a maximum clique.
A fundamental reason for why this simple algorithm works is that intervals satisfy the \emph{Helly property}.
In the language of geometric intersection graphs, the Helly property asserts that any set of vertices forming a clique must share a common point in the geometric model.
Intuitively, this restricts the maximum clique to a polynomial---or in case of interval graphs linear---number of candidates, which can be easily constructed and compared in size.
The same observation, with slightly more algorithmic effort, leads to a near-linear time algorithm for the intersection graph of rectangles in the plane~\cite{imai1983finding}.

What about geometric objects where the Helly property fails?
Arguably the simplest generalization of interval graphs that no longer satisfies the Helly property are \emph{circular arc graphs}, where vertices correspond to arcs on the circumference of a circle instead of intervals on the line.
These do not satisfy the Helly property because the arcs of a clique can wrap around the circle and meet on the other side instead of in a single common point. 
Circular arc graphs have been studied for a variety of algorithmic problems~\cite{garey1980complexity,hsu1985maximum,ApostolicoH87,ShihH89,hsu1991linear,shih1992n,BhattacharyaK97,mcconnell2003linear}, and they naturally model periodic scheduling tasks and other applications, see, e.g., \cite{golumbic2004algorithmic}.

After a line of improvements from the initial $O(nm)$ running time~\cite{hsu1985maximum,ApostolicoH87,ShihH89}, the best known algorithm for maximum clique in circular arc graphs is from the 1990s due to Bhattacharya and Kaller~\cite{BhattacharyaK97} and runs in time $O(m + n\log n)$, where $m$ is the number of edges in the implicitly given graph.
While a notable improvement, the latter running time still degrades to $n^2$ in dense graphs, i.e., when each circular arc overlaps on average with at least a constant fraction of other arcs.

This raises the natural question of whether the running time could be improved when measuring the time just in terms of $n$, the number of arcs.
Bhattacharya and Kaller~\cite{BhattacharyaK97} specifically ask if an improved algorithm could lead to a running time of $O(n \log n)$.
In the 30 years since their paper, no new algorithms for maximum clique in circular arc graphs have been developed.
The problem of improving the running time, however, was repeated in a 2006 paper by Tiskin~\cite{Tiskin06}, which studies cliques in intersection graphs of chords inside a circle.

\subsection{Our contribution}\label{sec:contribution}

We fully resolve the time complexity of maximum clique in circular arc graphs, up to logarithmic factors, by providing a near-linear time algorithm for the problem.

\begin{restatable}{theorem}{thmAlgCircularArcs}\label{thm:clique-circ-arc}
    Maximum clique in circular arc graphs can be solved in time $O(n\log^2n)$.
\end{restatable}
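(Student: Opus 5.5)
Our plan is to reduce to the classical structural fact (Hsu; Bhattacharya--Kaller) that a clique in a circular arc graph either has a common point, and is then found by a sweep in $O(n\log n)$ as for intervals, or is a \emph{circle-covering} clique. For the second kind we guess, via the arc endpoints, a pair of ``extremal'' arcs $a,b$ whose union covers the circle. With respect to $(a,b)$ the remaining arcs of the clique split into those containing the left overlap region of $a\cap b$ and those containing the right one. The two groups pairwise intersect except possibly across the groups, and across the groups intersection is a simple comparison of endpoints. So for fixed $(a,b)$ the best clique is a maximum independent set in a bipartite ``non-intersection'' graph on the two sides, which by K\H{o}nig's theorem equals the number of candidate arcs minus a maximum bipartite matching. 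Moreover, after sorting, two arcs on opposite sides fail to intersect exactly when a 2D dominance relation between their endpoint coordinates holds. This structured matching is greedy-solvable: match in sorted order, each left arc to the best available dominated right arc.

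Next, we replace the $\Theta(n^2)$ pairs $(a,b)$ by a single offline problem. For a fixed left endpoint order, the greedy dominance matching can be phrased as a longest/shortest path in a planar grid graph. Rows and columns are indexed by the sorted endpoints, and the edge weights record whether a point (arc) is present. Each pair $(a,b)$ then corresponds to restricting to a subrectangle or prefix/suffix of the point set. The matching size, and hence the clique size, becomes a boundary-to-boundary distance in this grid graph between vertices determined by $a$ and $b$. Concretely, we aim for the identity $\text{clique}(a,b)=|S(a,b)|-\mathrm{dist}(s_a,t_b)$, where $|S(a,b)|$ is a count obtainable by range counting.

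Third, we use that boundary-to-boundary distance matrices of planar grid graphs are (unit-)Monge. Following Tiskin's seaweed/unit-Monge machinery, we represent the distance matrix implicitly by a permutation of size $O(n)$. We compute it by divide and conquer, splitting the grid into halves and combining the two permutation matrices by a unit-Monge distance multiplication in $O(n\log n)$. That gives $O(n\log^2 n)$ in total, or $O(n\log n)$ per level over $O(\log n)$ levels. Each query $\mathrm{dist}(s_a,t_b)$ is then a dominance count on the permutation, answerable in $O(\log n)$ each via a range tree.

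Finally, we avoid querying all pairs: we only need, for each $a$, the best $b$. Because the distance matrix is Monge, the row maxima of the matrix $|S(a,b)|-\mathrm{dist}$ can be found with SMAWK, provided $|S(a,b)|$ is itself Monge or separable. That costs $O(n)$ queries of $O(\log n)$ each. The main obstacle is the second step: showing that the validity constraints on $(a,b)$ (coverage, which arcs are candidates) and the greedy matching for all pairs simultaneously fit into one grid graph, with the correction terms staying Monge. If that fails, we first try to split into $O(\log n)$ sub-instances by a divide and conquer over $a$, with each sub-instance Monge on its own.
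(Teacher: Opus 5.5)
\textbf{The gap: the two-point formulation never reduces to one instance.} The step you flag as the main obstacle is where the whole proof lives, and your setup does not get past it.

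First, the decomposition is not right as stated. On a circle of circumference $360$, take the arcs $[0,150]$, $[120,270]$, $[240,30]$ (the last one wrapping through $0$). They pairwise intersect and have no common point, yet no two of them cover the circle. So a non-Helly clique need not contain a covering pair.

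The correct classical statements give one-parameter families of two-point subproblems. Either every arc of a clique with base $a$ contains the head or the tail of $a$, or one uses the antipodal pair of Bhattacharya--Kaller. In all of them \emph{both} points move with the parameter. Consequently, the split of arcs into your two groups (the two colors) and the regions selecting the candidates change from subproblem to subproblem. There is no single bichromatic point set, and no single grid graph, in which each subproblem is one boundary-to-boundary distance. A subproblem bounded on four moving sides has corner cells that are interior vertices of the grid, not on the outer face where distances are Monge.

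A matrix indexed by $(a,b)$ whose entries come from different instances has no reason to be Monge or totally monotone. SMAWK therefore has nothing to act on, and your fallback is not a concrete argument. The paper notes that even the one-parameter Bhattacharya--Kaller sweep, with all four sides of the query moving, only yields $O(n^{1.5}\log^2 n)$ with these tools.

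\textbf{The missing idea: fix one of the two points once and for all.} Take an arbitrary cut point $\phi$; call the arcs containing it blue and the others red. For a clique containing a red arc, let $r$ be its red arc whose head comes last clockwise from $\phi$. Since red arcs do not pass through $\phi$, every red arc of the clique contains the head of $r$. So every clique lies in $\mathcal{B}$ or in $\mathcal{B}\cup\mathcal{R}_r$ for some red $r$. That gives only $n$ subproblems, all sharing the same blue clique.

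After cutting at $\phi$ and normalizing, map red $(h,t)\mapsto(h,2n-t)$ and blue $(h,t)\mapsto(t,2n-h)$. Then a red and a blue arc intersect iff the red point does not dominate the blue one. The subproblem at $\theta$ becomes a maximum compatible subset of one fixed point set restricted to the quadrant $(-\infty,\theta]\times(-\infty,2n-\theta]$, plus the number $f_\theta$ of blue arcs through $\theta$.

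Only two sides of the query move. Hence a single prefix matrix $M$, holding distances from the left side to the bottom side of one grid graph, answers every query as $\mu=|Q|-M[i,j]+|F|$. Here $F=\{r\in R: x_r\le X,\ y_r>Y\}$ is the set of red points every path is forced to pay for, and $|Q|,|F|$ come from range counting. No optimization over pairs is needed.

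\textbf{A secondary issue on the Monge side.} Your unit-Monge/permutation representation is not established for this grid graph, and the across matrices needed in the recursion have $+\infty$ entries. The paper instead:
\begin{itemize}
\item adds weight-$1$ upward edges to make those matrices finite;
\item bounds all cores by $O(n)$ because the entries are integers at most $2n$;
\item refines row indices between recursion levels;
\item multiplies with the core-sparse Monge product of Gawrychowski et al.
\end{itemize}
This is what yields $O(n\log^2 n)$.
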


The techniques behind our algorithm that establishes \Cref{thm:clique-circ-arc} are noticeably different than previous algorithms for problems on circular arc graphs.
On a very high level, our algorithm solves (certainly structured and related) instances of bipartite matching by interpreting them as a 2D dominance problem, which we model in turn as a shortest paths problem in a planar grid graph.
Computing the paths in the last step crucially relies on \emph{Monge matrices}.
These are structured matrices that naturally encode distances in planar graphs~\cite{FakcharoenpholR06}, and are used to solve string problems~\cite{GaneshKLS22,GorbachevK25} and longest increasing subsequence~\cite{GawrychowskiGK25}.
Monge matrices have also been applied to the maximum clique problem in circle graphs via a reduction to a longest common subsequence query problem~\cite{Tiskin06}.
In our case, we have to work with Monge matrices directly to compute the paths of interest with a divide-and-conquer approach.
In Section~\ref{sec:tech-overview}, we give more details and intuition on how the algorithm behind \Cref{thm:clique-circ-arc} works.

\paragraph{Parameterization by minimum load.}
We provide another algorithm for maximum clique in circular arc graphs, parameterized by a new parameter, the \emph{minimum load}. 
For a point $\theta$ on the circle, define its load $d(\theta)$ to be the number of arcs passing through $\theta$.
The minimum load $d$ is the minimum of $d(\theta)$ over all points $\theta$ on the circle.
The algorithm is based on a structural property of circular arc graphs that we prove: In any given clique $C$, either (1) all arcs intersect in a single point, (i.e., the Helly property holds for this specific clique), or (2) the whole circumference is covered by the so-called \emph{base arcs} of $C$, i.e., arcs that do not contain other arcs of $C$. This algorithm also works for the weighted version of the problem, where each arc has a weight and the goal is to find a clique that maximizes the total weight.

\begin{restatable}{theorem}{thmParamCircularArcs}\label{thm:parameterized}
    The maximum weight clique problem in circular arc graphs can be solved in time $O(dn\log\log n + n\log n)$, where $d$ is the minimum load.
\end{restatable}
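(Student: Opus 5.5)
Our plan is to use the structural dichotomy stated just before the theorem: every clique $C$ of arcs either has a common point (Helly case), or its base arcs (members of $C$ containing no other member of $C$) cover the whole circle. We handle the two cases separately and return the better of the two answers.

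\textbf{Helly case.} Sort the $2n$ endpoints and sweep once around the circle, maintaining the total weight of the arcs containing the current point. This takes $O(n\log n)$ time and yields the best weight of a clique with a common point.

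\textbf{Covering case.} Fix a point $\theta_0$ of minimum load, so exactly $d$ arcs contain $\theta_0$ (found by the same sweep). Since the base arcs of $C$ cover the circle, some base arc $b\in C$ contains $\theta_0$, and there are only $d$ candidates for $b$. For each candidate $b$ we compute the best clique containing $b$ as a base arc whose base arcs cover the circle, in time $O(n\log\log n)$; summing gives $O(dn\log\log n)$.

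For a fixed $b$, we do the following.
\begin{enumerate}[label=(\roman*)]
\item Cut the circle at a point of $b$ (say $\theta_0$) and unroll it, so arcs not containing $\theta_0$ become intervals on a line.
\item Argue that a clique $C\ni b$ whose base arcs cover the circle is essentially a chain of base arcs $b=b_1,\dots,b_k$. These are ordered by left endpoint, each overlapping the next, and wrapping back to $b$. The remaining members of $C$ are arcs that contain some base arc and meet all other members.
\item Since all arcs pairwise intersect and the base arcs cover the circle, intersection with the other arcs should reduce to an endpoint condition relative to $b$. Concretely, an arc $a$ must reach from its left endpoint past enough of the circle, and the pairwise condition between two non-$b$ arcs becomes "the union of the two covers the circle, or they share a point."
\item We would try to show this lets the optimum be found by a dynamic program over arcs sorted by endpoint, in which the value for the clique is a sum of weights of arcs satisfying threshold conditions on their endpoints. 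That would reduce it to prefix-maximum queries over integer coordinates in $[1,2n]$.
\item Answer those queries with a van Emde Boas tree (or a similar integer predecessor structure), giving $O(\log\log n)$ per operation and $O(n\log\log n)$ per candidate $b$.
\item The coordinates are normalised to ranks once, in $O(n\log n)$ total, so each candidate works with integers in $[1,2n]$.
\end{enumerate}

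\textbf{Main obstacle.} The hardest step is (iii)–(iv): characterizing exactly which non-base arcs can join a given covering family of base arcs, so that the per-$b$ optimization decomposes into a one-dimensional sweep rather than a matching-like problem. We would first try to prove that, once $b$ is fixed, the clique condition is monotone in the arcs' endpoints. That would let the sweep add every arc meeting the current threshold. Without this, the per-candidate cost could blow up beyond $O(n\log\log n)$.
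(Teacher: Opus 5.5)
Your outer structure matches the paper's. You use the base dichotomy, take the point of maximum weighted load for the Helly case, enumerate the $d$ arcs through a minimum-load point as candidate bases, and multiply by the per-base cost. The gap is the per-base subroutine, which is the whole content of the $O(dn\log\log n)$ term. Steps (iii)--(iv) are only something you ``would try to show'', so the bound is not established.

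The one concrete reformulation you give is also vacuous. Two closed arcs whose union is the whole circle always share a point, since the circle is connected. So ``the union covers the circle, or they share a point'' just means ``they intersect'' and gives no monotone threshold structure. Requiring the per-$b$ clique to have bases covering the circle also makes the subproblem harder than needed. In case (ii) of the dichotomy the minimum-load point already lies in some base of $C^\star$, so the heaviest clique \emph{having $b$ as a base}, with no covering condition, suffices.

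The missing ingredient is simpler than a new dynamic program. If $b$ is a base of $C$, every other arc of $C$ meets $b$ but is not contained in it, so it contains the head or the tail of $b$. A clique of base $b$ therefore lives among the at most $\deg(b)+1$ arcs through one of $b$'s two endpoints. That set is a union of two cliques: a two-point, co-bipartite instance in the style of Apostolico--Hambrusch, not a covering-constrained one.

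The paper does not re-derive this step. It invokes the procedure of Shih and Hsu, which after the global sort computes a heaviest clique having a given arc $a$ as a base in $O(\deg(a)\log\log n)$ time. It runs this on each $a\in\Gamma(\psi)$ and bounds $\sum_{a\in\Gamma(\psi)}\deg(a)\le dn$. To complete your argument, either cite that result or prove an equivalent $O(n\log\log n)$-per-base algorithm for the two-point subproblem.
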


Note that the minimum load satisfies $d \le 2m/n$, where $m$ denotes the number of edges in the implicit intersection graph. Indeed, for any arc $a$, there must be at least $d$ other arcs at any point arbitrarily close to the head of $a$ but not contained in $a$, which implies that $a$ must have at least $d$ neighbors in the intersection graph. Summing over all arcs, there must be at least $dn/2$ edges.
This means that our new algorithm is always at least as good as the previous best maximum weight clique algorithm, which runs in time $O(m\log\log n + n\log n)$~\cite{ShihH89}.

\paragraph{Interval filament graphs.}
Conceptually, one way to view our algorithm for circular arc graphs from \Cref{thm:clique-circ-arc} is that it improves upon the following weakness of previous approaches:
After using the Helly property and a sweep-line approach for the ``interval part'' of the graph, the problem turns into $O(n)$ (structured) instances of bipartite matching, which are solved in sequence.
Our algorithm instead solves all subproblems together (using Monge matrices).
Part of our speedups can thus be explained by the increased flexibility of a fully \emph{offline} approach in contrast to an \emph{online/dynamic} approach.

We apply the same concept to another generalization of interval graphs that do not satisfy the Helly property, namely \emph{interval filament graphs}.
In these graphs, every vertex has an associated interval on the $x$-axis and a 2D curve living in the region above the interval, attached to both endpoints of the interval. Two vertices are adjacent if their curves intersect.
Note that in a clique the corresponding intervals must still share a common point on the $x$-axis; however,
this is not sufficient, the corresponding curves still have to intersect, and each pair can intersect at a different point, hence the Helly property fails.

Interval filament graphs were introduced by Gavril~\cite{Gavril00}, who gave a polynomial-time algorithm for maximum clique, running in time $O(n^4)$.\footnote{Since Gavril's algorithm uses a black-box bipartite matching algorithm as a subroutine, the running time immediately becomes $n^{3+o(1)}$ by plugging in the recent almost-linear time flow algorithm~\cite{ChenKLPGS25}; however, to improve the algorithm beyond cubic time, new ideas seem necessary.}
Other algorithmic problems on interval filament graphs have been studied~\cite{Pergel07,BestW21}, but the clique algorithm has not been improved.
Gavril's algorithm follows the same structure as discussed above:
Each interesting point on the $x$-axis yields one candidate for the maximum clique, which is found by solving a bipartite matching instance.
We show that an offline dynamic bipartite matching algorithm can be used to again exploit that we know all the subproblems upfront, leading to a significantly improved algorithm.
Here, $\omega < 2.372$ is the matrix multiplication exponent~\cite{AlmanDWXXZ25}.

\begin{restatable}{theorem}{thmAlgIntervalFilament}\label{thm:filament-clique}
    Maximum clique in interval filament graphs can be solved in time $O(n^\omega)$ with a randomized algorithm that is correct with high probability.
\end{restatable}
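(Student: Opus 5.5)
We follow Gavril's framework for interval filament graphs and replace the sequence of independent bipartite matching computations with one offline dynamic matching computation. Recall the structure of a clique $C$. Let $p$ be a common point of the intervals of $C$ on the $x$-axis; such a point exists because intervals have the Helly property. Split $C$ into the filaments whose interval has its left endpoint closest to $p$ and the rest; more precisely, let $C$ be contained in the set $S_p$ of filaments whose interval contains $p$. Two filaments $u,v$ in $S_p$ whose intervals overlap in a \emph{crossing} manner (neither interval contains the other) necessarily intersect, because their curves separate each other's endpoints. Hence non-adjacency inside $S_p$ can only occur between nested pairs. Restricted to $S_p$, the complement graph is then a comparability-type relation. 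Following Gavril, we expect the complement of $G[S_p]$ to be bipartite after a suitable split of $S_p$: we split by which side of $p$ the interval's nearer endpoint lies on, or more robustly use the left/right orientation induced by nesting. A maximum clique in $G[S_p]$ is then a maximum independent set in a bipartite graph, which by K\H{o}nig's theorem equals $|S_p|$ minus the size of a maximum matching. This is the step where we rely on Gavril's structural lemma, which is stated in his paper and which we would restate.

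Sweep $p$ over the $2n$ endpoints from left to right. Each step inserts or deletes one filament in $S_p$, together with its incident edges of the bipartite complement graph. Since all insertions and deletions are known in advance, this is an \emph{offline} dynamic maximum matching problem on an $n$-vertex bipartite graph with $O(n)$ vertex updates. The answer we need is $\max_p (|S_p| - \nu_p)$, where $\nu_p$ is the maximum matching size at step $p$. Building the complement graph explicitly costs $O(n^2)$ time, provided pairwise curve intersection can be tested in constant time from the input representation; we assume this, as in Gavril's model.

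For the matching sizes we use the algebraic approach. The maximum matching size of a bipartite graph equals the rank of its Edmonds matrix with random substitutions, taken over a large prime field; this is correct with high probability by Schwartz--Zippel. We therefore need the ranks of $O(n)$ matrices, each obtained from the previous one by activating or deactivating one row or column. The plan is to invoke a known offline dynamic rank result: rank maintenance under row/column updates that are known in advance can be done in total $O(n^\omega)$ time, via divide and conquer over the time axis with Schur complements, in the style of Sankowski's dynamic matrix algorithms and the offline dynamic rank results. To handle deletions, we treat every vertex as present over a time interval, and we encode the inactive vertices by identity padding, so that the rank of the padded matrix equals the rank restricted to active vertices plus a known quantity. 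The main obstacle is making this offline rank maintenance run in $O(n^\omega)$ total rather than $O(n^{\omega}\log n)$ or $n\cdot n^{\omega-1}$ per phase. We would first try a recursive halving of the time interval. At each level we eliminate the vertices whose lifetime covers the whole subinterval, taking a Schur complement, and we recurse on matrices whose size is proportional to the number of events in the subinterval. This gives the recurrence $T(k)=2T(k/2)+O(k^\omega)$, which solves to $O(n^\omega)$ since $\omega>1$. Correctness with high probability follows by a union bound over the $O(n)$ rank queries.
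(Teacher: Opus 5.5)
\textbf{Gap: the reduction to matching fails at the structural step.} The complement of $G[S_p]$ is not bipartite under any split of $S_p$. It is the comparability graph of Gavril's relation $u \prec v$ (the curves of $u$ and $v$ are disjoint and $i_u \subset i_v$), and this relation can have arbitrarily long chains. For instance, take three concentric semicircles $a,b,c$ centred at $(p,0)$. Their intervals are nested and contain $p$, and their curves are pairwise disjoint. So the complement of $G[S_p]$ is a triangle and the maximum clique has size $1$. But any graph on three vertices has matching number at most $1$, so ``$|S_p|$ minus a maximum matching'' returns at least $2$ for whatever split you choose. Splitting by the side of $p$, or by any orientation of the nesting, cannot repair this, and K\H{o}nig's theorem applied to the complement is the wrong tool.

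\textbf{The missing idea and what survives.} You need Gavril's lemma that $\prec$ is transitive, hence a partial order on $S_p$. Then a maximum clique of $G[S_p]$ is a maximum antichain of $\prec$. By Dilworth's theorem in Fulkerson's form, its size is $|S_p| - \nu(B_p)$, where $B_p$ has \emph{two} copies $u^-$ and $u^+$ of every $u \in S_p$ and an edge $u^-v^+$ whenever $u \prec v$. Transitivity is exactly what makes path covers of this order coincide with chain covers. With $\nu_p := \nu(B_p)$, the rest of your plan goes through. Inserting or deleting a filament $u$ becomes one row update ($u^-$) and one column update ($u^+$) of the Edmonds matrix. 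Your diagonal identity padding works because $u \not\prec u$. The vertex-update matching algorithm with predicted (here exactly known) updates of van den Brand, Forster, Nazari and Polak gives $O(n^{\omega-1})$ time per update; this is the result the paper invokes.

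\textbf{Two smaller points.} First, your own Schur-complement recursion needs the eliminated block of long-lived rows and columns to be nonsingular. That fails whenever this part has no perfect matching, so either treat rank deficiency explicitly or simply cite the result above. Second, the problem asks you to output a clique, not only its size. At the maximizing $p^\star$ you still need one maximum matching of $B_{p^\star}$, for example in $n^{2+o(1)}$ time via almost-linear max flow; Hopcroft--Karp's $O(n^{2.5})$ would exceed the budget. A maximum antichain is then read off through the constructive proof of Dilworth's theorem.
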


Even if $\omega = 2$, which is still an open possibility~\cite{AlmanW23}, \Cref{thm:filament-clique} would not get us past the quadratic-time barrier. Is there any hope of breaking that barrier? Our algorithm in \Cref{thm:filament-clique}, like Gavril's~\cite{Gavril00}, abstracts away from the specific geometry of curves and assumes that the input consists of (1) the explicit intersection graph of the curves, and (2) the interval corresponding to the curve of each vertex in the intersection graph.
As long as the intersections (i.e., graph edges) are provided in the input, both algorithms therefore work for arbitrarily complex curves.
In this input model, improving the running time beyond quadratic is unconditionally impossible, because the number of edges may be quadratic.
However, if the curves are simple---say polygonal lines with only a few segments---and given explicitly in the input, while the graph is implicit, one could hope to exploit their geometry to solve the problem faster. Note that such a setting is strictly simpler, since we can always compute the intersection graph ourselves with line segment intersections.
We show that this is also (conditionally) impossible, by proving a quadratic fine-grained lower bound.
Our lower bound is conditional on the Orthogonal Vectors Hypothesis (OVH), a standard hypothesis from fine-grained complexity, which is implied by the Strong Exponential Time Hypothesis (SETH)~\cite{Williams05}. The lower bound is a consequence of the following more general lower bound for co-comparability graphs, which form a subclass of interval filament graphs. Recall that the co-comparability graph of a partially ordered set (poset) $P=(X, \leq)$ is the undirected graph $G=(X,E)$ with $E=\{(u, v) \in X \times X : \neg (u \leq v \lor v \leq u\}$.

\begin{restatable}{theorem}{thmLBCocomp}\label{thm:poset-lower}
    Unless OVH fails, there is no $O(n^{2-\epsilon})$-time algorithm that, given a $k$-dimensional poset $P \subseteq \mathbb{R}^k$ of $n$ points, solves maximum clique in the co-comparability graph of $P$, for any constant $\epsilon > 0$ and dimension $k = c\log n$, for a constant $c$ that depends only on $\epsilon$.
\end{restatable}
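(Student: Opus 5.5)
We reduce Orthogonal Vectors to maximum clique in co-comparability graphs of low-dimensional dominance posets. A clique in the co-comparability graph of $P$ is exactly an antichain of $P$. The plan is to build a poset in which the two input vector sets each form an antichain, and in which a cross pair is comparable exactly when the two vectors are orthogonal.

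\textbf{Setup.} Take an OV instance: sets $A=\{a^1,\dots,a^n\}$ and $B=\{b^1,\dots,b^n\}$ of vectors in $\{0,1\}^D$, with $D=c'\log n$. Under OVH, deciding whether some pair $a\in A$, $b\in B$ satisfies $\langle a,b\rangle=0$ needs $n^{2-o(1)}$ time for a suitable $c'$ depending on $\epsilon$.

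\textbf{Construction.} We create $2n$ points in $\mathbb{R}^k$ with $k=D+5$, ordered by coordinatewise dominance.
\begin{itemize}
\item Coordinates $1,\dots,D$ (the orthogonality gadget): $a^i$ gets $a^i_j$ in coordinate $j$, and $b^{i'}$ gets $1-b^{i'}_j$. Then $a^i_j\le 1-b^{i'}_j$ holds iff not both entries equal $1$. So $a^i$ is dominated by $b^{i'}$ on these coordinates iff the two vectors are orthogonal.
\item Two coordinates making $A$ an antichain: $a^i$ gets $(i,-i)$, and every $b^{i'}$ gets $(n+1,0)$. These values dominate all $A$-values, so they never block $a\le b$.
\item Two coordinates making $B$ an antichain: $b^{i'}$ gets $(i',-i')$, and every $a^i$ gets $(0,-n-1)$. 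These values are dominated by all $B$-values.
\item One final coordinate: $0$ for $A$ and $1$ for $B$. This rules out $b\le a$ for every cross pair.
\end{itemize}

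\textbf{Verification.} The two antichain coordinates $(i,-i)$ make any two distinct $A$-points incomparable, and the same holds within $B$. For a cross pair, $b\le a$ is impossible because of the last coordinate. The relation $a\le b$ holds iff it holds on the first $D$ coordinates, since all other coordinates are arranged to agree with it. Hence $a\le b$ iff $a$ and $b$ are orthogonal.

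It follows that the co-comparability graph consists of two cliques $A$ and $B$, with a cross edge exactly for each non-orthogonal pair. Its maximum clique has size $2n$ iff no orthogonal pair exists: any orthogonal pair is a non-edge, so it cannot lie inside a common clique. Conversely, if no orthogonal pair exists the whole graph is complete.

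\textbf{Conclusion.} The poset has $N=2n$ points in dimension $k=D+5\le c\log N$, and it is built in $O(nD)$ time. An $O(N^{2-\epsilon})$-time algorithm for maximum clique would therefore decide OV in $O(n^{2-\epsilon})$ time, contradicting OVH. The only delicate step is the bookkeeping of the auxiliary coordinates. They must enforce the antichain property within each side and one-directionality across sides, without ever blocking $a\le b$ for an orthogonal pair. Checking the inequalities coordinate by coordinate, as above, settles this.
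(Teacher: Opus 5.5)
Your reduction is correct and follows the same route as the paper: the block $a_j$ versus $1-b_j$ makes $a \le b$ exactly for orthogonal pairs, auxiliary coordinates keep each side an antichain and rule out $b \le a$, so the co-comparability graph is complete if and only if there is no orthogonal pair. The only difference is the auxiliary gadget. The paper uses a complementary block, $1-a_j$ for $A$ and $2+b_j$ for $B$, for dimension $2D$. Your index coordinates $(i,-i)$ add only five dimensions and give within-side incomparability without assuming that the input vectors on each side are distinct.
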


Since every co-comparability graph of a $k$-dimensional poset can be modeled as an intersection graph of interval filaments where each filament is a polygonal curve with $O(k)$ segments, the lower bound follows.

\begin{restatable}{corollary}{corLBIntervalFilament}\label{cor:lb-interval-filament}
    Unless OVH fails, maximum clique in interval filament graphs cannot be solved in time $O(n^{2-\epsilon})$, for any constant $\epsilon > 0$, even when each filament is given explicitly as a polygonal line with $c\log n$ segments, for a constant $c$ that depends only on $\epsilon$.
\end{restatable}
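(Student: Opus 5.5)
We derive the corollary from \Cref{thm:poset-lower} by a reduction. Given a $k$-dimensional poset $P \subseteq \mathbb{R}^k$ of $n$ points, we build in $O(nk\log n)$ time a family of $n$ interval filaments, each a polygonal line with $k+2$ segments. Its intersection graph will be exactly the co-comparability graph of $P$. An $O(n^{2-\epsilon})$ algorithm for filaments would then solve the poset problem in time $O(n^{2-\epsilon} + nk\log n) = O(n^{2-\epsilon'})$ for $k = c\log n$, contradicting \Cref{thm:poset-lower}. The constant $c$ changes only by an additive constant in the number of segments.

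\textbf{Preprocessing.} We first remove duplicate points. Equal points are comparable and pairwise non-adjacent, so a maximum clique uses at most one of them and the clique number is unchanged. Then, for each coordinate $i$, we replace $p_i$ by its rank $r_i(p) \in \{1,\dots,n\}$ in the order sorted lexicographically by $(p_i, \sum_j p_j, \text{index})$. This gives distinct ranks per coordinate with two properties.
\begin{itemize}
\item If $u \le v$ and $u \ne v$, then $\sum_j u_j < \sum_j v_j$, so $r_i(u) < r_i(v)$ for every $i$.
\item If $u,v$ are incomparable, there are $i,j$ with $u_i < v_i$ and $u_j > v_j$ strictly, so $r_i(u) < r_i(v)$ and $r_j(u) > r_j(v)$.
\end{itemize}

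\textbf{Construction.} For each $p$ we take the interval $[-r_1(p),\, k+1+r_k(p)]$ on the $x$-axis. Its filament is the polygonal line through
\[
(-r_1(p),0),\ (1,r_1(p)),\ (2,r_2(p)),\ \dots,\ (k,r_k(p)),\ (k+1+r_k(p),0).
\]
It lies strictly above the axis except at its two endpoints.

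\textbf{Incomparable pairs.} The filaments of $u$ and $v$ restricted to $[1,k]$ are graphs of continuous functions whose order flips between $x=i$ and $x=j$. By the intermediate value theorem they intersect, so $u$ and $v$ are adjacent, as required.

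\textbf{Comparable pairs $u<v$.} On every line $x=i$ we have $r_i(u) < r_i(v)$, so on $[1,k]$ the piecewise linear curve of $u$ lies strictly below that of $v$. The key step is checking that the end segments create no intersection. Set $a=r_1(u) < b=r_1(v)$. The left interval of $u$ is nested inside that of $v$. At $x=-a$, $v$'s segment has positive height while $u$'s has height $0$. At $x=1$ we have heights $b>a$. Both pieces are linear on $[-a,1]$, so $v$ is strictly above $u$ there. The right end is symmetric, using $r_k$. Hence the curves are disjoint and the intersection graph equals the co-comparability graph.

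\textbf{Main obstacle.} The delicate part is ties. Equal coordinates would make comparable points' curves touch, and equal points would give identical curves. The sum-based tie-breaking and deduplication handle exactly this. Everything else is routine bookkeeping of running times.
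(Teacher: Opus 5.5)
Your proof is correct. It uses the same geometric idea as the paper: a polygonal line through the coordinate values on the vertical lines $x=1,\dots,k$, pinned to the axis at both ends. Comparable points then give strictly nested, disjoint curves, while incomparable points give curves whose vertical order flips, so they cross.

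The difference is in the decomposition. The paper does not go from posets to filaments directly. It first turns the poset into a $k$-trapezoid model: ranks, then the rescaling $p_i \mapsto nkp_i + s$ to make every dominance strict, then intervals $[2p_i, 2p_i+1]$. This yields Corollary~\ref{cor:lb-trapezoid}. It then applies a map $\Phi$ that sends an arbitrary $k$-trapezoid to a filament with $2k+1$ segments, with two vertices per line and end pins of slope $\pm 1$. Disjointness of the pins for disjoint trapezoids comes from their parallelism.

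You skip the trapezoid layer, which changes three things:
\begin{itemize}
\item Zero-width intervals give one vertex per line and hence $k+1$ segments. Your count of $k+2$ is the number of vertices, a harmless slip.
\item A single lexicographic sort by value, coordinate sum, and index replaces the rescaling and makes all ranks distinct.
\item A direct comparison of the two linear pieces at the ends of the nested interval replaces the parallel-slope argument.
\end{itemize}

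Your route is shorter and self-contained. The paper's costs an extra step but gives the $k$-trapezoid lower bound along the way. It also gives the more general fact that every $k$-trapezoid graph is an interval filament graph with $O(k)$-segment polygonal filaments.
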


The lower bound further extends to $k$-trapezoid graphs, another superclass of co-comparability graphs (of dimension $k$).
A \emph{$k$-trapezoid} picks $k$ intervals on $k$ parallel lines, and joins consecutive intervals into a polygon across the lines. The intersection graph of a family of $k$-trapezoids is called a \emph{$k$-trapezoid graph}~\cite{Flotow95,FelsnerMW97}.

\begin{restatable}{corollary}{corLBktrapezoid}\label{cor:lb-trapezoid}
    Unless OVH fails, maximum clique in $k$-trapezoid graphs cannot be solved in time $O(n^{2-\epsilon})$, for any constant $\epsilon > 0$ and $k = c\log n$, for a constant $c$ that depends only on $\epsilon$.
\end{restatable}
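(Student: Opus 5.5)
To prove \Cref{cor:lb-trapezoid}, I would reduce from \Cref{thm:poset-lower}: show that the co-comparability graph of a $k$-dimensional poset $P\subseteq\mathbb{R}^k$ of $n$ points is a $k$-trapezoid graph, and that a $k$-trapezoid representation can be computed from $P$ in time $O(nk)$ (after $O(kn\log n)$ sorting). An $O(n^{2-\epsilon})$-time algorithm for maximum clique in $k$-trapezoid graphs with $k=c\log n$ would then give an $O(n^{2-\epsilon}+nk\log n)=O(n^{2-\epsilon})$ algorithm for maximum clique in co-comparability graphs of $k$-dimensional posets with the same $k$, contradicting \Cref{thm:poset-lower} unless OVH fails.

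The construction uses degenerate trapezoids. First I would break ties: on each coordinate $i$, replace the values $p_i$ by distinct ranks so that the dominance order is preserved. If two points share a coordinate value, order them consistently, for instance by their lexicographic order on the full vector. Then I would map each point $p\in P$ to the $k$-trapezoid whose interval on line $i$ is the single point $[p_i,p_i]$. Its region is the polygonal chain through $(p_1,1),(p_2,2),\dots,(p_k,k)$, which I may thicken slightly if nondegenerate intervals are required.

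The key claim is that two such chains are disjoint if and only if one point dominates the other. Take two chains $p,q$ and consider $p_i-q_i$ along the lines. Between consecutive lines both chains are straight segments, so the difference is linear there. If it keeps a strict sign on every line, it keeps that sign on every segment, and the chains are disjoint. That gives $p<q$ coordinatewise, so the points are comparable. Conversely, if the sign changes between two lines, or is zero on some line, the chains cross and the points are incomparable, which is exactly an edge of the co-comparability graph. Thus the intersection graph equals the co-comparability graph.

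The main obstacle is the tie-breaking. Equal coordinates make the chains touch, which breaks the equivalence. To handle this, I would first check whether the hard instances of \Cref{thm:poset-lower} already use distinct coordinates. If they do not, I would apply a rank transformation that preserves comparabilities of the form $p\le q$ with $p\neq q$ and also preserves incomparabilities, breaking ties on coordinate $i$ by a fixed linear extension of the poset so that both relations are kept. Duplicate points would become comparable pairs and need a separate check. With $k=c\log n$ unchanged, the corollary follows.
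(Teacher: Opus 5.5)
Your proposal is correct and follows the paper's proof. Like the paper, you reduce from Theorem~\ref{thm:poset-lower} by realizing each point as a (nearly) degenerate $k$-trapezoid, after a tie-breaking step that turns every dominance into a strict one while preserving incomparabilities; the difference lies only in the tie-break. You use a linear extension such as the lexicographic order, which makes all coordinates distinct, whereas the paper maps $p_i \mapsto nkp_i + s$ (effectively breaking ties by coordinate sum) and uses intervals $[2p_i, 2p_i+1]$, so any ties left between incomparable points just make their trapezoids touch.
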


This result confirms the intuition of Felsner, M\"uller, and Wernisch that the case $k > 2$ is more difficult than $k=2$, for which they provide an $O(n \log n)$-time algorithm~\cite{FelsnerMW97}.

Let us also mention that in the case of both co-comparability graphs and $k$-trapezoid graphs, the maximum clique problem reduces, via a constructive proof of Dilworth's theorem (see, e.g., \cite{HarPeledJ20,FelsnerMW97}), to the bipartite maximum matching problem, which can be solved in time $n^{2+o(1)}$~\cite{ChenKLPGS25}. This means that \Cref{thm:poset-lower} and Corollary~\ref{cor:lb-trapezoid} (unlike Corollary~\ref{cor:lb-interval-filament}) provide lower bounds that are tight even if $\omega > 2$.

See Section \ref{sec:lower-bounds} for proofs of these lower bounds.

\subsection{Technical overview}\label{sec:tech-overview}

\paragraph{The radial sweep of Bhattacharya and Kaller.}

\begin{figure}[t]
    \centering
    \begin{tikzpicture}[
  dot/.style={circle,fill,inner sep=1.25pt},
  hit/.style={line width=1.75pt,line cap=round},
]

\newcommand{\circumference}{\draw[dashed] (0,0) circle (1);}
\newcommand{\arcA}[1]{\draw[#1] (150:1.15) arc[start angle=150,end angle=260,radius=1.15];}
\newcommand{\arcB}[1]{\draw[#1] (200:1.30) arc[start angle=200,end angle=330,radius=1.30];}
\newcommand{\arcC}[1]{\draw[#1] (320:1.45) arc[start angle=320,end angle=430,radius=1.45];}
\newcommand{\arcD}[1]{\draw[#1] (355:1.15) arc[start angle=355,end angle=470,radius=1.15];}
\newcommand{\arcE}[1]{\draw[#1] (40:1.30) arc[start angle=40,end angle=165,radius=1.30];}
\newcommand{\arcF}[1]{\draw[#1] (105:1.45) arc[start angle=105,end angle=235,radius=1.45];}
\newcommand{\allblack}{\arcA{}\arcB{}\arcC{}\arcD{}\arcE{}\arcF{}}
\newcommand{\ptA}[2]{\node[dot] at (#1:1) {}; \node at ({#1+25}:.58) {#2};}
\newcommand{\ptB}[2]{\node[dot] at (#1:1) {}; \node at (#1:.58) {#2};}

\foreach \x/\a/\hits in {
  0/210/{\arcC{hit,blue}\arcD{hit,blue}\arcA{hit,red}\arcB{hit,red}\arcF{hit,red}},
  4/265/{\arcD{hit,blue}\arcE{hit,blue}\arcB{hit,red}},
  8/310/{\arcE{hit,blue}\arcF{hit,blue}\arcB{hit,red}}
}{
  \begin{scope}[shift={(\x,4)}]
    \circumference
    \allblack
    \draw[gray] (\a:1)--({\a+180}:1);
    \hits
    \ptA{\a}{$\theta$}
    \ptA{\a+180}{$\bar\theta$}
  \end{scope}
}

\foreach \x/\a/\hits in {
  0/240/{\arcA{hit,red}\arcB{hit,red}},
  4/325/{\arcB{hit,red}\arcC{hit,red}},
  8/0/{\arcC{hit,red}}
}{
  \begin{scope}[shift={(\x,0)}]
    \circumference
    \allblack
    \draw[gray] (90:1)--(\a:1);
    \arcD{hit,blue}\arcE{hit,blue}
    \arcA{red}\arcB{red}\arcC{red}\arcF{red}
    \hits
    \ptB{90}{$\phi$}
    \ptB{\a}{$\theta$}
  \end{scope}
}
\end{tikzpicture}
    \caption{Comparison of the radial sweep procedures. Top row: sweeping diameter, both points move~\cite{BhattacharyaK97}. Bottom row: sweeping chord, only one point moves (ours).}
    \label{fig:sweep}
\end{figure}

Our algorithm for maximum clique in circular arc graphs (\Cref{thm:clique-circ-arc}) uses a radial sweep method inspired by the work of Bhattacharya and Kaller~\cite{BhattacharyaK97} (see Figure~\ref{fig:sweep}, top row). Let us begin by briefly recalling their approach. They show that, for every clique, there exists a diameter of the circle such that each arc of the clique contains (at least) one of the endpoints of the diameter.\footnote{One can think of it as a relaxation of the Helly property: for every circular arc clique, there are \emph{two} points such that every clique arc contains one of them. Bhattacharya and Kaller show that it is enough to look at pairs of antipodal points. Later we prove that one point can be completely fixed.}
Therefore, a maximum clique can be found by sweeping a point $\theta$ around the circle, and, at each time step, looking for a maximum clique in the union of two sets of arcs: those containing $\theta$ and those containing its antipodal point $\bar\theta := \theta + 180\degree$. Each of these two sets is a clique itself, hence finding a maximum clique in their union is an instance of the maximum clique problem in \emph{co-bipartite} graphs---which is equivalent to the maximum independent set problem in bipartite graphs, which in turn can be solved using a bipartite matching algorithm thanks to K{\H o}nig's theorem. However, in order to get their claimed running time, Bhattacharya and Kaller cannot afford to solve the resulting $O(n)$ instances of bipartite matching independently using an off-the-shelf algorithm. Instead they devise a dynamic algorithm exploiting additional geometric structure. Here we take a different approach.

As we explain further in this section, we model the aforementioned co-bipartite clique problem as a path-finding problem involving points in the 2D plane. At each time step of the sweep, we need to solve a subproblem restricted to a rectangular region of the plane that correctly captures the arcs currently intersecting the sweep points. Here we face an issue: in the original sweep process of Bhattacharya and Kaller, both points $\theta$ and $\bar\theta$ move. This corresponds to all four sides of the rectangle moving. Using our techniques, combined with a baby-step giant-step approach, we could get an algorithm with the total running time $O(n^{1.5} \log^2 n)$. In order to get our claimed near-linear running time, we modify the sweep so that only one point moves (i.e., we sweep a chord instead of a diameter), and hence only two sides (one horizontal and one vertical) of the rectangular region move.

\paragraph{Our radial sweep.}

On the circle, we fix an arbitrary\footnote{All our statements indeed hold for an arbitrary $\phi$, but it is perhaps easier to think of $\phi$ that is not an endpoint of any arc, and since $\phi$ can be chosen arbitrarily, such a choice can be made without loss of generality.} point $\phi$. We say that an arc is \emph{blue} if it contains $\phi$, and that an arc is \emph{red} if it does not contain $\phi$ (see Figure~\ref{fig:sweep}, bottom row). Notice that blue arcs form a clique. Moreover, red arcs form an interval graph---their intersections can be modeled as intersections of real-line intervals---so they have the Helly property, and all the maximal cliques among them can be enumerated with a simple line-sweep. This is what our algorithm does.\footnote{As a side note for the curious and imaginative reader, our radial sweep can be simulated with the Bhattacharya--Kaller sweep with a small preprocessing that consists of stretching the circle at point $\phi$ so as to triple its circumference.} We sweep a point $\theta$ around the circle, from $\phi$ to $\phi$, and, at each time step of the sweep, we look for a maximum clique among all the red arcs containing $\theta$ (which form a clique) and all the blue arcs (which also form a clique). This is still a co-bipartite clique problem, but, as we explain next, we model it geometrically in the 2D plane.

\paragraph{Compatible bichromatic sets in the plane.}

In order to model interactions between red and blue arcs, we map each arc to a point in the 2D plane (Figure~\ref{fig:overview-transformations}). For red arcs, we associate heads with the $x$-axis and tails with the $y$-axis, and for the blue arcs we do the opposite. The $x$ coordinates grow in the clockwise direction, the $y$ coordinates in the anti-clockwise. We color all points according to the colors of their arcs. We say that a point $(x,y)$ \emph{dominates} a point $(x',y')$ iff $x > x'$ and $y > y'$. By definition, a set of arcs that form a clique corresponds to a set of points in which no red point dominates any blue point (see Figures~\ref{fig:arcs-to-points} and~\ref{fig:dominated}). We call such a set of points \emph{compatible}.

\begin{figure}
    \centering
    \includegraphics[width=.95\linewidth]{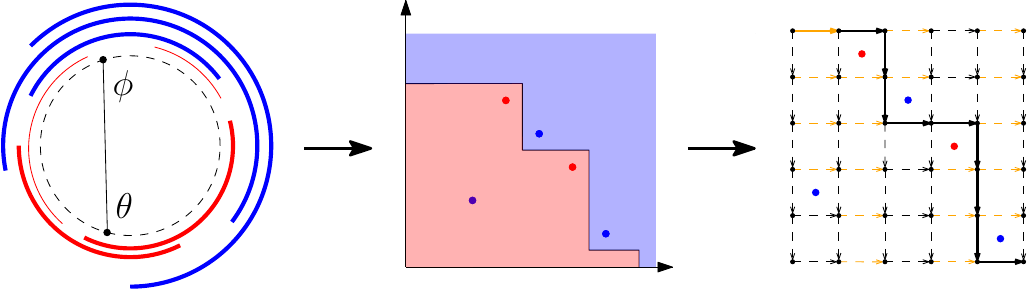}
    \caption{An overview of our approach. After the radial sweep (left), we interpret the problem as a 2D dominance problem, which is equivalent to finding a staircase with a maximum number of red points below and blue points above (middle). We then model the staircase problem as finding a shortest path in a planar grid graph, where edges hold positive weight if traversing the edge means missing out on a point, depicted in orange (right).}
    \label{fig:overview-transformations}
\end{figure}

A compatible set $C$ can also be represented as an axis-aligned, right- and downward monotone path through the plane just above every red point in $C$.
Since no blue point in $C$ is dominated by a red point in $C$, all blue points in $C$ must reside in the region above the path.
We call such a path a \emph{staircase}.
Moreover, every staircase also induces a compatible set by picking all red points below and blue points above it.
Hence, we can equivalently look for a staircase with maximum number of red points below plus blue points above it (see \Cref{fig:overview-transformations}, middle).

An important part of the above reduction to finding maximum compatible sets is that every arc is assigned a point independent of the position of $\theta$ in the radial sweep.
Moreover, to each sweep position $\theta$, we can assign a point $(x_\theta,y_\theta)$ in our coordinate system such that the red arcs containing $\theta$ correspond exactly to the red points with $x$-coordinate less than $x_\theta$ and $y$-coordinate less than $y_\theta$.
The subproblems hence ask about the maximum compatible set (or maximum staircase) in the quadrant $(-\infty, x_\theta] \times (-\infty, y_\theta]$ of the plane.
Summarizing the discussion so far, what remains is to solve the following problem.

\begin{restatable}[Maximum compatible bichromatic set quadrant queries]{problem}{probQuadrant}\label{prob:quadrant}
We are given two sets of \emph{red} and \emph{blue} points in the plane $R, B \subseteq \mathbb{R}^2$, with $n$ points in total, and $n$ query points $(X_1, Y_1), \ldots, (X_n, Y_n) \in \mathbb{R}^2$. For each query $i \in [n]$, output the size of a maximum compatible subset of the points lying in the quadrant $(-\infty, X_i] \times (-\infty, Y_i]$.
\end{restatable}

In order to solve \Cref{prob:quadrant} in near-linear time, we preprocess the set of all points offline to obtain a data structure that allows us to answer the quadrant queries in just logarithmic time.
To this end, we perform yet another interesting transformation: from the view of compatible sets or staircases in quadrants of the plane to shortest paths in a grid graph.

\paragraph{The grid graph.}

One way to go about finding a staircase that induces a maximum compatible bichromatic set of points is to model the problem as a graph problem:
Vertices correspond to relevant cells in the 2D plane, and edges point from a cell to the neighboring cells to the right and below.
A staircase in a lower-left quadrant of the plane then corresponds to a path from the top-left cell to the bottom-right cell in the region.
By setting positive edge weights for horizontal edges with red points above or blue points below, we make sure that the weight of a path corresponds to the number of points excluded from a solution.
This means that a maximum compatible set corresponds to a shortest path in the \emph{grid graph} just described (see \Cref{fig:overview-transformations}, right).

At first glance, this construction has an obvious downside:
The grid graph has $\Theta(n^2)$ vertices and edges, let alone that we have to find shortest paths between $n$ distinct pairs of vertices to answer all queries. How should this help to achieve a near-linear time algorithm?
This is where \emph{Monge matrices} enter the stage.
These are structured matrices that, among other things, capture the structure of shortest paths between vertices on the outer face of a planar graph~\cite{FakcharoenpholR06}. Our grid graph is clearly planar and our desired shortest path queries have their endpoints on its outer face.
Furthermore, they allow for a sparse representation of the distance matrix, in our case only of size $O(n)$, the original number of points in the plane we considered.
Related graph constructions for string alignment problems are also known as \emph{alignment graphs}~\cite{benson1995space,schmidt1998all,landau2001common,Tiskin06,russo2012monge,GawrychowskiGK25,GorbachevK25,boneh2025bounded}.
All that remains is to efficiently compute the representation of the shortest paths we are interested in. We achieve this by a divide-and-conquer approach along the $x$-axis as well using known fast min-plus product algorithms for Monge matrices with sparse representations~\cite{GawrychowskiGK25}.

\subsection{Open problems}

\paragraph{Maximum weight clique in circular arc graphs.}
Apostolico and Hambrusch~\cite{ApostolicoH87}
posed as an open problem the weighted version of the maximum clique in circular arc graphs, where each arc carries a weight and we seek a clique of maximum total weight. Shih and Hsu~\cite{ShihH89} settled it with an $O(n \log n + m \log\log n)$-time algorithm, which was later matched by Bhattacharya and Kaller~\cite{BhattacharyaK97} with a different technique, and still remains the fastest known.
In principle, our main algorithm extends to the weighted version of the problem, but the efficiency guarantees do not. We discuss in Section \ref{sec:weighted} that the distance matrices remain Monge, but their sparse representations are no longer guaranteed to be small. Proving any subquadratic bound on their size, and hence on the time needed to build them, turns out to be nontrivial: We formally argue in \Cref{thm:lis-barrier} that it would already give a faster data structure for the weighted range query version of the longest increasing subsequence (LIS) problem, solving a notorious open question, which remains open because of similar sparsity-related obstacles~\cite{GawrychowskiGK25}.
A first question is to understand whether our algorithm actually degrades to quadratic running time when handling weighted instances, or whether the inefficiency is only in the analysis. In any case, it would be interesting to understand whether maximum weight clique in circular arc graphs can be solved in subquadratic time.

\paragraph{Maximum clique in interval filament graphs.}
For maximum clique in interval filament graphs, our upper bound (\Cref{thm:filament-clique}) and lower bound (Corollary~\ref{cor:lb-interval-filament}) do not match, unless $\omega = 2$. It is interesting whether a quadratic-time algorithm exists even if $\omega > 2$. A related (and perhaps easier) problem is to get any truly subcubic running time without resorting to algebraic techniques for fast matrix multiplication.

\paragraph{Shaving the log.}
Our algorithm for circular arc graphs works in time $O(n \log^2 n)$. Could it be improved down to $O(n \log n)$? We do not see any inherent hardness of the problem preventing a better algorithm. In particular, we would not be very surprised if some more clever divide-and-conquer approach could be used to compute the same Monge matrix that our algorithm computes but in time $O(n \log n)$, which would be sufficient for the entire algorithm to run in time $O(n \log n)$. Having said that, we do not have any immediate idea how to achieve such a speedup. On the other hand, $\Omega(n \log n)$ seems to be a lower bound, at least for real-valued inputs in the comparison-addition model, where this time is already required to decide if $n$ given numbers are all distinct, which easily reduces to finding a clique of size $2$ in a circular arc graph.

\subsection{Organization of the paper}
In Section~\ref{sec:prelim} we provide basic definitions and notational conventions.
In Sections \ref{sec:structure}--\ref{sec:building} we present subsequent building blocks of our main algorithm for maximum clique in circular arc graphs. We put them together in Section~\ref{sec:wrap-up} and prove \Cref{thm:clique-circ-arc}.
In Section~\ref{sec:weighted} we discuss obstacles in generalizing our approach to the weighted version of the problem.
In Section~\ref{sec:parameterized} we give our second algorithm, parameterized by the minimum load, proving \Cref{thm:parameterized}. Then, we move on to interval filament graphs: in Section~\ref{sec:filaments} we present our algorithm (\Cref{thm:filament-clique}), and in Section~\ref{sec:lower-bounds} we prove the lower bounds (Theorem~\ref{thm:poset-lower} and Corollaries~\ref{cor:lb-interval-filament} and~\ref{cor:lb-trapezoid}).

\section{Preliminaries}
\label{sec:prelim}

Let $[n] = \{1, \dots, n\}$. 
We say that a randomized algorithm with input size $n$ is correct \emph{with high probability} if it returns a correct answer with probability at least $1 - 1/n^c$ for an arbitrarily large prespecified constant $c$.

Since our algorithms can always output the arcs/filaments in a maximum clique, we define the problems in their reporting version.

\paragraph{Circular arc graphs.}
A \emph{circular arc model} is a set $\mathcal{S}$ of $n$ arcs on a circle. Given an arc $a$, we define the \emph{head} of $a$ as the endpoint where the arc begins when traversing it clockwise, and the \emph{tail} of $a$ as the endpoint where it ends. They are contained in the arc.
As is common in the literature on geometric intersection graphs, we assume without loss of generality that no two arcs share a common endpoint.
Two arcs intersect if they contain a common point.
A \emph{clique} in a circular arc model $\mathcal{S}$ is a subset $\mathcal{C} \subseteq \mathcal{S}$ such that every pair of arcs in $\mathcal{C}$ intersect.

\begin{problem}[Maximum clique in circular arc graphs]
    Given a circular arc model $\mathcal{S}$, output a maximum size clique in $\mathcal{S}$.
\end{problem}

\paragraph{Interval filament graphs.}
An \emph{interval filament} $f$ is a continuous curve with endpoints $(\ell_f, 0)$ and $(r_f, 0)$ contained entirely inside the region $[\ell_f, r_f] \times [0, \infty) \subseteq \mathbb{R}^2$ in the plane.
We say that $[\ell_f, r_f]$ is the \emph{spanning interval} for the filament $f$.
We can again assume without loss of generality that no two interval filaments share a common endpoint.

To allow for arbitrary curves, algorithms usually do not receive the set of $n$ curves $\mathcal{F} = \{f_1, \dots, f_n\}$ in the input. 
Instead, we assume the \emph{intersection graph} of $\mathcal{F}$ is given to us directly, namely a graph $G$ on $n$ vertices $[n]$, where $i$ and $j$ are adjacent if the curves $f_i$ and $f_j$ intersect.
To allow for the exploitation of Helly-like properties, we also assume that the spanning interval $[\ell_{f_i}, r_{f_i}]$ is known for each vertex $i \in [n]$.
This input model is standard for interval filament graphs~\cite{Gavril00,BestW21}.

\begin{problem}[Maximum clique in interval filament graphs]\label{prob:interval-filament}
    Given an intersection graph $G$ of a set of interval filaments and the spanning interval of the filament for each vertex, output a maximum size clique $C$ in $G$.
\end{problem}

Note that our lower bound in Corollary~\ref{cor:lb-interval-filament} applies to an easier variant of \Cref{prob:interval-filament} where the curves are restricted to polygonal lines and explicitly provided in the input. This would also allow the algorithm to compute the intersection graph itself, in time depending on the number of line segments.

\section{Structure of cliques in circular arc graphs}
\label{sec:structure}

We begin the description of our algorithm for maximum clique in circular arc graphs (\Cref{thm:clique-circ-arc}) by proving a structural result that guides our radial sweep approach.

\begin{lemma}
Let $\mathcal{S}$ be a set of circular arcs. Fix an arbitrary point $\phi$ on the circle. Partition the arcs into \emph{blue} arcs that contain $\phi$ and \emph{red} arcs that do not contain $\phi$, $\mathcal{S} = \mathcal{B} \cup \mathcal{R}$. For any red arc $r \in \mathcal{R}$, let $\mathcal{R}_r \subseteq \mathcal{R}$ denote the set of all red arcs that contain the head of $r$. For every clique $\mathcal{C} \subseteq \mathcal{S}$, either $\mathcal{C} \subseteq \mathcal{B}$, or there exists a red arc $r \in \mathcal{R}$ such that $\mathcal{C} \subseteq \mathcal{B} \cup \mathcal{R}_r$.
\end{lemma}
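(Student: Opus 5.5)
The plan is a direct argument: cut the circle at $\phi$, so that the red arcs become ordinary intervals, and then use the interval-graph Helly argument on the red part of the clique. If $\mathcal{C} \subseteq \mathcal{B}$ there is nothing to prove. Otherwise $\mathcal{C} \cap \mathcal{R} \neq \emptyset$. Since $\mathcal{C} \cap \mathcal{B} \subseteq \mathcal{B}$ trivially, it suffices to find a red arc $r$ with $\mathcal{C} \cap \mathcal{R} \subseteq \mathcal{R}_r$.

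First, I will parametrize the circle minus $\phi$ as an open real interval, with the order given by traversing clockwise starting from $\phi$. Every red arc avoids $\phi$, so it is mapped to a closed interval $[h_a, t_a]$, where $h_a$ is the image of its head and $t_a$ the image of its tail. This uses the fact that a red arc, read clockwise from its head, reaches its tail without passing $\phi$. Two red arcs intersect on the circle if and only if these intervals intersect, because any common point differs from $\phi$.

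Next, I will choose $r \in \mathcal{C} \cap \mathcal{R}$ to be the red clique arc whose head comes last in this order, i.e.\ with $h_r$ maximal. For any $c \in \mathcal{C} \cap \mathcal{R}$ we have $h_c \le h_r$. Also $c$ and $r$ intersect, so $t_c \ge h_r$; otherwise $t_c < h_r \le t_r$ and the intervals would be disjoint. Hence $h_r \in [h_c, t_c]$, so $c$ contains the head of $r$, i.e.\ $c \in \mathcal{R}_r$. This also covers $c = r$, since arcs contain their own heads. It follows that $\mathcal{C} \subseteq \mathcal{B} \cup \mathcal{R}_r$.

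The only delicate step is the first one: the claim that pairwise intersection of red arcs on the circle matches intersection of the corresponding line intervals, and that the head precedes the tail in the cut order. Both follow from the fact that red arcs do not contain $\phi$. If $\phi$ happens to be an endpoint of some arc, that arc contains $\phi$ and is therefore blue, so the argument is unaffected.
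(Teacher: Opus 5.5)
Your proposal is correct and follows essentially the same route as the paper's proof: pick the red clique arc $r$ whose head is last in clockwise order from $\phi$, then use the fact that red arcs avoid $\phi$ to conclude that every other red clique arc has its tail after the head of $r$, and hence contains it. Your explicit cut at $\phi$ into line intervals just spells out the step the paper justifies in a parenthetical remark.
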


\begin{proof}
Let $\mathcal{C} \subseteq \mathcal{S}$ be a set of arcs that form a clique. Suppose $\mathcal{C} \not\subset \mathcal{B}$, i.e., the clique contains at least one red arc. Consider heads of the red arcs of the clique $\mathcal{R} \cap \mathcal{C}$ sorted in the clockwise order starting from $\phi$, and let $r$ denote the arc whose head is the last one in this order. We argue that $\mathcal{R} \cap \mathcal{C} \subseteq \mathcal{R}_r$, which implies $\mathcal{C} \subseteq \mathcal{B} \cup \mathcal{R}_r$, as wanted. Indeed, each arc $r' \in \mathcal{R} \cap \mathcal{C}$ intersect $r$, and, since the head of $r'$ lies before the head of $r$, it must be that the tail of $r'$ lies after the head of $r$ (with the last statement being true because neither $r$ nor $r'$ contains $\phi$). Hence, $r'$ contains the head of $r$, and therefore $r' \in \mathcal{R}_r$, as wanted.
\end{proof}

This structural result allows us to reduce finding a maximum clique in the whole input arc set into $|\mathcal{R}| \leq n$ subproblems of finding a maximum clique in the union of two cliques $\mathcal{B} \cup \mathcal{R}_r$.

\begin{corollary}
\[\max \{ |\mathcal{C}| : \mathcal{C} \subseteq \mathcal{S},\ \mathcal{C}\text{ is a clique} \} = \max\bigl\{|\mathcal{B}|, \max_{r \in \mathcal{R}} \max \{ |\mathcal{C}| : \mathcal{C} \subseteq \mathcal{B} \cup \mathcal{R}_r,\ \mathcal{C}\text{ is a clique} \}\bigr\}.\]
\end{corollary}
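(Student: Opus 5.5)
The plan is to prove the equality by two inequalities, with the preceding lemma doing essentially all the work.

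For the inequality ``$\ge$'', note that every candidate on the right-hand side is itself a clique in $\mathcal{S}$. The set $\mathcal{B}$ is a clique because all blue arcs contain $\phi$, so any two of them share the point $\phi$. For each $r \in \mathcal{R}$, the inner maximum ranges only over cliques $\mathcal{C} \subseteq \mathcal{B} \cup \mathcal{R}_r \subseteq \mathcal{S}$. Hence each candidate value is at most the left-hand side. The one subtlety is that the inner maximum over $r$ might be over an empty set of indices when $\mathcal{R} = \emptyset$. In that case we read the expression as just $|\mathcal{B}|$, and this matches the left-hand side because then $\mathcal{S} = \mathcal{B}$ is a clique.

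For the inequality ``$\le$'', take a maximum clique $\mathcal{C} \subseteq \mathcal{S}$ and apply the preceding lemma to it. There are two cases.
\begin{itemize}
\item If $\mathcal{C} \subseteq \mathcal{B}$, then $|\mathcal{C}| \le |\mathcal{B}|$.
\item Otherwise, the lemma gives some $r \in \mathcal{R}$ with $\mathcal{C} \subseteq \mathcal{B} \cup \mathcal{R}_r$. Then $\mathcal{C}$ is a feasible set in the inner maximum for this $r$, so $|\mathcal{C}|$ is bounded by that inner maximum.
\end{itemize}
In both cases $|\mathcal{C}|$ is at most the right-hand side, which gives the inequality.

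There is no real obstacle here. The only point needing care is the degenerate case $\mathcal{R}=\emptyset$ (equivalently, the convention for an empty maximum), and it is handled by the first term $|\mathcal{B}|$ as described above.
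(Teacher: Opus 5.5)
Your proposal is correct and matches the paper's argument: the paper states the corollary as an immediate consequence of the preceding lemma, which gives ``$\le$'', while ``$\ge$'' holds because $\mathcal{B}$ (whose arcs all contain $\phi$) and every clique inside $\mathcal{B}\cup\mathcal{R}_r$ are cliques of $\mathcal{S}$. Your treatment of the empty inner maximum when $\mathcal{R}=\emptyset$ is a harmless extra detail that the paper leaves implicit.
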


Later it becomes useful that (1) the first clique $\mathcal{B}$ is the same in all subproblems, and (2) the second clique $\mathcal{R}_r$ is a subset of $\mathcal{R}$ defined by a rectangular region in the 2D space where each arc is represented by a point whose coordinates correspond to that arc's head and tail.

\section{Reduction to quadrant queries}

In this section we reduce finding a maximum clique to $n$ queries to a single set of points in the plane. By the previous section, it suffices to compute, for the head $\theta$ of every red arc, a maximum clique among the blue arcs and the red arcs containing $\theta$, and to compare the largest of these cliques with the clique of all blue arcs. We turn each of these subproblems into one query about a dominance problem on points in the plane, which we now define.

\paragraph{The maximum compatible bichromatic set problem.}
\label{sec:mcbs}

Consider two sets of \emph{red} and \emph{blue} points in the plane $R, B \subseteq \mathbb{R}^2$. We say that a red point $r = (x_r,y_r) \in R$ \emph{dominates} a blue point $b = (x_b,y_b) \in B$ if $r$ lies strictly above and to the right of $b$, that is, $x_r > x_b$ and $y_r > y_b$. A subset $C \subseteq R \cup B$ is \emph{compatible} if no red point in $C$ dominates any blue point in $C$. The \emph{maximum compatible bichromatic set} problem asks, given the sets $R$ and $B$, for a compatible subset of $R \cup B$ of maximum size.

\begin{figure}[t!]
    \centering
    \begin{minipage}[c]{0.3\textwidth}
        \centering
        \includegraphics[page=1, scale=0.65]{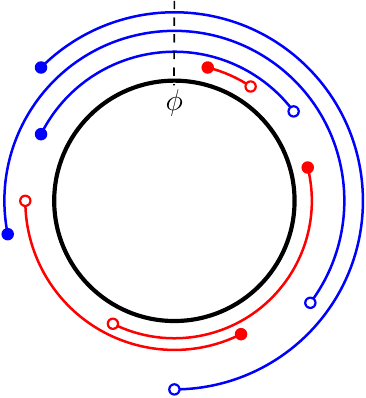}
    \end{minipage}\hfill
    \begin{minipage}[c]{0.345\textwidth}
        \centering
\begin{tikzpicture}[line width=1pt, scale=1.25, black!65]
 \node at (-.025,.2) {$\rotatebox{90}{\Leftscissors}$};
  \draw ([shift={(0,-0.28)}]108:0.28) arc[start angle=108, end angle=432, radius=0.28];
  \draw (0.5864,0) arc[start angle=90, end angle=-150, radius=0.28];
  \draw (1.1729,0) arc[start angle=90, end angle=-30, radius=0.28];
  \draw (0,0) -- (1.7593,0);
  \draw[->, line width=0.5pt, dash pattern=on 1pt off 1pt]
    plot[smooth] coordinates {(-0.086,-0.014) (0.086,-0.194) (0.344,-0.420)
      (0.880,-0.530) (1.415,-0.420) (1.600,-0.280) (1.73, -.1)};
\end{tikzpicture}
        \vskip .8cm
        \includegraphics[page=2, scale=0.65]{circ-arc.pdf}
    \end{minipage}\hfill
    \begin{minipage}[c]{0.345\textwidth}
        \centering
        \includegraphics[page=3, scale=0.65]{circ-arc.pdf}
    \end{minipage}
    \caption{Our mapping from arcs to points in the plane. Left: a circular arc model, with the dashed line marking the cut point $\phi$. Center: the same model cut open at $\phi$ and normalized. Right: the point set $S$, with one point per arc.}
    \label{fig:arcs-to-points}
\end{figure}

\paragraph{Cutting the circle open.}
\label{sec:normalized}

We fix a point $\phi$ of the circle; we call $\phi$ the \emph{cut point}. This partitions the set of arcs into the \emph{blue arcs} $\mathcal{B}$ containing $\phi$, and the red arcs $\mathcal{R}$ not containing $\phi$. It is useful to imagine ``cutting'' the circle at $\phi$ and ``unrolling'' it into a line. A red arc does not contain $\phi$, so it becomes a segment of the line, with its head to the left of its tail. A blue arc contains $\phi$, so it becomes two pieces, one at each end of the line, with its tail to the left of its head.

We further \emph{normalize} the model, reassigning the $2n$ arc endpoints to the positions $0, 1, \ldots, 2n - 1$ of the line, in their original order. This takes $O(n \log n)$ time, the time to sort the endpoints, and the intersection graph does not change, as it depends only on the order of the endpoints. We identify each arc with the pair $(h, t)$ of its head and tail positions, so that $h < t$ for a red arc and $t < h$ for a blue one.

\paragraph{From arcs to points.}
\label{sec:global}
To every arc we assign one point of the same color:
\[
    (h, t) \in \mathcal{R} \;\longmapsto\; \text{red point at } (h,\, 2n - t),
    \qquad
    (h, t) \in \mathcal{B} \;\longmapsto\; \text{blue point at } (t,\, 2n - h).
\]
In both cases the $x$-coordinate of the point is the left endpoint of the arc, and the $y$-coordinate is its right endpoint, measured from the right end of the line. The red points form the set $R$, the blue points the set $B$, and the point set $S = R \cup B$ has exactly one point per arc and is built in $O(n)$ time. The whole construction is illustrated in Figure~\ref{fig:arcs-to-points}.

\begin{fact}
\label{fact:per-theta}
A red arc and a blue arc intersect if and only if the point of the red arc does not dominate the point of the blue arc (see Figure~\ref{fig:dominated}).
\end{fact}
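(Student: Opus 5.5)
We argue directly from the normalized representation: both arcs become subsets of the cut-open line $\{0,\dots,2n-1\}$, and we characterize non-intersection combinatorially. Let the red arc be $(h_r,t_r)$ with $h_r<t_r$. On the line it occupies the single segment $[h_r,t_r]$. Let the blue arc be $(h_b,t_b)$ with $t_b<h_b$. Since it contains $\phi$, on the line it occupies the two pieces $[0,t_b]$ and $[h_b,2n-1]$. Its complement on the line is therefore the open gap $(t_b,h_b)$.

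The two arcs fail to intersect exactly when the red segment lies inside this gap. Because all $2n$ endpoints are distinct, this condition reads
\[
t_b < h_r \quad\text{and}\quad t_r < h_b .
\]
Indeed, if $h_r\le t_b$, then $h_r$ itself lies in $[0,t_b]$. If $t_r\ge h_b$, then $t_r$ lies in $[h_b,2n-1]$. Conversely, if both strict inequalities hold, then $[h_r,t_r]\subseteq(t_b,h_b)$, which is disjoint from the blue arc.

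It remains to translate this into dominance. The red point is $(h_r,\,2n-t_r)$ and the blue point is $(t_b,\,2n-h_b)$. Red dominates blue iff $h_r>t_b$ and $2n-t_r>2n-h_b$, that is, iff $h_r>t_b$ and $t_r<h_b$. This is exactly the non-intersection condition. Negating both sides gives the fact: the arcs intersect iff the red point does not dominate the blue point.

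The only delicate step is the first one, describing a blue arc on the cut line as the complement of an open gap. This requires checking that both of its pieces include their endpoints, which holds because heads and tails belong to the arcs. We also need the distinctness of endpoints to move freely between strict and non-strict inequalities. Everything else is direct substitution.
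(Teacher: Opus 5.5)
Your proposal is correct and follows essentially the same route as the paper's proof: both describe the blue arc on the cut-open line as $[0,t_b]\cup[h_b,2n-1]$, characterize disjointness as $t_b<h_r$ and $t_r<h_b$, and read these two inequalities off as the two coordinates of dominance. The only difference is that you spell out why that disjointness criterion holds (an endpoint of the red arc falls into a blue piece otherwise), which the paper states without justification; distinctness of endpoints is not actually needed for that step.
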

\begin{proof}
On the line, the red arc is the segment $[h_r, t_r]$ and the blue arc is the union of $[0, t_b]$ and $[h_b, 2n - 1]$. The two arcs are disjoint exactly when the segment falls strictly between the two pieces, that is, when $t_b < h_r$ and $t_r < h_b$. The first inequality states that the red point lies strictly to the right of the blue one, and the second, read as $2n - t_r > 2n - h_b$, that it lies strictly above. Disjointness is therefore exactly the dominance of the red point over the blue one.
\end{proof}

\begin{figure}[t!]
    \centering
\begin{minipage}[c]{0.5\textwidth}
    \centering
    \begin{tikzpicture}[
      scale=0.4,
      rh/.style={circle, fill=red, inner sep=1.5pt},
      rt/.style={circle, draw=red, fill=white, inner sep=1.1pt, line width=0.7pt},
      bh/.style={circle, fill=blue, inner sep=1.5pt},
      bt/.style={circle, draw=blue, fill=white, inner sep=1.1pt, line width=0.7pt},
      arc/.style={line width=0.9pt},
      bold/.style={line width=1.5pt}
    ]
      \draw (0,0) -- (13,0);
      \draw[arc, red] (0,1) -- (1,1);
      \draw[arc, red] (3,1) -- (7,1);
      \draw[bold, red] (5,2) -- (10,2);
      \draw[bold, blue] (0,2) -- (2,2);   \draw[bold,blue] (11,2) -- (13,2);
      \draw[arc, blue] (0,3) -- (4,3);   \draw[arc, blue] (9,3) -- (13,3);
      \draw[arc, blue] (0,4) -- (6,4);   \draw[arc, blue] (12,4) -- (13,4);
      \node[rh] at (0,1) {};  \node[rt] at (1,1) {};
      \node[rh] at (3,1) {};  \node[rt] at (7,1) {};
      \node[rh] at (5,2) {};  \node[rt] at (10,2) {};
      \node[bt] at (2,2) {};  \node[bh] at (11,2) {};
      \node[bt] at (4,3) {};  \node[bh] at (9,3) {};
      \node[bt] at (6,4) {};  \node[bh] at (12,4) {};
      \node[left] at (-0.2,4) {\scriptsize $b_1$};
      \node[left] at (-0.2,3) {\scriptsize $b_2$};
      \node[left] at (-0.2,2) {\scriptsize $\boldsymbol{b_3}$};
      \node[left] at (-0.2,1) {\scriptsize $r_1$};
      \node[right] at (7.2,1) {\scriptsize $r_2$};
      \node[left] at (4.8,2) {\scriptsize $\boldsymbol{r_3}$};
    \end{tikzpicture}
\end{minipage}\hfill
\begin{minipage}[c]{0.5\textwidth}
    \centering
    \begin{tikzpicture}[
      scale=0.35,
      rp/.style={circle, fill=red, inner sep=1.4pt},
      bp/.style={circle, fill=blue, inner sep=1.4pt},
    ]
      \fill[red!10] (-0.7,-0.7) rectangle (5,3);
      \draw[dashed, line width=0.5pt] (5,-0.7) -- (5,3) -- (-0.7,3);
      \draw[->] (-0.7,-0.7) -- (8.2,-0.7);
      \draw[->] (-0.7,-0.7) -- (-0.7,12.2);
      \node[rp] at (0,11) {}; \node[right] at (0.2,11) {\scriptsize $r_1$};
      \node[rp] at (3,5) {};  \node[right] at (3.2,5.3) {\scriptsize $r_2$};
      \node[rp,inner sep=2pt] at (5,3) {};  \node[right] at (5.2,3.4) {\scriptsize $\boldsymbol{r_3}$};
      \node[bp] at (4,4) {};  \node[right] at (4.2,4.4) {\scriptsize $b_2$};
      \node[bp,inner sep=2pt] at (2,2) {};  \node[right] at (2.2,2.3) {\scriptsize $\boldsymbol{b_3}$};
      \node[bp] at (6,1) {};  \node[right] at (6.2,1.3) {\scriptsize $b_1$};
    \end{tikzpicture}
\end{minipage}
    \caption{A red point dominates a blue point iff their corresponding arcs do not intersect. In the left, the red arc $r_3$ intersects every blue arc except $b_3$; in the right, the region dominated by the point of $r_3$ contains only the point of $b_3$.}
    \label{fig:dominated}
\end{figure}

Fix a candidate $\theta$, the head of a red arc. The subproblem at $\theta$ asks for a maximum clique among the red arcs containing $\theta$ and all the blue arcs. The red arcs containing $\theta$ pairwise intersect at $\theta$, and the blue arcs pairwise intersect at the cut point, so it is only the intersections between red and blue arcs that determine whether a given subset of arcs forms a clique. Moreover, if a blue arc contains $\theta$, it intersects every arc under consideration, so it belongs to every maximum clique of the subproblem. Let $f_\theta$ be the number of these blue arcs. The rest of such a clique is a maximum clique among the red arcs containing $\theta$ and the blue arcs not containing $\theta$. The next lemma shows that these remaining arcs are exactly the arcs whose associated points lie in a quadrant of the plane.

\begin{lemma}
\label{lem:visibility}
Let $\theta$ be the head of a red arc. The points of $S$ in the quadrant $(-\infty, \theta] \times (-\infty, 2n - \theta]$ are exactly the points of the red arcs containing $\theta$ and of the blue arcs not containing $\theta$, and a set of these arcs is a clique if and only if their points form a compatible set.
\end{lemma}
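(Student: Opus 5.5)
The plan is to verify the two claims separately: first the membership characterization by a direct coordinate check for each color, then the clique characterization using Fact~\ref{fact:per-theta}.

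\emph{Red points.} A red arc $(h,t)$ with $h<t$ is the segment $[h,t]$ of the unrolled line, and it is mapped to the point $(h, 2n-t)$. This point lies in the quadrant exactly when $h \le \theta$ and $2n - t \le 2n - \theta$, that is, when $h \le \theta \le t$. Since $\theta$ lies strictly inside the line (it is the head of a red arc, so $\phi$ is not $\theta$), containing $\theta$ on the circle is the same as containing it on the unrolled line. So the red points in the quadrant are precisely those of red arcs containing $\theta$.

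\emph{Blue points.} A blue arc $(h,t)$ with $t<h$ covers $[0,t]\cup[h,2n-1]$ and is mapped to $(t, 2n-h)$. Its point lies in the quadrant iff $t \le \theta$ and $h \ge \theta$. Because endpoints are distinct and $\theta$ is the head of some red arc, $\theta \neq t$ and $\theta \neq h$, so this means $t < \theta < h$. That is exactly the condition that $\theta$ falls in the gap between the two pieces, i.e., the blue arc does not contain $\theta$. The only point needing care here is that strict and non-strict inequalities agree, which is guaranteed by the general-position (distinct endpoints) assumption.

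\emph{Cliques versus compatible sets.} Take a set of such arcs. Any two red arcs among them contain $\theta$ and hence intersect. Any two blue arcs contain $\phi$ and hence intersect. So the set is a clique iff every red–blue pair intersects. By Fact~\ref{fact:per-theta}, a red–blue pair intersects iff the red point does not dominate the blue point. Hence the set is a clique iff no red point dominates a blue point, which is the definition of a compatible set. I do not expect a real obstacle; the only delicate step is the boundary/strictness bookkeeping in the blue case.
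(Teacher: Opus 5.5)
Your proof is correct and follows the paper's argument essentially step by step. You use the same coordinate check for red and blue points, with the distinct-endpoint assumption turning the non-strict quadrant inequalities into the strict blue condition $t < \theta < h$. You then make the same appeal to Fact~\ref{fact:per-theta}, after noting that red arcs meet at $\theta$ and blue arcs at the cut point.
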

\begin{proof}
A red arc contains $\theta$ exactly when $h \le \theta \le t$, and its point lies in the quadrant exactly when $h \le \theta$ and $2n - t \le 2n - \theta$, the same condition. A blue arc does not contain $\theta$ exactly when $t < \theta < h$, and its point lies in the quadrant exactly when $t \le \theta$ and $h \ge \theta$, again the same condition, since $\theta$ is an endpoint of no blue arc and the inequalities are strict. This proves the first part.

For the second part, the red arcs of the quadrant pairwise intersect at $\theta$, the blue arcs pairwise intersect at the cut point, and by Fact~\ref{fact:per-theta} a red and a blue arc intersect exactly when the point of the red arc does not dominate the point of the blue one. A set of these arcs is therefore a clique exactly when its points form a compatible set.
\end{proof}

A maximum clique of the subproblem at $\theta$ has therefore size $\mu_\theta + f_\theta$, where $\mu_\theta$ is the size of a maximum compatible subset of the points in the quadrant.

Note also that the points of $S$ have pairwise distinct $x$- and $y$-coordinates. The $x$-coordinates are the heads of the red arcs and the tails of the blue arcs, the $y$-coordinates are the complements to $2n$ of the tails of the red arcs and of the heads of the blue arcs, and the endpoint positions are pairwise distinct.

\paragraph{Quadrant queries.}
\label{sec:wmcbs}

Every candidate thus asks for a maximum compatible subset of a quadrant of the same point set, and the whole reduction rests on the following problem.

\probQuadrant*

An efficient solution gives an efficient algorithm for the maximum clique. We build $S$ once and, for the head $\theta$ of every red arc, we query the quadrant with $X = \theta$ and $Y = 2n - \theta$ and add the count $f_\theta$. The largest of these sums $\mu_\theta + f_\theta$, compared with the number of blue arcs, which form a clique themselves, is the size of a maximum clique. The next section solves the problem with $O(n \log^2 n)$ preprocessing time and $O(\log n)$ query time. The complete algorithm, including the computation of the counts $f_\theta$ and the recovery of an actual clique, is made precise in Section~\ref{sec:wrap-up}.

\section{Answering maximum compatible bichromatic set queries}
\label{sec:solving}

In this section we solve Problem~\ref{prob:quadrant}. Given the sets $R$ and $B$, we preprocess them so that, for any two reals $X$ and $Y$, the size of a maximum compatible subset of the points lying in the quadrant $(-\infty, X] \times (-\infty, Y]$ can be returned quickly. We write $S = R \cup B$ and $n = |S|$. Throughout, we assume that the points of $S$ have pairwise distinct $x$- and $y$-coordinates. The instance produced in Section~\ref{sec:global} satisfies the assumption by construction, and the next lemma shows that any instance can be brought to this form.

\begin{lemma}
\label{lem:distinct}
In $O(n \log n)$ time, the coordinates of the points of an instance can be reassigned so that they become pairwise distinct on each axis and the compatible sets are unchanged.
\end{lemma}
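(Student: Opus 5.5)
We will reassign coordinates independently on each axis by sorting with a tie-breaking rule chosen so that every strict inequality that matters is preserved. Compatibility depends only on the strict inequalities $x_r > x_b$ and $y_r > y_b$ for red $r$ and blue $b$. So it suffices to find new $x$-coordinates $x'$ such that: (i) all $x'$ are distinct; and (ii) for every red $r$ and blue $b$, $x'_r > x'_b$ holds iff $x_r > x_b$. The same must hold for the $y$-axis. We then replace each point by its rank on each axis.

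For the $x$-axis, we sort all points by $x$-coordinate and break ties so that, among points with equal $x$, every red point comes before every blue point. Ties within one color are broken arbitrarily, say by index. We then set $x'$ to be the position in this sorted order, a value in $\{0,\dots,n-1\}$. The same is done on the $y$-axis, again with red before blue among ties. This costs two sorts, so $O(n\log n)$ time.

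To check (ii), take red $r$ and blue $b$. If $x_r > x_b$, then $r$ comes after $b$ in the sorted order, so $x'_r > x'_b$. If $x_r < x_b$, the order is reversed, so $x'_r < x'_b$. If $x_r = x_b$, the tie-break puts $r$ before $b$, so $x'_r < x'_b$, which agrees with $x_r \not> x_b$. Thus $x_r > x_b \iff x'_r > x'_b$. The identical argument on $y$ shows that $r$ dominates $b$ before the change iff it does afterwards, so the compatible sets are unchanged. Within one color any order is harmless, since dominance is only tested between a red and a blue point.

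The only delicate point is the direction of the tie-break: red must come before blue so that ties never create a new dominance. A secondary remark is that quadrant queries are also affected by the relabeling. This lemma only claims preservation of the compatible sets, but if needed we would map a query $X$ to a threshold rank (via binary search, or by including the query values in the sort with an appropriate tie-break so that points with coordinate equal to $X$ stay inside the closed quadrant). That costs no more than $O(n\log n)$ in total.
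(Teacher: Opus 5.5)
Your proposal is correct and matches the paper's proof essentially step for step. Both sort each axis with red points placed before blue points among ties, assign ranks as the new coordinates, and check that $x_r > x_b \iff x'_r > x'_b$ (and likewise for $y$) for every red--blue pair, so dominance and hence the compatible sets are unchanged, all within the $O(n\log n)$ cost of two sorts.
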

\begin{proof}
A red point dominates a blue point when both of its coordinates are strictly larger, so which sets are compatible depends only on how each red and blue pair compares on the two axes, and not on the coordinate values themselves. Sort the points by $x$-coordinate, placing each red point before every blue point with the same $x$-coordinate and ordering points of the same color arbitrarily, and assign to the $k$-th point of this order the new $x$-coordinate $k$. Reassign the $y$-coordinates in the same way. The new coordinates are pairwise distinct on each axis, and every strict comparison between old coordinates holds between the new ones as well. Consider a red and a blue point tied on an axis. The pair was free of dominance, and after the reassignment its red point has the smaller coordinate on that axis, so the pair stays free of dominance. Every red and blue pair therefore dominates after the reassignment exactly when it dominated before, and the compatible sets are unchanged. The cost is that of the two sorts.
\end{proof}

The \emph{shadow} of a point $q = (x_q, y_q)$ is the open lower-left quadrant $\{(x, y) : x < x_q,\, y < y_q\}$. The shadow of a set is the union of the shadows of its members. A \emph{downward staircase} is a non-increasing step function $\Sigma : \mathbb{R} \to \mathbb{R}$, where $\Sigma(x)$ denotes its $y$-value at $x$-coordinate $x$. The \emph{score} of a staircase $\Sigma$ is the number of blue points on or above $\Sigma$ plus the number of red points on or below $\Sigma$.

\begin{lemma}
\label{lem:staircase}
The size of a maximum compatible bichromatic set equals the maximum score over all downward staircases.
\end{lemma}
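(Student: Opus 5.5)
We prove the two inequalities separately, using the assumption that coordinates are pairwise distinct on each axis (justified by Lemma~\ref{lem:distinct}).

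\textbf{Every staircase induces a compatible set.} Given a downward staircase $\Sigma$, let $C_\Sigma$ consist of the blue points on or above $\Sigma$ and the red points on or below $\Sigma$. Then $|C_\Sigma|$ is at least the score of $\Sigma$. It is exactly the score unless a point lies on $\Sigma$, since a point on $\Sigma$ is counted only once in $C_\Sigma$ but may be counted in both terms. The inequality suffices, though, and a point on the staircase can only help.

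To see that $C_\Sigma$ is compatible, suppose a red point $r \in C_\Sigma$ dominates a blue point $b \in C_\Sigma$. Since $\Sigma$ is non-increasing and $x_b < x_r$, we get
\[
y_r \le \Sigma(x_r) \le \Sigma(x_b) \le y_b,
\]
which contradicts $y_r > y_b$. Hence the maximum compatible set is at least the maximum score.

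\textbf{Every compatible set is captured by a staircase.} Conversely, let $C$ be compatible, with red part $C_R$ and blue part $C_B$. Define
\[
\Sigma(x) = \max\{y_r : r \in C_R,\ x_r \ge x\},
\]
and set $\Sigma(x) = -\infty$, or any sufficiently small value, when no such $r$ exists. This is a non-increasing step function, namely the upper boundary of the union of the closed lower-left quadrants of the points of $C_R$.

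Every red $r \in C_R$ satisfies $\Sigma(x_r) \ge y_r$, so it lies on or below $\Sigma$. For blue $b \in C_B$, suppose $b$ lies strictly below $\Sigma$. Then some red $r \in C_R$ has $x_r \ge x_b$ and $y_r > y_b$. Distinct $x$-coordinates give $x_r > x_b$, so $r$ dominates $b$, a contradiction. Thus every point of $C$ is counted in the score of $\Sigma$, and the score is at least $|C|$.

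\textbf{Main obstacle.} The delicate part is the boundary and tie handling. We need distinct coordinates to turn $x_r \ge x_b$ into a strict inequality. We also need the definition of ``score'' to accommodate points lying on $\Sigma$. If a step function must be real-valued, we replace $-\infty$ by a value below all points. With that, the argument closes.
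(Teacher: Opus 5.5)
Your proof is correct and takes the same route as the paper. The set scored by a staircase is compatible via the chain $y_r \le \Sigma(x_r) \le \Sigma(x_b) \le y_b$. Conversely, the upper-right boundary of the shadow of the red part of $C$ scores every point of $C$; you make it explicit as $\Sigma(x)=\max\{y_r : r\in C_R,\ x_r\ge x\}$ and use distinct $x$-coordinates exactly where they are needed.

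One small slip: each point has a single colour, so no point is ever counted in both terms of the score. Hence the score equals $|C_\Sigma|$ exactly. Double counting, were it possible, would make the score exceed $|C_\Sigma|$, which is the opposite of the inequality you state.
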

\begin{proof}
Let $C$ be a compatible set with red part $R_C$. Since no red point in $C$ dominates any blue one in $C$, no blue point of $C$ lies in the shadow of $R_C$. The upper-right boundary of this shadow is a downward staircase. Every red point of $C$ lies on or below it and every blue point of $C$ lies on or above it, so its score is at least $|C|$.

Conversely, for any staircase $\Sigma$, no red point on or below $\Sigma$ can dominate any blue point on or above $\Sigma$. Dominance requires both $x_r > x_b$ and $y_r > y_b$. If $x_r > x_b$, then by monotonicity $y_r \le \Sigma(x_r) \le \Sigma(x_b) \le y_b$, so $y_r \le y_b$ and dominance fails. Hence the set of points scored by $\Sigma$ is compatible.
\end{proof}

\subsection{The grid graph}
\label{subsec:graph}

We model the computation of a staircase of maximum score as a shortest path problem in a directed acyclic graph. A path in the graph describes the movements of a staircase, right and down, and its length grows by one every time the staircase fails to score a point. A shortest path then fails as rarely as possible, and by Lemma~\ref{lem:staircase} its length reveals the size of a maximum compatible set. We now introduce the graph.

The construction can be carried out on top of any sets of red and blue points with distinct coordinates, and Section~\ref{sec:building} will build it on parts of the input.

\paragraph{Grid rows and grid columns.}
The points of $S$ define $n$ vertical and $n$ horizontal lines, each passing through exactly one point of $S$, since no two points share a coordinate. The vertical lines cut the plane into $n+1$ open vertical strips, the \emph{grid columns}, and the horizontal lines cut it into $n+1$ open horizontal strips, the \emph{grid rows}. A \emph{cell} is the intersection of a grid column and a grid row, so there are $(n+1) \times (n+1)$ of them. We number the grid columns $0, 1, \ldots, n$ from left to right and the grid rows $0, 1, \ldots, n$ from bottom to top, and the matrices we define are indexed by these numbers.

\paragraph{The grid graph.}
The \emph{grid graph} $G$ has a vertex for every cell, a \emph{downward} edge of weight $0$ from every cell to the cell below it, and a \emph{rightward} edge from every cell to the cell to its right. A rightward edge in grid row $i$ crosses the vertical line of a single point $q$, and its weight is $0$ if $q$ is red and lies below grid row $i$ or blue and lies above grid row $i$, and $1$ otherwise. In other words, crossing the line of $q$ costs nothing exactly when a staircase crossing it inside grid row $i$ scores $q$. Figure~\ref{fig:grid-query} (left) illustrates the construction.

\begin{figure}
    \centering
\begin{tikzpicture}[
  scale=0.95,
  rpt/.style={circle,fill=red,inner sep=1.8pt},
  bpt/.style={circle,fill=blue,inner sep=1.8pt},
  cell/.style={circle,fill=gray!60,inner sep=0.5pt},
  wone/.style={orange,line width=1pt,line cap=round},
  spath/.style={line width=1.4pt,line cap=round,line join=round},
]

\newcommand{\gridpanel}{
  \foreach \i in {1,...,5} {
    \draw[gray!35] (\i,0) -- (\i,6);
    \draw[gray!35] (0,\i) -- (6,\i);
  }
  \draw[gray!60] (0,0) rectangle (6,6);
  \foreach \x/\y in {1/4, 3/2, 4/5}
    \foreach \r in {1,...,\y} \draw[wone] (\x-0.35,\r-0.5) -- (\x+0.35,\r-0.5);
  \foreach \x/\y in {2/3, 5/1}
    \foreach \r in {\y,...,5} \draw[wone] (\x-0.35,\r+0.5) -- (\x+0.35,\r+0.5);
  \foreach \c in {0,...,5} \foreach \r in {0,...,5}
    \node[cell] at (\c+0.5,\r+0.5) {};
  \node[rpt] at (1,4) {}; \node[rpt] at (3,2) {}; \node[rpt] at (4,5) {};
  \node[bpt] at (2,3) {}; \node[bpt] at (5,1) {};
}

\begin{scope}
  \gridpanel
  \draw[spath] (0.5,5.5) -- (0.5,4.5) -- (1.5,4.5) -- (1.5,2.5)
    -- (3.5,2.5) -- (3.5,0.5) -- (5.5,0.5);
  \draw[wone] (3.65,0.5) -- (4.35,0.5); 
  \node[left] at (0,5.5) {$u$};
  \node[below] at (5.5,0) {$v$};
  \draw (4,5) circle (0.26);
  \node at (3,-0.9) {$M[5,5] = 1$};
\end{scope}

\begin{scope}[shift={(8,0)}]
  \fill[blue!7] (0,0) rectangle (3.8,3.65);
  \gridpanel
  \draw[dashed] (3.8,0) -- (3.8,3.65) -- (0,3.65);
  \node[above right, inner sep=1pt] at (3.8,3.7) {\small $(X,Y)$};
  \draw[spath] (0.5,3.5) -- (1.5,3.5) -- (1.5,2.5) -- (3.5,2.5) -- (3.5,0.5);
  \draw[wone] (0.65,3.5) -- (1.35,3.5); 
  \node[left] at (0,3.5) {$u$};
  \node[below] at (3.5,0) {$v$};
  \draw (1,4) circle (0.26);
  \node[above] at (1,4.25) {\small $F$};
  \node at (3,-0.9) {$\mu = |Q| - M[3,3] + |F| = 2 - 1 + 1$};
\end{scope}

\end{tikzpicture}
    \caption{The grid graph of a small point set. Left: the weight-$1$ edges, in orange, lie below each red point and above each blue point. The bold path realizes a staircase of maximum score, and its weight $M[5,5] = 1$ pays for the circled point it misses. Right: the query at $(X, Y)$, with $u$ in the grid row of $Y$ and $v$ in the grid column of $X$. The circled point is a false penalty: it lies outside the quadrant, yet every path from $u$ to $v$ crosses its line.}
    \label{fig:grid-query}
\end{figure}

The distances we need are collected in the \emph{prefix matrix} $M$, the $(n+1) \times (n+1)$ matrix whose entry $M[i, j]$ is the distance in $G$ from the cell in grid column $0$ and grid row $i$ to the cell in grid column $j$ and grid row $0$, that is, from the left side of the grid to its bottom side. Every entry is finite, since moving right along grid row $i$ and then descending reaches the target.

The following lemma, proved in Section~\ref{sec:building}, gives compact access to the prefix matrix, with the mapping of the query coordinates to the grid handled inside the structure. To this end we define the grid row and the grid column of a real. When no point of $S$ has $y$-coordinate $y$, the \emph{grid row of} $y$ is simply the grid row containing it. When instead a point has $y$-coordinate exactly $y$, the grid row of $y$ is the grid row just above that point. The \emph{grid column of} a real $x$ is defined in the same way: the grid column containing $x$, or the grid column just to the right of the point whose $x$-coordinate is $x$. The side is chosen with the queries in mind: a point lying exactly at a query value belongs to the closed quadrant, and it must remain possible to score it.

\begin{restatable}{lemma}{efficientshortestpaths}
\label{lem:efficient-shortest-paths}
Let $R$ and $B$ be sets of red and blue points with pairwise distinct $x$- and $y$-coordinates, $n$ points in total. There is a data structure, built in $O(n \log^2 n)$ time, that, given two reals $x$ and $y$, returns in $O(\log n)$ time the entry $M[i, j]$ of the prefix matrix, where $i$ is the grid row of $y$ and $j$ is the grid column of $x$.
\end{restatable}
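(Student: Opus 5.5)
We plan to show that the prefix matrix $M$ is a unit-Monge-type matrix with a sparse core of size $O(n)$, and to compute that core by divide and conquer over the grid columns.

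\textbf{Structure of $M$.} Every entry of $M$ is a distance in a planar grid graph from the left side to the bottom side, with both endpoints on the outer face. The standard non-crossing path argument for such distances (as in \cite{FakcharoenpholR06}) then gives the Monge inequality, for appropriate orientation of indices. Two further facts hold:
\begin{itemize}
\item Consecutive entries differ by at most one along rows and columns. Extending the target by one column costs at most one weight-$1$ crossing, and descending costs $0$.
\item The boundary values are explicit. For example, $M[0,j]$ counts the points that are penalized along grid row $0$.
\end{itemize}
These facts make $M$ simple unit-Monge, up to a boundary correction. It is therefore determined by a permutation-like core with $O(n)$ nonzeros, namely the positions where the second difference $M[i,j]-M[i+1,j]-M[i,j-1]+M[i+1,j-1]$ is nonzero. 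An entry $M[i,j]$ equals a boundary term plus the number of core points dominated by, or dominating, $(i,j)$. Once the core is known, we build a persistent or static 2D dominance counting structure (a wavelet tree or merge-sort tree with fractional cascading). We also binary-search $x$ and $y$ among the sorted coordinates to find $j$ and $i$. Each query then costs $O(\log n)$.

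\textbf{Computing the core.} We split the columns, that is the points sorted by $x$, into a left half and a right half. A path from the left side to the bottom side either ends in the left half, or crosses the vertical line between the halves in some grid row $k$. In the second case it is a left-to-right path through the left strip followed by a left-to-bottom path in the right strip. This gives
\[ M[i,j] = \min_k \bigl(A[i,k] + M_{\mathrm{right}}[k,j]\bigr), \]
where $A$ is the left-side-to-right-side distance matrix of the left strip.

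So recursion should produce, for each vertical strip, the full boundary-to-boundary distance matrix. The boundary consists of left, right and bottom sides; the top side is irrelevant because edges only go down. Each such matrix is again unit-Monge with $O(\text{number of points in strip})$ core nonzeros. A strip's grid rows that contain no point lines of the strip behave trivially, so we compress them: they only shift indices, and the core stays of size linear in the strip's point count rather than in $n$.

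Combining two strips requires min-plus products of unit-Monge matrices given by their cores. The Tiskin / Gawrychowski--Gorbachev--Kociumaka algorithm \cite{GawrychowskiGK25} does this in $O(m\log m)$ time for cores of size $m$. Summed over the $\log n$ levels of the recursion, the total is $O(n\log^2 n)$.

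\textbf{Main obstacle.} The hardest part is making the sparse-core claim precise for these rectangular left/right/bottom distance matrices. Three issues need care:
\begin{itemize}
\item Handling the three sides as one concatenated boundary index while preserving the (unit-)Monge property.
\item Representing uncompressed rows, which contain points of other strips, via index remapping, so that cores stay linear in the strip size.
\item Checking that the sparse products apply, which may require padding the matrices with boundary entries that are infinite or linear.
\end{itemize}
We would first try to express each strip matrix as a seaweed/permutation matrix in Tiskin's framework, treating each point's column as a single "seaweed" crossing. Verifying the product then reduces to known results.
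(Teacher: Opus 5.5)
Your architecture is the paper's: divide and conquer on grid columns, an across (left-to-right) and a prefix (left-to-bottom) matrix per strip, and $A=A_L\oplus A_R$ together with $M[i,j]=\min_k A_L[i,k]+M_{\mathrm{right}}[k,j]$ on the right columns. You then use the core-sparse product of \cite{GawrychowskiGK25} for $T(n)=2T(n/2)+O(n\log n)$, and answer queries with dominance counting on the core, which is equivalent to Lemma~\ref{lem:gk-oracle}. Your observation that $M$ has $0/1$ increments along rows and columns is correct, so its density matrix is a partial permutation; this is a clean alternative to Fact~\ref{fact:core-entries}. However, the two items you defer as the ``main obstacle'' are where the proof actually lies, and as written you state them incorrectly rather than merely leaving them open.

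\textbf{The across matrix.} With only downward and rightward edges, $A[i,k]=+\infty$ whenever $k>i$. So $A$ is not unit-Monge, has no condensed representation, and Theorem~\ref{thm:ggk-product} cannot be applied to it. Your claim that every strip matrix is unit-Monge with a linear core therefore fails exactly for the factor you need to multiply.

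The missing idea is the augmentation by upward edges of weight $1$. This makes $\hat N$ finite and Monge with integer entries in $[0,2n]$, hence $\delta(\hat N)\le 4n$. The augmentation must then be shown to create no shortcuts, which is Lemma~\ref{lem:no-upward}: clamp the crossing rows of a shortest path to $\max(r_v,\min(r_u,\rho_1,\ldots,\rho_k))$, and charge each new penalty to a distinct upward edge through the same point's horizontal line. This shows that no distance from a source not below its target changes. Hence $M$ is unaffected, $\hat N=\hat N_L\oplus\hat N_R$ holds, and $\hat N_L\oplus M_R$ agrees with $M$ on the columns from $t$ on. Padding ``with infinite or linear entries'' does not suffice without this argument.

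\textbf{Re-indexing to the parent's rows.} Mapping a child's matrices onto the parent's finer grid rows is not ``only shifting indices'' for the across matrix. For $M$ it is pure repetition, because its shortest paths never climb. For $\hat N$, each horizontal line of the sibling that splits a child row into rows $r-1$ and $r$ adds exactly $1$ on the quadrant $\{(i,j): i<r\le j\}$. In your unaugmented graph it would instead create new $+\infty$ entries inside the repeated blocks.

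The paper proves this and realizes each increment on the condensed representation as a suffix increment of the first row plus one core element of density $1$ at $(r-1,r-1)$. Refinement then costs $O(n')$ and keeps both dimensions and cores linear in the node size. Without these two steps, your recursion does not return matrices to which the sparse product can be applied.
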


\subsection{Answering the queries}
\label{sec:queries}

We can now state and prove the main result of the section.

\begin{lemma}
\label{lem:all-ell}
Let $R$ and $B$ be sets of red and blue points with pairwise distinct $x$- and $y$-coordinates, $n$ points in total. After a preprocessing step taking $O(n \log^2 n)$ time, given any reals $X$ and $Y$, the size of a maximum compatible subset of the points lying in $(-\infty, X] \times (-\infty, Y]$ can be returned in $O(\log n)$ time.
\end{lemma}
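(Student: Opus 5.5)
The plan is to reduce each query to one lookup of the prefix matrix from Lemma~\ref{lem:efficient-shortest-paths}, plus two dominance-counting corrections. Fix a query $(X,Y)$. Let $i$ be the grid row of $Y$ and $j$ the grid column of $X$. Let $Q$ be the set of points of $S$ in $(-\infty,X]\times(-\infty,Y]$, and let $F$ be the set of red points with $x\le X$ and $y>Y$ (the ``false penalties''). I will show
\[
  \mu = |Q| - M[i,j] + |F|,
\]
where $\mu$ is the size of a maximum compatible subset of $Q$.

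\textbf{Step 1: which vertical lines a path crosses, and at what cost.} Every path from $u$ (column $0$, row $i$) to $v$ (column $j$, row $0$) is monotone right and down. It therefore stays in columns $0,\dots,j$ and rows $0,\dots,i$, and crosses each vertical line strictly between column $0$ and column $j$ exactly once. By the choice of the grid column of $X$, these are exactly the lines of points with $x\le X$.
\begin{itemize}
\item A point with $x>X$ is never crossed, so it contributes nothing.
\item A point with $x\le X$ and $y>Y$ lies above every row the path uses, because the grid row of $Y$ was chosen just above a point at height exactly $Y$. If it is red, the crossing always costs $1$; if it is blue, it always costs $0$. Hence $F$ contributes exactly $|F|$ to every path.
\item The remaining crossed lines belong to the points of $Q$. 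Such a crossing costs $0$ exactly when the staircase traced by the path scores that point.
\end{itemize}
Consequently $M[i,j] = |F| + \min_P(\text{points of } Q \text{ missed by } P)$, where the minimum is over paths $P$ from $u$ to $v$.

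\textbf{Step 2: paths versus staircases.} I need that the minimum number of points of $Q$ missed by a path equals $|Q|$ minus the maximum staircase score restricted to $Q$. Lemma~\ref{lem:staircase}, applied to $Q$ alone, then identifies that maximum score with $\mu$.
\begin{itemize}
\item Every path induces a staircase with the same set of scored points of $Q$.
\item Conversely, take any downward staircase $\Sigma$ and clamp it to $\min(\Sigma,Y)$ on $x\le X$; its part at $x>X$ is irrelevant. Clamping only lowers $\Sigma$. Red points of $Q$ have $y\le Y$, so a red point that was on or below $\Sigma$ stays on or below the clamped staircase. Blue points of $Q$ that were on or above $\Sigma$ stay on or above, since the staircase only went down. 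So clamping does not decrease the score on $Q$.
\item The clamped staircase can be perturbed so that it runs through cell interiors, since coordinates are distinct. It is then traced by a path from $u$ to $v$ that scores the same points of $Q$.
\end{itemize}
This is the step I expect to be the main obstacle. The care is in the boundary conventions: points lying exactly on $x=X$ or $y=Y$ must be treated as scorable, and that is precisely why the grid row and grid column of a real were defined with the ``just above'' and ``just to the right'' rule.

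\textbf{Step 3: answering the query.} During preprocessing I build the structure of Lemma~\ref{lem:efficient-shortest-paths} in $O(n\log^2 n)$ time. I also build a static orthogonal range-counting structure in $O(n\log n)$ time, for instance a persistent segment tree over the points sorted by $x$. It answers in $O(\log n)$ time both $|Q|$ (all points in the quadrant) and $|F|$ (red points with $x\le X$, minus red points with $x\le X$ and $y\le Y$). A query then costs one prefix-matrix lookup plus two counts, $O(\log n)$ in total, and returns $|Q|-M[i,j]+|F|$.
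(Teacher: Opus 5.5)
Your proposal is correct and follows the paper's proof essentially step by step: the same identity $\mu = |Q| - M[i,j] + |F|$ with the same false-penalty set $F$, the same two-sided comparison between paths from $u$ to $v$ and staircases on $Q$ via Lemma~\ref{lem:staircase}, and the same preprocessing (the structure of Lemma~\ref{lem:efficient-shortest-paths} plus range counting). The only variation is in the staircase-to-path direction: you clamp and perturb an arbitrary staircase, whereas the paper builds the path explicitly from a maximum compatible set $C$ by crossing each line just above the highest point of $R_C$ on or to the right of it. Your perturbation is valid (raise at red points lying on the staircase, lower at blue ones, which is consistent because distinct coordinates make any such red point to the right of such a blue point strictly lower), but you only assert it, and the paper's explicit rule is what makes this step fully rigorous.
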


\begin{proof}
The preprocessing builds the data structure of Lemma~\ref{lem:efficient-shortest-paths} on the point set $R \cup B$ in $O(n \log^2 n)$ time, as well as two instances of a standard static two-dimensional orthogonal range counting data structure, one on $R$ and one on $B$. The latter data structure is built in $O(n \log n)$ time and, when given an axis-aligned rectangle as a query, takes $O(\log n)$ time to compute the number of points inside this rectangle~\cite{Willard85,Chazelle88}.

Fix a query, given by two reals $X$ and $Y$, let $i$ be the grid row of $Y$ and $j$ the grid column of $X$, let $Q$ be the set of points lying in the quadrant $(-\infty, X] \times (-\infty, Y]$, and let $\mu$ be the size of a maximum compatible subset of $Q$, the value to return. By Lemma~\ref{lem:staircase}, applied to the points of $Q$ alone, $\mu$ equals the largest score that a downward staircase attains on $Q$, that is $|Q|$ minus the smallest number of points of $Q$ that a staircase fails to score. We now express this last quantity through the prefix matrix, following the staircase along the grid.

Consider a path of $G$ from the cell $u$ in grid column $0$ and grid row $i$ to the cell $v$ in grid column $j$ and grid row $0$. Moving from grid column $0$ to grid column $j$, the path crosses exactly the vertical lines to the left of $v$, and by the definition of the grid column of $X$ these are the lines of the points with $x$-coordinate at most $X$. All the other points can be ignored. Consider now a point with $x$-coordinate at most $X$ but $y$-coordinate larger than $Y$. It lies outside $Q$, so it should not count, yet the path crosses its line. Such a point lies above grid row $i$, and the path, which starts in grid row $i$ and never rises, stays below it when crossing its line. When the point is blue this crossing has weight $0$ and is harmless. When it is red, the crossing has weight $1$, a \emph{false penalty} that every path from $u$ to $v$ incurs (Figure~\ref{fig:grid-query}, right). Collecting the false penalties in the set
\[
    F \;=\; \{r \in R : x_r \le X,\, y_r > Y\},
\]
the weight of every path from $u$ to $v$ is $|F|$ plus the number of points of $Q$ that it fails to score.

It remains to observe that the paths from $u$ to $v$ realize precisely the staircases of $Q$. In one direction, consider any downward staircase running through the open cells visited by the path. Whenever the path crosses the line of a red point above it, or the line of a blue point below it, the staircase scores that point. By Lemma~\ref{lem:staircase} applied to $Q$, the staircase scores at most $\mu$ points of $Q$, so the path fails on at least $|Q| - \mu$ of them. In the other direction, let $C$ be a maximum compatible subset of $Q$ and let $R_C$ be its red part. As in the proof of Lemma~\ref{lem:staircase}, the boundary of the shadow of $R_C$ separates $C$, and we build a path from $u$ to $v$ that follows it. Consider one of the crossed vertical lines, and let $r$ be the highest point of $R_C$ lying on that line or to its right. We cross the line in the grid row just above $r$, and in grid row $0$ when there is no such point. In this way the path stays above every point of $R_C$ whose line is still to be crossed, and drops as low as possible otherwise. The grid rows of the crossings are non-increasing, and they start at most at grid row $i$ because the points of $R_C$ have $y$-coordinate at most $Y$, so they define a path from $u$ to $v$. The path scores every red point of $C$, whose line it crosses just above it or higher. It also scores every blue point $b$ of $C$, since by compatibility every point of $R_C$ to the right of $b$ lies strictly below it, so the grid row of the crossing at $b$ lies below $b$. The path therefore fails on at most $|Q| - |C| = |Q| - \mu$ points of $Q$.

The smallest weight of a path from $u$ to $v$ is thus exactly $|F| + |Q| - \mu$, and this weight is the entry $M[i, j]$ of the prefix matrix. Rearranging,
\[
    \mu \;=\; |Q| \;-\; M[i, j] \;+\; |F|.
\]
The entry is returned by the data structure of Lemma~\ref{lem:efficient-shortest-paths}, and $|Q|$ and $|F|$ are counts of points in axis-aligned ranges, reported by the range counting structure, all in $O(\log n)$ time.
\end{proof}

In our algorithm the structure is built once on the point set $S$ of Section~\ref{sec:global}, and one query is asked for each of the $O(n)$ queried heads $\theta$, so the preprocessing and all the queries together take $O(n \log^2 n)$ time.

\section{Building the matrices}
\label{sec:building}

In this section we prove Lemma~\ref{lem:efficient-shortest-paths}, which we first restate.

\efficientshortestpaths*

We write $S = R \cup B$. Recall that the entry $M[i, j]$ of the prefix matrix is the distance in the grid graph from the cell in grid column $0$ and grid row $i$ to the cell in grid column $j$ and grid row $0$. We build the matrix by divide and conquer on the grid columns. The point set is split by a vertical line into a left and a right part, the matrices of the two parts are computed recursively, and the matrix of the whole set is assembled from them.

The assembly rests on a second matrix of distances. The \emph{across matrix} $N$ is the $(n+1) \times (n+1)$ matrix whose entry $N[i, j]$ is the distance in $G$ from the cell in grid column $0$ and grid row $i$ to the cell in grid column $n$ and grid row $j$, from the left side of the grid to its right side. A path that ends in the right part of the split enters it through the dividing line, so its left portion is measured by an across matrix of the left part, and the recursion computes the two matrices together. Since every edge of $G$ keeps or lowers the grid row, $N[i, j] = +\infty$ exactly when $i < j$. This forces one adjustment. The tools we build on store and multiply finite Monge matrices, and the infinite entries admit no compact representation, so we augment the graph with upward edges, which makes every entry finite while leaving the distances we need unchanged, and we work with the augmented graph throughout the construction. The across matrix is also of independent interest: at the end of the section we expose it through a data structure analogous to the one of Lemma~\ref{lem:efficient-shortest-paths}, which Section~\ref{sec:weighted} uses when discussing the weighted version of the problem.

\subsection{Core-sparse Monge matrices}
\label{sec:core-sparse}

A matrix $A$ is \emph{Monge} if $A[i, j] + A[i+1, j+1] \le A[i, j+1] + A[i+1, j]$ for all valid $i$ and $j$, equivalently $A[a, c] + A[b, d] \le A[a, d] + A[b, c]$ for all $a < b$ and $c < d$. Our construction stores and multiplies Monge matrices through the core-sparse machinery of~\cite{GawrychowskiGK25}. We recall the notions it rests on. The \emph{density matrix} of a matrix $A$ has entries $A^\square[i, j] = A[i, j+1] + A[i+1, j] - A[i, j] - A[i+1, j+1]$, and $A$ is Monge precisely when $A^\square$ is everywhere non-negative. The \emph{core} of $A$ is the set of positions where $A^\square$ is non-zero, the \emph{core size} $\delta(A)$ is their number, and the \emph{core sum} $\delta^\Sigma(A)$ is the sum of the corresponding density values. The \emph{condensed representation} of $A$ stores, for every position of the core, its row index, its column index, and its density value, together with the first row and the first column of $A$. These data determine every entry of $A$: by~\cite[Observation~3.1]{GawrychowskiGK25},
\[
    A[i, j] \;=\; A[i, 0] + A[0, j] - A[0, 0] - \delta^\Sigma(A'),
\]
where $A'$ is the submatrix of $A$ on the rows up to $i$ and the columns up to $j$, whose core sum collects the densities at the positions $(i', j')$ with $i' < i$ and $j' < j$.

The \emph{min-plus product} of a $p \times q$ matrix $A$ and a $q \times r$ matrix $B$ is the $p \times r$ matrix $A \oplus B$ with entries
\[
    (A \oplus B)[i, j] \;=\; \min_{\ell}\, A[i, \ell] + B[\ell, j].
\]

\begin{theorem}[\protect{\cite[Theorem~1.2]{GawrychowskiGK25}}]
\label{thm:ggk-product}
Let $A$ and $B$ be Monge matrices of sizes $p \times q$ and $q \times r$, given by their condensed representations. The min-plus product $A \oplus B$ is Monge, and its condensed representation can be computed in $O\bigl(p + q + r + (\delta(A) + \delta(B)) \log(1 + \delta(A) + \delta(B))\bigr)$ time.
\end{theorem}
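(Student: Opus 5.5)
The statement has two parts, a structural one (the product of Monge matrices is Monge) and an algorithmic one (computing the condensed representation in time near-linear in the core sizes). I would prove them separately.

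\emph{Monge-ness of the product.} Let $C = A \oplus B$, and fix $a < b$ and $c < d$. Choose witnesses $\ell$ for $C[a,d]$ and $\ell'$ for $C[b,c]$. There are two cases.
\begin{itemize}
\item If $\ell \le \ell'$, the Monge property of $B$ on rows $\ell,\ell'$ and columns $c,d$ gives $B[\ell,c]+B[\ell',d] \le B[\ell,d]+B[\ell',c]$. Hence
\[
C[a,c]+C[b,d] \le A[a,\ell]+B[\ell,c]+A[b,\ell']+B[\ell',d] \le C[a,d]+C[b,c].
\]
\item If $\ell > \ell'$, we argue symmetrically with the Monge property of $A$ on rows $a,b$ and columns $\ell',\ell$, pairing $A[a,\ell']$ with $C[a,c]$ and $A[b,\ell]$ with $C[b,d]$.
\end{itemize}
The same exchange argument shows that the leftmost optimal witness $\ell(i,j)$ is non-decreasing in both $i$ and $j$. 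The algorithm below relies on this monotonicity.

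\emph{Compressing the inner dimension.} Call an inner index $\ell$ \emph{inactive} if $A^\square$ has no core entry in column $\ell-1$ or $\ell$ and $B^\square$ has no core entry in row $\ell-1$ or $\ell$. For such an index, the column difference $A[\cdot,\ell+1]-A[\cdot,\ell]$ is constant, and so is the row difference $B[\ell+1,\cdot]-B[\ell,\cdot]$. Consequently the sum $A[i,\ell]+B[\ell,j]$ changes by a constant independent of $(i,j)$ when $\ell$ is replaced by its neighbour. So one of the two neighbouring indices is never strictly needed as a witness, and we drop it.

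Repeating this over maximal runs of inactive indices (a linear scan over the first row and column, plus the sorted core lists) shrinks $q$ to $O(\delta(A)+\delta(B)+1)$ relevant inner indices. Rows of $A$ and columns of $B$ carrying no core entries are handled the same way: they are copies of a neighbour shifted by a constant. Thus the $O(p+r)$ term pays only for reading the boundary row and column and writing them back.

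\emph{Divide and conquer on the reduced inner dimension.} Split the inner indices into halves $L$ and $R$. Recursively compute $C_L = A_L \oplus B_L$ and $C_R = A_R \oplus B_R$, where each half receives only its own core entries. Then $C = \min(C_L, C_R)$ entrywise. By witness monotonicity, the set of $(i,j)$ for which $R$ wins is an up-right staircase region. Inside each region $C$ coincides with one of $C_L, C_R$, so $C^\square$ equals the corresponding density away from the boundary, and new core entries can arise only along the staircase. To trace the staircase, I would walk it row by row. Each step compares $C_L[i,j]$ with $C_R[i,j]$, and these values are evaluated through the formula of~\cite[Observation~3.1]{GawrychowskiGK25} using a dynamic prefix-sum (Fenwick) structure over the core points. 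The number of turns of the staircase, and hence the number of new core entries created, must be bounded by the core sizes of the two halves.

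\emph{Expected main obstacle.} The key step is to show that the merge creates only $O(\delta(A_L)+\delta(A_R)+\delta(B_L)+\delta(B_R))$ core entries. The overall bound would be $\delta(A\oplus B)\le \delta(A)+\delta(B)$, and the staircase can be traced in time proportional to this. Granting that, each recursion level costs $O\bigl((\delta(A)+\delta(B))\log(\cdot)\bigr)$, with the logarithm coming from the prefix-sum queries. I would first try to bound the staircase's turns by charging each turn to a distinct core entry of $A$ or $B$: a turn occurs where the difference $C_R - C_L$ changes sign, and that difference can only vary across rows or columns containing core density. To avoid an extra logarithmic factor from the recursion depth, I would balance the split by core weight rather than by index count, so that the levels contribute a geometrically controlled total. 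Alternatively, one can accept the cost at each level and argue that the core mass is partitioned between the two halves.
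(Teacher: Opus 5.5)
The paper does not prove this statement; it imports it from \cite{GawrychowskiGK25}. Your proposal therefore has to stand on its own. Your exchange argument for the Monge property and for monotonicity of the leftmost witness is correct. So is the compression of the inner dimension, except that merging $\ell$ and $\ell+1$ only needs column $\ell$ of $A^\square$ and row $\ell$ of $B^\square$ to vanish.

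\textbf{The merge claim is false.} The gap is at the step you flag as the main obstacle, and the claim you aim to prove there does not hold. Take $A=\begin{pmatrix}0&m\\0&0\end{pmatrix}$, the $2\times(m+1)$ matrix $B$ with $B[0,j]=j$ and $B[1,j]=0$, and split the inner indices as $L=\{0\}$, $R=\{1\}$.
\begin{itemize}
\item The matrices $A_L,A_R,B_L,B_R,C_L,C_R$ all have empty cores.
\item Yet $C[0,j]=\min(j,m)=j$ and $C[1,j]=0$, so $C^\square[0,j]=1$ for every $j<m$.
\end{itemize}
So the merge creates $m$ core entries, not $O(\delta(A_L)+\delta(A_R)+\delta(B_L)+\delta(B_R))=0$. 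It does so while the staircase has $O(1)$ turns, so charging to turns also fails. In general, write $D=C_R-C_L$. Along a horizontal stretch of the boundary, with row $i$ on the $L$ side and row $i+1$ on the $R$ side, $C^\square[i,j]=C_L^\square[i,j]+D[i+1,j]-D[i+1,j+1]$. This is nonzero wherever $D$ varies, whether or not there is a turn.

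The new entries are paid for by the split column of $A^\square$ and the split row of $B^\square$, which neither recursive call sees. Even charging them to $\delta(A_{\mathrm{node}})+\delta(B_{\mathrm{node}})$ only gives $\delta(C_{\mathrm{node}})\le\delta(C_L)+\delta(C_R)+O(\delta(A_{\mathrm{node}})+\delta(B_{\mathrm{node}}))$, and this accumulates over the $\Theta(\log\delta)$ levels. What you actually need is a non-accumulating structural lemma bounding $\delta(A\oplus B)$ linearly in $\delta(A)+\delta(B)$, applied at every node. That lemma is the heart of the cited result, and your proposal assumes it (``granting that'') rather than proving it.

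\textbf{The time analysis is one logarithm off.} Even granting the lemma, your analysis does not reach the stated bound.
\begin{itemize}
\item With a Fenwick query per staircase step, a node costs $O(\delta_{\mathrm{node}}\log\delta_{\mathrm{node}})$. The cores of the nodes on one level partition those of $A$ and $B$, so every level costs $\Theta((\delta(A)+\delta(B))\log(\cdot))$ whatever the split.
\item Balancing the split by core weight therefore gives no geometric decay. You end with $O((\delta(A)+\delta(B))\log^2(\cdot))$, which would push Lemma~\ref{lem:efficient-shortest-paths} to $O(n\log^3 n)$.
\item You compress rows and columns only once, at the root. A row-by-row walk over all $O(\delta)$ root-compressed rows and columns at each of the $\Theta(\delta)$ recursion nodes costs $\Theta(\delta^2)$ in total.
\end{itemize}
To get the stated bound you need two things. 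First, re-compress at every node. Rows with no core in $A_{\mathrm{node}}$ and columns with no core in $B_{\mathrm{node}}$ shift $C_L$ and $C_R$ by the same constant, so they cannot move the boundary. Second, make the merge run in time linear in $\delta(A_{\mathrm{node}})+\delta(B_{\mathrm{node}})$. For example, update $C_L$ and $C_R$ incrementally along the monotone walk, using core lists kept sorted by row and by column through the recursion. The recursion then costs $O(\delta)$ per level rather than $O(\delta\log\delta)$.
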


\begin{lemma}[\protect{\cite[Lemma~4.4]{GorbachevK25}}]
\label{lem:gk-oracle}
Let $A$ be a matrix of size $p \times q$, given by its condensed representation. In $O\bigl(p + q + \delta(A) \log(1 + \delta(A))\bigr)$ time, the representation can be turned into an oracle that returns any entry of $A$ in $O\bigl(\log(2 + \delta(A))\bigr)$ time.
\end{lemma}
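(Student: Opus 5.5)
The plan is to read the entry formula of~\cite[Observation~3.1]{GawrychowskiGK25}, quoted above, as a static two-dimensional dominance sum query over the core. The condensed representation stores the first row $A[0,\cdot]$ and first column $A[\cdot,0]$ explicitly, so the term $A[i,0]+A[0,j]-A[0,0]$ costs $O(1)$ after storing them in arrays, which takes $O(p+q)$ time. What remains is
\[
  \delta^\Sigma(A') \;=\; \sum_{(i',j')\in \mathrm{core}(A),\ i'<i,\ j'<j} A^\square[i',j'].
\]
This is a weighted orthogonal dominance query over $\delta=\delta(A)$ weighted points. It must be answered in $O(\log(2+\delta))$ time, which is independent of $p$ and $q$, after $O(p+q+\delta\log(1+\delta))$ preprocessing.

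First I will sort the core positions by row index and, separately, by column index, in $O(\delta\log(1+\delta))$ time. From these sorts I will build two rank arrays in $O(p+q+\delta)$ time. The array $\rho[i]$ gives the number of core positions with row index $<i$, for $i\in[0,p]$. The array $\kappa[j]$ gives the number of distinct core column values $<j$, i.e.\ the compressed rank of the first core column that is $\ge j$, for $j\in[0,q]$. Each array is filled by one linear scan that merges the sorted core list with the index range. These arrays are what make the query time independent of $p$ and $q$: a query $(i,j)$ is translated in $O(1)$ time into the prefix "first $\rho[i]$ core points in row order" and the compressed column threshold $\kappa[j]$.

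Next I will build a persistent segment tree (equivalently, a wavelet tree with prefix sums) over the compressed column universe of size at most $\delta$. I insert the core points in row order, adding the weight $A^\square[i',j']$ at leaf $\mathrm{rank}(j')$, and keep version $t$ after the first $t$ insertions. Each insertion creates $O(\log(1+\delta))$ new nodes, so the build takes $O(\delta\log(1+\delta))$ time. A query returns the sum over leaves $<\kappa[j]$ in version $\rho[i]$ by one root-to-leaf walk, in $O(\log(2+\delta))$ time. Points sharing a row are handled correctly because $\rho[i]$ counts only rows strictly less than $i$, and the strict column inequality is handled by the definition of $\kappa$.

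The only step needing care is this decoupling of query cost from the matrix dimensions. Preprocessing may be linear in $p+q$ but queries may not be, and the two rank arrays resolve this. I expect the main subtlety to be getting the strict-versus-weak inequalities at the boundaries right, together with the degenerate case $\delta=0$, where the formula reduces to the stored row and column and a query takes $O(1)$ time. Otherwise the argument is a standard combination of the quoted observation with persistent range-sum structures.
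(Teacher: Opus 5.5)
Your proof is correct and meets both bounds. The paper gives no proof of this lemma; it imports it as a black box from Gorbachev and Kociumaka. The only in-paper ingredient is the entry formula of Observation~3.1, which is exactly what you build on.

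Reducing $\delta^\Sigma(A')$ to a weighted dominance sum over the $\delta(A)$ core points is the natural way to derive the lemma. You handle correctly the two points that actually need care:
\begin{itemize}
\item The rank arrays $\rho$ and $\kappa$, built in $O(p+q+\delta(A))$ time, translate a query into a version index and a compressed column threshold in $O(1)$ time. The persistent segment tree therefore lives over a universe of size at most $\delta(A)$, so the query cost depends only on $\delta(A)$ and not on $p$ or $q$.
\item Because the core is sorted by row and $\rho[i]$ counts rows strictly below $i$, version $\rho[i]$ contains exactly the core points with $i'<i$, however ties within a row are ordered. The strict column inequality is likewise handled by $\kappa$.
\end{itemize}

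What your self-contained version buys over the citation is transparency. It shows the lemma needs nothing beyond the telescoping identity. Any static weighted two-dimensional dominance-sum structure with $O(\delta\log(1+\delta))$ preprocessing and $O(\log(2+\delta))$ query time could replace the persistent tree.

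A minor aside: row and column indices are bounded by $p$ and $q$, so your two sorts could be bucket sorts in $O(p+q+\delta(A))$ time. This does not change the stated bound.
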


We bound the cores of our matrices through their entries. 

\begin{fact}
\label{fact:core-entries}
A Monge matrix $A$ with non-negative integer entries at most $U$ has a core of size at most $\delta(A) \le 2U$.
\end{fact}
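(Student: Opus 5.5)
The plan is to telescope the density values along a single row and a single column, using the Monge property to control their signs and the bounds $0\le A\le U$ to control their sizes. Write $A$ as a $p\times q$ matrix. The core sum of the whole matrix has two descriptions. By the formula recalled from~\cite[Observation~3.1]{GawrychowskiGK25}, applied at $(i,j)=(p-1,q-1)$,
\[
\delta^\Sigma(A) \;=\; A[p-1,0] + A[0,q-1] - A[0,0] - A[p-1,q-1].
\]
Equivalently, summing $A^\square[i,j]$ over all positions telescopes to the same four corner entries. We will therefore bound $\delta^\Sigma(A)$ using only these corner entries, and then turn this into a bound on $\delta(A)$.

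\textbf{Step 1: bound the core sum.} Since every entry lies in $[0,U]$, the positive terms $A[p-1,0]$ and $A[0,q-1]$ are each at most $U$, and the subtracted terms are non-negative. Hence $\delta^\Sigma(A)\le 2U$.

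\textbf{Step 2: pass from core sum to core size.} Because $A$ is Monge, every density value $A^\square[i,j]$ is non-negative. Because $A$ has integer entries, every density value is also an integer. So each position of the core contributes a density value of at least $1$, which gives $\delta(A)\le\delta^\Sigma(A)\le 2U$.

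The only point that needs any care is the sign bookkeeping in the telescoping identity, that is, checking which corner entries appear with a plus sign. Under the paper's convention, the density $A^\square$ adds $A[i,j+1]+A[i+1,j]$ and subtracts the main-diagonal pair. With that convention the off-diagonal corners $A[p-1,0]$ and $A[0,q-1]$ come with a plus sign, which is what Step 1 uses. Everything else follows directly from the assumptions.
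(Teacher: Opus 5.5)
Your proof is correct and matches the paper's argument: the core sum equals the alternating sum of the four corner entries, which is at most $2U$, and since the densities of an integer Monge matrix are non-negative integers, $\delta(A)\le\delta^\Sigma(A)$. The only difference is that you prove this last inequality directly, where the paper cites an observation of Gawrychowski et al.
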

\begin{proof}
Let $A$ have $p$ rows and $q$ columns, indexed from $0$. The core sum equals the alternating sum of the four corner entries, $\delta^\Sigma(A) = A[0, q-1] + A[p-1, 0] - A[0, 0] - A[p-1, q-1]$~\cite[Observation~3.1]{GawrychowskiGK25}, which is at most $2U$ when the entries lie between $0$ and $U$. When the entries are integers, the core size is at most the core sum, $\delta(A) \le \delta^\Sigma(A)$~\cite[Observation~B.10]{GawrychowskiGK25}.
\end{proof}

\subsection{The augmented grid graph}

The \emph{augmented grid graph} $\hat{G}$ is obtained from $G$ by adding an \emph{upward} edge of weight $1$ from every cell to the cell above it. We write $\hat{N}$ for the across matrix with distances taken in $\hat{G}$ instead. Every entry of $\hat{N}$ is finite, since the upward edges let a path reach any grid row. The next lemma shows that the augmentation creates no shortcut between cells that the grid graph already connects.

\begin{lemma}
\label{lem:no-upward}
Let $u$ be a cell in grid column $c_u$ and grid row $r_u$, and $v$ a cell in grid column $c_v$ and grid row $r_v$, with $c_u \le c_v$. If $r_u \ge r_v$, then there is a shortest path in $\hat{G}$ from $u$ to $v$ that does not use upward edges, i.e., the distance from $u$ to $v$ is the same in $G$ and in $\hat{G}$.
If $r_u < r_v$, some shortest path from $u$ to $v$ in $\hat{G}$ uses no downward edges.
\end{lemma}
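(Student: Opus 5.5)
The plan is an exchange argument on walks in $\hat{G}$. Take a shortest walk $P$ from $u$ to $v$ in $\hat{G}$. Since $\hat{G}$ has no rightward-to-leftward moves, $P$ visits the grid columns $c_u, c_u+1, \ldots, c_v$ in order. Inside each grid column, $P$ makes a maximal run of vertical moves, which may go both up and down. It then leaves by a rightward edge in some grid row $\rho_c$. So $P$ is described by its vertical runs together with the sequence of crossing rows $\rho_{c_u}, \ldots, \rho_{c_v-1}$. The cost of $P$ splits into two parts. The rightward part is $\sum_c w_c(\rho_c)$, where $w_c(\rho)$ is the weight of the rightward edge leaving column $c$ in row $\rho$. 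The vertical part is the number of upward steps, since downward edges cost $0$ and upward ones cost $1$.

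First step: within a single column, a run that goes both up and down can be replaced by a monotone run between the same entry and exit rows. This never increases the cost, because downward steps are free and replacing an up-down zigzag by its net displacement only removes upward steps. Hence the total number of upward steps is at least $\sum_c \max(0, \rho_c - \rho_{c-1})$, with $\rho_{c_u-1} := r_u$ and a final target row $r_v$.

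Second step, the key one: an exchange that shifts rises. Suppose in some column the path goes up from row $\rho_{c-1}$ to $\rho_c > \rho_{c-1}$, and later it goes down again. Delaying the rise is not automatically free, because the rightward weights depend on the row. The right tool is a weight comparison. For the red point $q$ of column $c$, $w_c(\rho)$ is $0$ above $q$ and $1$ at or below it, so it is non-increasing in $\rho$. For a blue point, $w_c(\rho)$ is $0$ below and $1$ above, so it is non-decreasing.

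I would handle the case $r_u \ge r_v$ by truncation. Replace each crossing row by $\rho'_c = \min(\rho_c, \rho'_{c-1})$, i.e.\ take the running minimum, and finally descend to $r_v \le r_u$. This gives a path with no upward edges, since it starts at $r_u$ and only drops. A column where $\rho'_c < \rho_c$ means $P$ climbed above the current running minimum at an earlier point. For each such column the saved cost must compensate. Here $|w_c(\rho'_c) - w_c(\rho_c)| \le 1$. I would charge the potential extra cost $1$ in column $c$ to an upward unit step of $P$ crossing the horizontal line of the point of column $c$. Such a unit step exists whenever the two weights differ: the weights can only differ if the point's $y$ lies between $\rho'_c$ and $\rho_c$, and $P$ rose from at most $\rho'_c$ to $\rho_c$ before column $c$. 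Distinct columns have distinct point rows, so the charges go to distinct upward steps. This yields cost$(P') \le$ cost$(P)$.

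For $r_u < r_v$, apply the symmetric truncation with a running maximum clipped from above by $r_v$. Set $\rho'_c = \min(r_v, \max(\rho_c, \rho'_{c-1}))$. The resulting path never descends, and the same injective charging argument works, now charging to downward moves the path makes and must compensate by extra rises. The main obstacle I expect is making this charging injective and precise, in particular with the boundary rows $r_u$ and $r_v$ and when the last vertical run ends at $r_v$. If the running-min charging fails, I would fall back to induction on the number of upward edges, removing the first up-then-down pattern via a local swap argument.
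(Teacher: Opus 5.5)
Your approach is the paper's: take the running minimum of the crossing rows of a shortest path, and charge every crossing that becomes costlier to a vertical unit step of the original path through the horizontal line of the same point, with injectivity because distinct columns carry distinct points. But the construction for $r_u \ge r_v$ has a gap. You set $\rho'_c = \min(\rho_c, \rho'_{c-1})$ without clipping from below at $r_v$, and then claim the path can ``finally descend to $r_v$''. A shortest walk in $\hat{G}$ may cross some column in a row below $r_v$ and climb back at the end. Take a single blue point, with $u$ and $v$ both in the grid row above it. Crossing above the point costs $1$. Dipping below it, crossing for free and climbing back also costs $1$. So both walks are shortest. If $P$ is the second one, your running minimum ends below $r_v$. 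The modified path then reaches $v$ only through an upward edge, so it is not a path of $G$.

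The fix is to clip, setting $\rho'_c = \max\bigl(r_v, \min(r_u, \rho_{c_u}, \ldots, \rho_c)\bigr)$, as you already do with roles swapped in the case $r_u < r_v$. Clipping creates a second kind of costlier crossing that your charging does not cover. Consider a column with $\rho'_c = r_v > \rho_c$ whose blue point lies between grid rows $\rho_c$ and $r_v$. Its crossing now costs $1$ instead of $0$. The compensating upward step of $P$ comes \emph{after} column $c$, when $P$ climbs from $\rho_c$ back to $r_v$ through that point's line, not before column $c$ as in your argument. Each upward unit step crosses a single horizontal line, so these charges stay injective together with the first kind, and the cost comparison goes through. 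This is the second case of the paper's proof. Your case $r_u < r_v$ needs the same later-step charge, to a downward step, for the columns where the clip at $r_v$ is active. Your sketch there does not spell this out.
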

\begin{proof}
We only prove the first claim for $r_u \ge r_v$, the argument for $r_u < r_v$ is the same.
Adding edges cannot increase distances, so it suffices to turn a shortest path $\pi$ of $\hat{G}$ from $u$ to $v$ into a path of $G$ of at most the same weight. Let $q_1, \ldots, q_m$ be the points whose vertical lines are crossed by $\pi$, from left to right, and let $\rho_1, \ldots, \rho_m$ be the grid rows of these crossings. Set
\[
    \rho'_k \;=\; \max\bigl(r_v,\, \min(r_u, \rho_1, \ldots, \rho_k)\bigr)
    \qquad \text{for } k = 1, \ldots, m.
\]
The sequence $\rho'_1, \ldots, \rho'_m$ is non-increasing, starts at most at $r_u$ and ends at least at $r_v$, so crossing the line of $q_k$ in grid row $\rho'_k$ for every $k$, with downward edges in between, defines a path from $u$ to $v$ in $G$.

It remains to compare the weight of the new path with the weight of $\pi$. The two paths cross the same vertical lines, and downward edges have weight $0$, so the weight of the new path is the sum of the weights of its crossings, while the weight of $\pi$ is the sum of the weights of its crossings plus the number of its upward edges. It therefore suffices to charge every crossing that weighs more in the new path to an upward edge of $\pi$ that crosses the horizontal line of the same point: the points $q_k$ are pairwise distinct, so distinct crossings charge distinct lines, hence distinct upward edges.

Fix $k$ such that the crossing of $q_k$ weighs more in the new path. The weight of a crossing depends only on whether its grid row is above or below the crossed point, so $\rho'_k \ne \rho_k$ and the line of $q_k$ separates grid row $\rho'_k$ from grid row $\rho_k$. Suppose first $\rho'_k < \rho_k$. Then $\min(r_u, \rho_1, \ldots, \rho_k)$ is at most $\rho'_k$, hence below the line of $q_k$, so $\pi$ ran below that line before its $k$-th crossing, at its start or at an earlier crossing, and rose back above it by the $k$-th crossing, through an upward edge crossing the horizontal line of $q_k$. Suppose now $\rho'_k > \rho_k$. Since $\min(r_u, \rho_1, \ldots, \rho_k) \le \rho_k$, this forces $\rho'_k = r_v$, so the line of $q_k$ lies above grid row $\rho_k$ and not above grid row $r_v$. The path $\pi$ crosses the line of $q_k$ in grid row $\rho_k$, below it, and ends in grid row $r_v$, above it, so it later rises through the line of $q_k$, again through an upward edge crossing it.
\end{proof}

\begin{fact}
\label{fact:monge-core}
The matrices $\hat{N}$ and $M$ are Monge, and their cores have size $O(n)$.
\end{fact}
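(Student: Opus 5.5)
The proof has two parts: show that $\hat{N}$ and $M$ are Monge, then bound their cores through Fact~\ref{fact:core-entries}.

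\emph{Monge property.} I would use the standard planar crossing argument. Take $a<b$ and $c<d$. For $M$, the sources $s_a, s_b$ lie on the left side of the grid, with $s_b$ above $s_a$. The targets $t_c, t_d$ lie on the bottom side, with $t_d$ to the right of $t_c$. Going around the boundary of the grid, the four cells appear in the order $s_a, s_b, t_d, t_c$ (up the left side, then across the bottom from the corner). Hence a shortest path from $s_a$ to $t_d$ and a shortest path from $s_b$ to $t_c$ must share a common cell $w$, by planarity of the grid (a Jordan-curve argument on the rectangle of cells). Swapping the suffixes at $w$ gives a path $s_a\to t_c$ and a path $s_b\to t_d$ whose total weight is $M[a,d]+M[b,c]$. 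This yields $M[a,c]+M[b,d]\le M[a,d]+M[b,c]$.

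The same argument applies to $\hat{N}$. Its sources are on the left side and its targets on the right side, and all four endpoints are again on the outer face in a compatible cyclic order. The upward edges keep $\hat{G}$ planar, since it is still a grid with parallel edges directed both ways. The one thing to check is the orientation of the inequality. I will verify it directly by the crossing configuration, which gives $\hat N[a,c]+\hat N[b,d]\le \hat N[a,d]+\hat N[b,c]$ under the row and column numbering used (rows indexed from the bottom).

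\emph{Core size.} Both matrices have non-negative integer entries, so by Fact~\ref{fact:core-entries} it suffices to bound every entry by $O(n)$. For $M[i,j]$, moving right along row $i$ and then descending crosses at most $n$ vertical lines, each of weight at most $1$, so $M[i,j]\le n$. For $\hat N[i,j]$, a path that moves right along row $i$ and then adjusts its row at the end with at most $n$ upward or downward edges costs at most $2n$. Hence $\delta(M)\le 2n$ and $\delta(\hat N)\le 4n$.

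The main obstacle is making the crossing lemma rigorous for directed paths in the augmented graph. Since the grid is a planar embedding and both paths lie inside the rectangle, the path from $s_a$ to the far target separates $s_b$ from $t_c$ within the disk. The path from $s_b$ to $t_c$ therefore meets it at a vertex, because grid paths can only cross at cells, and the suffix swap preserves directedness. By Lemma~\ref{lem:no-upward} I could restrict to monotone paths to simplify this, but the general planar argument suffices.
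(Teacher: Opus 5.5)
Your proof is correct and follows the paper's argument. Monge-ness comes from the interleaved cyclic order of sources and targets on the outer face of the planar augmented grid. The core bound comes from $M[i,j]\le n$ and $\hat N[i,j]\le 2n$ fed into Fact~\ref{fact:core-entries}. The differences are minor: you spell out the crossing and suffix-swap argument that the paper simply cites (Fakcharoenphol--Rao, and Fact~B.7 of Gawrychowski et al.), and you bound the entries of $M$ with an explicit path instead of invoking Lemma~\ref{lem:no-upward}, which gives the slightly sharper $\delta(M)\le 2n$.
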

\begin{proof}
The graph $\hat{G}$ is planar, and both matrices collect distances between vertices of its outer face: traversing it clockwise, the sources appear along the left side from bottom to top, followed by the targets in reverse order, along the right side from top to bottom for $\hat{N}$ and along the bottom side from right to left for $M$. This cyclic order guarantees that both matrices are Monge, as observed in~\cite[Section~2.3]{FakcharoenpholR06} and conveniently formalized in~\cite[Fact~B.7]{GawrychowskiGK25}.

We bound the entries. By Lemma~\ref{lem:no-upward} an entry of $M$ is realized by a path without upward edges, which crosses at most $n$ vertical lines at weight at most $1$ each, so the entry is at most $n$. For an entry $\hat{N}[i, j]$, a path may cross every line in grid row $i$ and then move vertically along grid column $n$ to grid row $j$, with weight at most $n$ from the crossings and at most $n$ from the climb, so the entry is at most $2n$. Both matrices are Monge with non-negative integer entries at most $2n$, so their cores have size at most $4n$ by Fact~\ref{fact:core-entries}.
\end{proof}

\subsection{The divide and conquer}
\label{sec:dc}

We compute the condensed representations of $\hat{N}$ and $M$ by divide and conquer. If $n = 1$, both matrices are $2 \times 2$ and we write them down directly. Otherwise we split the points of $S$ by a vertical line that passes through no point, into a left part and a right part whose sizes differ by at most one, and recurse on the two parts. The plan is the following: since no edge of $\hat{G}$ points leftward, a path that crosses the dividing line splits into a portion among the grid columns of the left part followed by a portion among the grid columns of the right part, so it suffices to know the distances of $\hat{G}$ restricted to the cells of each part, and a min-plus product combines the two portions. We therefore want, for each part, its across and prefix matrices with the distances taken in these restrictions of $\hat{G}$.

The recursion returns almost that. Each recursive call runs the same construction on the points of its part alone, as if they were the whole input, so its grid is cut only by the vertical and horizontal lines of these points. On the grid columns nothing is lost: every vertical line of a part is a vertical line of $S$, the two parts together provide all of them, and the dividing line lies inside a single grid column of each of the three grids, so that one grid column of $S$ corresponds to both the last grid column of the left part and the first grid column of the right part. The grid rows are more delicate. The grid of a part lacks the horizontal lines of the other part, so every grid row of a part is the union of one or more consecutive grid rows of $S$ (Figure~\ref{fig:dc}, left), and the matrices $\hat{N}_L$, $M_L$, $\hat{N}_R$, $M_R$ returned by the two calls are indexed by coarser grid rows, inconsistent with each other and with the grid of $\hat{G}$.

\begin{figure}
    \centering
\begin{tikzpicture}[
  scale=0.85,
  rpt/.style={circle,fill=red,inner sep=1.8pt},
  bpt/.style={circle,fill=blue,inner sep=1.8pt},
  lpath/.style={line width=1.4pt,line cap=round,line join=round},
]
\colorlet{rowa}{yellow!35}
\colorlet{rowb}{green!20}
\colorlet{rowc}{violet!16}

\begin{scope}[shift={(0,0)}]
  \fill[rowa] (0,0) rectangle (3,3);
  \fill[rowb] (0,3) rectangle (3,5);
  \fill[rowc] (0,5) rectangle (3,6);
  \foreach \x in {1,2} \draw[gray!45] (\x,0) -- (\x,6);
  \foreach \y in {3,5} \draw[gray!45] (0,\y) -- (3,\y);
  \draw[gray!70] (0,0) rectangle (3,6);
  \draw[dashed] (2.3,-0.3) -- (2.3,6.3);
  \node[rpt] at (1,5) {}; \node[bpt] at (2,3) {};
  \node[left] at (0,1.5) {\scriptsize $0$};
  \node[left] at (0,4.0) {\scriptsize $1$};
  \node[left] at (0,5.5) {\scriptsize $2$};
  \node[below] at (1.5,-0.4) {\small the left part};
\end{scope}

\begin{scope}[shift={(4.5,0)}]
  \fill[rowa] (0,0) rectangle (6,3);
  \fill[rowb] (0,3) rectangle (6,5);
  \fill[rowc] (0,5) rectangle (6,6);
  \foreach \i in {1,...,5} {
    \draw[gray!45] (\i,0) -- (\i,6);
    \draw[gray!45] (0,\i) -- (6,\i);
  }
  \draw[gray!70] (0,0) rectangle (6,6);
  \draw[dashed] (2.3,-0.3) -- (2.3,6.3);
  \node[rpt] at (1,5) {}; \node[rpt] at (3,2) {}; \node[rpt] at (4,4) {};
  \node[bpt] at (2,3) {}; \node[bpt] at (5,1) {};
  \foreach \r in {0,...,5} \node[left] at (0,\r+0.5) {\scriptsize $\r$};
  \draw (2.05,-0.35) -- (2.05,-0.5) -- (2.95,-0.5) -- (2.95,-0.35);
  \node[below] at (2.5,-0.5) {\small $t$};
\end{scope}

\begin{scope}[shift={(12,0)}]
  \fill[gray!12] (2,0) rectangle (3,6);
  \fill[gray!30] (2,2) rectangle (3,3);
  \foreach \i in {1,...,5} {
    \draw[gray!45] (\i,0) -- (\i,6);
    \draw[gray!45] (0,\i) -- (6,\i);
  }
  \draw[gray!70] (0,0) rectangle (6,6);
  \draw[dashed] (2.3,-0.3) -- (2.3,6.3);
  \node[rpt] at (1,5) {}; \node[rpt] at (3,2) {}; \node[rpt] at (4,4) {};
  \node[bpt] at (2,3) {}; \node[bpt] at (5,1) {};
  \draw[lpath,->] (0.5,5.5) -- (1.5,5.5) -- (1.5,2.5) -- (2.4,2.5);
  \draw[lpath,dash pattern=on 4pt off 3pt] (2.5,2.5) -- (4.5,2.5) -- (4.5,0.5);
  \node[left] at (0,5.5) {$i$};
  \node[below] at (4.5,0) {$c$};
  \node at (2.75,2.72) {\small $\ell$};
  \node at (3,-1) {$M[i,c] = \min_\ell\, \hat{N}_L[i,\ell] + M_R[\ell,c]$};
\end{scope}

\end{tikzpicture}
    \caption{Left: the dashed line splits the point set in grid column $t$, and each grid row of the left part, shown by the tints, is a union of consecutive grid rows of the whole grid. Right: a shortest path from the left side to the bottom side decomposes at the highlighted cell of grid column $t$ in grid row $\ell$ into an across path of the left part, solid, and a prefix path of the right part, dashed.}
    \label{fig:dc}
\end{figure}

We fix this by \emph{refining} the matrices, that is, re-indexing them by the grid rows of $S$ and adjusting their entries so that they hold the distances of $\hat{G}$ restricted to the grid columns of their part. Note that both indices of an across matrix are numbers of grid rows, at the left and at the right side of its grid, so refinement applies to both of them, while in a prefix matrix it applies to the first index only, as the second one ranges over grid columns. Restricted to the grid columns of a part, $\hat{G}$ is exactly the augmented grid graph of the part with every grid row subdivided by the horizontal lines of the other part, and subdividing changes the weight of no rightward edge, since the grid rows into which a grid row splits lie on the same side of every point of the part. For the prefix matrix the refinement is then plain repetition: by Lemma~\ref{lem:no-upward} its entries are realized by paths of $G$, which never climb, so the subdivision leaves them unchanged, and each entry is copied across the grid rows into which its grid row splits. For the across matrix the subdivision does change the entries, since its paths climb through upward edges, but the change is conveniently structured: Consider one subdividing line, splitting a grid row into the new grid rows $r-1$ and $r$. Merging the two grid rows back turns a path of the subdivided graph into a path of the coarser graph, saving one upward edge per rise through the line, and a path from a source below the line to a target above it rises through it at least once. Conversely, by Lemma~\ref{lem:no-upward}, a shortest path of the coarser graph can be chosen to cross the height of the line at most once, upward, and exactly when the source lies below the line and the target above it, so lifting it to the subdivided graph costs one extra upward edge in that case alone. The distances with source below and target above the line therefore grow by exactly one, while all the other distances are unchanged, and refining an across matrix amounts to repetition followed by adding $1$ to the quadrant $\{(i, j) : i < r \le j\}$ for each subdividing line.

Refining requires little work on the condensed representations. Consider first the repetition, which is applied to both matrices. The first row and the first column repeat their entries across the copies, and every core element stays in the core, with its density value, at a shifted position: between two consecutive copies of the same row the densities are zero, since the four terms of each density entry cancel in pairs, while the density entries between the last copy of a row and the first copy of the next keep their original values. One pass grows each stored index by the number of copies inserted at or before it.  Then we apply quadrant increments to the across matrix. Recall that an entry is determined by its first-column and first-row values and by the total density above and to its left, as mentioned in Section~\ref{sec:core-sparse}. Raising the first-row values at the columns $j \ge r$ by $1$ thus raises every entry with $j \ge r$ by $1$, and one new core element of density $1$ at the position $(r-1, r-1)$ lowers back by $1$ every entry with $i \ge r$ and $j \ge r$, so the two updates together add $1$ exactly on the quadrant $\{(i, j) : i < r \le j\}$. Each subdividing line thus contributes one interval update of the first row and one core element, merged with an existing element in case the positions coincide, and the whole refinement takes $O(n)$ time.

From now on we assume that the matrices of the two parts are refined to the grid rows of $S$, and the columns of their prefix matrices are numbered as the corresponding grid columns of $S$, so that all indices refer to the grid of $S$.

Let $t$ be the grid column of $S$ containing the dividing line. Because the graph has no leftward edge, a path from the left side of the grid to a cell on the right of grid column $t$ leaves grid column $t$ through a single rightward edge, at one grid row $\ell$, even though it may move vertically along grid column $t$ before doing so (Figure~\ref{fig:dc}, right). Cutting the path at that edge decomposes it into a path of the left part, from the left side to the cell in grid column $t$ and grid row $\ell$, and a path of the right part, from that cell to the right side, and the refined matrices measure exactly these two portions, so a shortest path from the left side in grid row $i$ to the right side in grid row $j$ weighs $\min_\ell \hat{N}_L[i, \ell] + \hat{N}_R[\ell, j]$. Entry by entry this reads
\[
    \hat{N} \;=\; \hat{N}_L \oplus \hat{N}_R,
\]
the min-plus product of the refined across matrices of the parts.

For the prefix matrix we work in $\hat{G}$ as well, which by Lemma~\ref{lem:no-upward} does not change its entries. A path from the left side to the bottom cell of a grid column $c \le t$ never leaves the grid columns of the left part, so $M[i, c] = M_L[i, c]$, and $M_L$ is the restriction of $M$ to the columns up to $t$. For $c > t$ the cut at grid column $t$ applies as before and gives
\[
    M[i, c] \;=\; \min_\ell\, \hat{N}_L[i, \ell] + M_R[\ell, c].
\]
The min-plus product $\hat{N}_L \oplus M_R$ therefore agrees with $M$ on the columns after $t$, and on column $t$ as well, so it is the restriction of $M$ to the columns from $t$ on.

The condensed representation of $M$ is now assembled from those of the two restrictions in $O(n)$ time. The density of $M$ at a position involves two consecutive columns, which lie together in one of the two restrictions, and $M$ agrees with that restriction on those columns, so the two densities coincide and the core of $M$ is the union of the two cores. Concretely, the stored representation of the product indexes its columns from $0$, so one pass over its core elements adds $t$ to their stored column indices, while the core of $M_L$ is kept as it is. The first row of $M$ is the first row of $M_L$ followed by the one of the product, the two agreeing at column $t$, and the first column of $M$ is the first column of $M_L$.

\paragraph{Running time.}
Consider any merge step of the recursion, and let $n'$ be the size of the merged set. The refinements take $O(n')$ time. The refined matrices have $O(n')$ rows and columns, and their cores have size $O(n')$ by Fact~\ref{fact:monge-core} applied to the parts. The two min-plus products then take $O(n' \log n')$ time by Theorem~\ref{thm:ggk-product}, and the assembly of $M$ takes $O(n')$ time. Since the split is balanced, the whole construction satisfies $T(n) = 2\,T(n/2) + O(n \log n)$, which solves to $T(n) = O(n \log^2 n)$.

\subsection{The data structure}

\begin{proof}[Proof of Lemma~\ref{lem:efficient-shortest-paths}]
The divide and conquer of Section~\ref{sec:dc} builds the condensed representation of $M$ in $O(n \log^2 n)$ time, together with the representation of $\hat{N}$ that it uses along the way. By Lemma~\ref{lem:gk-oracle}, in $O(n \log n)$ additional time we turn the representation of $M$ into an oracle that returns any entry in $O(\log n)$ time. We also store the $x$- and the $y$-coordinates of the points in sorted order, so that the grid row of a real $y$ and the grid column of a real $x$ are found by predecessor searches in $O(\log n)$ time. A query then makes two searches and one access to the oracle.
\end{proof}

From the same construction we also obtain a structure for the across matrix, which Section~\ref{sec:weighted} refers to when discussing the weighted version of the problem.

\begin{lemma}
\label{lem:across-queries}
Let $R$ and $B$ be sets of red and blue points with pairwise distinct $x$- and $y$-coordinates, $n$ points in total. There is a data structure, built in $O(n \log^2 n)$ time, that, given two reals $y$ and $y'$, returns in $O(\log n)$ time the entry $N[i, j]$ of the across matrix, equal to $+\infty$ when $i < j$, where $i$ is the grid row of $y$ and $j$ is the grid row of $y'$.
\end{lemma}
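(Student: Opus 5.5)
The plan is to reuse the divide and conquer of Section~\ref{sec:dc} unchanged. It already produces the condensed representation of the augmented across matrix $\hat{N}$ of the whole point set $S$, in $O(n \log^2 n)$ time. By Fact~\ref{fact:monge-core}, $\hat{N}$ is Monge with core size $O(n)$. Lemma~\ref{lem:gk-oracle} then turns this representation, in $O(n \log n)$ additional time, into an oracle that returns any entry of $\hat{N}$ in $O(\log n)$ time. As in the proof of Lemma~\ref{lem:efficient-shortest-paths}, we store the sorted $y$-coordinates of the points. The grid rows $i$ and $j$ of the query reals $y$ and $y'$ are then found by two predecessor searches in $O(\log n)$ time.

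It remains to recover $N[i,j]$ from $\hat{N}[i,j]$, and we split into two cases on the comparison of the grid-row indices $i$ and $j$.
\begin{itemize}
\item If $i < j$, we return $+\infty$ directly. The graph $G$ has no upward edges, so from grid row $i$ no path reaches a higher grid row $j$; this is the observation $N[i,j] = +\infty$ exactly when $i < j$ made at the start of Section~\ref{sec:building}.
\item If $i \ge j$, we apply the first claim of Lemma~\ref{lem:no-upward} to the source cell in grid column $0$ and grid row $i$ and the target cell in grid column $n$ and grid row $j$. Here $c_u = 0 \le n = c_v$ and $r_u = i \ge j = r_v$, so the distance is the same in $G$ and in $\hat{G}$, giving $N[i,j] = \hat{N}[i,j]$. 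We return the oracle's answer.
\end{itemize}

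The only point needing care is that the construction actually outputs the across matrix of the whole set $S$, not merely those of the parts. At the top level of the recursion, the product $\hat{N}_L \oplus \hat{N}_R$ of the refined across matrices is computed and gives $\hat{N}$ of $S$ itself. So the representation is available at no extra asymptotic cost, and the total preprocessing stays $O(n \log^2 n)$. No step here is a real obstacle; the argument is a direct combination of Lemma~\ref{lem:no-upward}, Fact~\ref{fact:monge-core} and Lemma~\ref{lem:gk-oracle}.
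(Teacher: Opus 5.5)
Your proposal is correct and matches the paper's proof: you build the condensed representation of $\hat{N}$ with the divide and conquer of Section~\ref{sec:dc}, turn it into an $O(\log n)$-time oracle with Lemma~\ref{lem:gk-oracle}, and locate the grid rows by predecessor search. You then return $+\infty$ when $i<j$ and otherwise $\hat{N}[i,j]$, which equals $N[i,j]$ by the first claim of Lemma~\ref{lem:no-upward}.
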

\begin{proof}
The divide and conquer of Section~\ref{sec:dc} builds the condensed representation of $\hat{N}$, and Lemma~\ref{lem:gk-oracle} turns it into an oracle answering in $O(\log n)$ time. A query finds the grid rows $i$ and $j$ by predecessor search, returns $+\infty$ if $i < j$, and reads $\hat{N}[i, j]$ otherwise, which equals $N[i, j]$ by Lemma~\ref{lem:no-upward}.
\end{proof}

\section{Putting the pieces together}
\label{sec:wrap-up}

We assemble the algorithm and bound its cost. The previous sections reduced the problem to one subproblem per red arc: for the head $\theta$ of the arc, find a maximum clique among the blue arcs and the red arcs containing $\theta$. The largest of these cliques, compared with the clique of all blue arcs, gives a maximum clique of the graph.

\paragraph{Preprocessing.}
We fix the cut point $\phi$ and bring the model to its normalized form (Section~\ref{sec:normalized}) in $O(n \log n)$ time, sorting the endpoints. We then build the point sets $R$ and $B$ in $O(n)$ time, the data structure of Lemma~\ref{lem:all-ell} on them in $O(n \log^2 n)$ time, and the sorted lists of the tails and of the heads of the blue arcs. A blue arc contains $\theta$ exactly when its tail exceeds $\theta$ or its head precedes it, so each count $f_\theta$ amounts to two predecessor searches in these lists.

\paragraph{The sweep.}
For the head $\theta$ of every red arc we query the quadrant with $X = \theta$ and $Y = 2n - \theta$, obtaining $\mu_\theta$ by Lemma~\ref{lem:all-ell}, and we add the count $f_\theta$. By Lemma~\ref{lem:visibility}, the sum is the size of a maximum clique among the blue arcs and the red arcs containing $\theta$. We keep the head $\theta^\star$ with the largest sum, and the maximum between this sum and the number of blue arcs is the size of a maximum clique of the graph. Each query costs $O(\log n)$ time, so the sweep takes $O(n \log n)$ time.

\paragraph{Recovering a maximum clique.}
If the maximum clique is the one formed by all the blue arcs, we list them in $O(n)$ time. Otherwise we recover a clique of size $\mu_{\theta^\star} + f_{\theta^\star}$ with the algorithm of Apostolico and Hambrusch~\cite{ApostolicoH87}. Given two points on the circle, that algorithm computes a maximum clique among the arcs that contain at least one of them: the arcs containing both points are taken directly, while among the arcs containing exactly one point a maximum compatible selection is found by a left-to-right scan of their \emph{rollout}, obtained by cutting the circle open at one of the two points so that the arcs become segments on a line. This last step is exactly the maximum compatible bichromatic set problem. Run on the cut point $\phi$ and on $\theta^\star$, it returns a maximum clique among the blue arcs and the red arcs containing $\theta^\star$: the arcs containing both points are the blue arcs containing $\theta^\star$, and the arcs containing exactly one are those of the quadrant at $\theta^\star$. It runs once, in $O(n \log\log n)$ time (Section~3 of~\cite{ApostolicoH87}), and is dominated by the preprocessing.

\thmAlgCircularArcs*
\begin{proof}
Correctness is collected above: the sweep considers, for the head of every red arc, a maximum clique among the blue arcs and the red arcs through it, and one of these cliques, or the clique of all blue arcs, is a maximum clique of the graph. The preprocessing takes $O(n \log^2 n)$ time, dominated by Lemma~\ref{lem:all-ell}, the sweep takes $O(n \log n)$ time, and the recovery takes $O(n \log\log n)$ time.
\end{proof}

The bound depends only on the number of arcs, not on the number of edges $m$, and improves on the $O(m + n \log n)$ algorithm of Bhattacharya and Kaller~\cite{BhattacharyaK97} whenever the graph is dense.

\section{The weighted case}
\label{sec:weighted}

We now consider the weighted version of the problem, where each arc $a$ has a weight $w(a)$, the weight of a clique is the sum of the weights of its arcs, and we seek a clique of maximum weight. We assume the weights are nonnegative without loss of generality. Carrying the weight of each arc to its point, the reduction of the previous sections turns this into the \emph{weighted} maximum compatible bichromatic set problem, identical to the one of Section~\ref{sec:mcbs} except that each point carries a weight and we ask for a compatible subset of maximum total weight in place of one of maximum size. Most steps of the algorithm carry over unchanged, but whether it can be solved as efficiently is open.

The obstacle lies in the shortest path machinery. The score of a staircase becomes the total weight of the blue points on or above it plus the total weight of the red points on or below it, and Lemma~\ref{lem:staircase} holds with the same proof. The grid graph is reweighted accordingly: a rightward edge that had weight $1$ now has the weight $w(q)$ of the crossed point $q$, and an upward edge of the augmented graph has the weight of the point whose horizontal line it crosses. Everything then carries over with sums of weights in place of counts. The weight of a path becomes the total weight of the crossed points it fails to score, so the derivation of Lemma~\ref{lem:all-ell} returns the maximum weight of a compatible subset of the quadrant, with $|Q|$ and $|F|$ replaced by the corresponding total weights. The matrices remain Monge, as the planarity argument does not involve the weights.

What fails is the bound on the core. The argument of Fact~\ref{fact:monge-core} does not work anymore, since the core sum can now be as large as the total weight of the points, and without a core bound the min-plus products of Section~\ref{sec:dc} lose every efficiency guarantee. We therefore do not know whether it is possible to build the matrices efficiently. Cores of size $O(n^{2-\varepsilon})$ would already yield a truly subquadratic algorithm, but proving such a bound appears hard: we show that, already for the across matrix, it would give a faster data structure for weighted range LIS, which is an open problem~\cite{GawrychowskiGK25}.

We now give the reduction. For a sequence $a_1, \ldots, a_n$ of integers with nonnegative weights $w_1, \ldots, w_n$, weighted range LIS asks to preprocess the sequence so that, given an interval of positions, the maximum total weight of a subsequence with non-decreasing values within that interval is returned.

\begin{theorem}
\label{thm:lis-barrier}
Suppose that a data structure, built in $f(n)$ time on a weighted instance of the maximum compatible bichromatic set problem with $n$ points, returns in $g(n)$ time the maximum weight of a compatible subset of the points with $y$-coordinate in a query interval. Then weighted range LIS admits a data structure with preprocessing time $f(2n) + O(n)$ and query time $g(2n) + O(1)$.
\end{theorem}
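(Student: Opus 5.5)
The plan is to encode each sequence element as a red--blue pair of points. The weighted compatible set of the pairs inside a $y$-range should then equal the total weight of the range plus the weight of a heaviest non-decreasing subsequence of that range. Elements are laid out along the $y$-axis in reverse position order and along the $x$-axis by value. First I replace the values by distinct ranks: sort the pairs $(a_k, k)$ lexicographically, and let $\sigma(k)\in[n]$ be the rank of element $k$. A subsequence is then non-decreasing in $a$ if and only if it is increasing in $\sigma$. For element $k$ I create a red point $r_k = (2\sigma(k),\, 2(n-k))$ and a blue point $b_k = (2\sigma(k)+1,\, 2(n-k)+1)$, both of weight $w_k$. This gives $2n$ points with pairwise distinct coordinates, built in $O(n\log n)$ time, or $O(n)$ if sorting is charged to the LIS input. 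The blue point sits just up and to the right of its red partner, so the two are compatible.

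The key claim is that, for any interval $[l, r]$ of positions, the maximum weight of a compatible subset of $\{r_k, b_k : l\le k\le r\}$ equals $W_{l,r} + L_{l,r}$. Here $W_{l,r}=\sum_{k=l}^r w_k$ and $L_{l,r}$ is the weighted range LIS answer. I would prove it through the staircase formulation, Lemma~\ref{lem:staircase} in its weighted form, which the paper notes carries over.

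For the upper bound, fix a downward staircase $\Sigma$. For each element it scores at least one of $r_k, b_k$. Indeed, missing both would require $\Sigma(2\sigma(k)) < 2(n-k)$ and $\Sigma(2\sigma(k)+1) > 2(n-k)+1$, which contradicts monotonicity. It scores both exactly when $\Sigma$ passes through the box $[2\sigma(k), 2\sigma(k)+1]\times[2(n-k), 2(n-k)+1]$. Suppose $\Sigma$ passes through the boxes of two elements $k < k'$. Then the box of $k'$ is strictly lower, so by monotonicity it must not lie to the left, which gives $\sigma(k) < \sigma(k')$. Hence the doubly scored elements form an increasing subsequence in $\sigma$, and the score is at most $W_{l,r}+L_{l,r}$.

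For the lower bound, take an optimal subsequence $k_1<\dots<k_m$. Its boxes move strictly right and strictly down, so a downward staircase can be threaded through all of them. This staircase scores both points of each chosen element and at least one point of every other element, so its score is at least $W_{l,r}+L_{l,r}$.

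The data structure is then simple. Build the assumed structure on the $2n$ points in $f(2n)$ time, and store prefix sums of the weights in $O(n)$ time. A query $[l,r]$ becomes the $y$-interval $[2(n-r),\, 2(n-l)+1]$, which contains exactly the points of elements $l,\dots,r$. We answer with the returned value minus $W_{l,r}$, computed from the prefix sums in $O(1)$ time, for $g(2n)+O(1)$ total.

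I expect the main obstacle to be the upper-bound step. It needs care at the integer boundary cases of the score definition, where points lie exactly on the staircase (``on or above'', ``on or below''). It also must check that a staircase cannot score both points of an element without genuinely passing through its box, as that is what forces the increasing order. If the integer coordinates make this awkward, I would shrink the offsets between $r_k$ and $b_k$ to a small $\varepsilon$ before applying Lemma~\ref{lem:distinct}-style relabelling.
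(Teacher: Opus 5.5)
Your argument is correct, but it reaches the key identity by a different route than the paper. The paper places $R_i$ and $B_i$ at the same location $(a_i, n-i)$ and never uses staircases. For the upper bound it notes directly that if $p<q$ both have their two points in a compatible set, then $R_p$ lies above $B_q$, so compatibility forces $a_p\le a_q$. For the lower bound it builds the compatible set explicitly: for each unchosen index it takes the red point when $a_i$ is below the running maximum $m_i$ of the chosen subsequence, and the blue point otherwise. That threshold rule is exactly your staircase threaded through the chosen boxes, written combinatorially.

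Your route through the weighted form of Lemma~\ref{lem:staircase} makes the structure (at least one point per element, both points only along an increasing chain) transparent. However, it forces you to perturb the instance. The staircase lemma genuinely fails with ties: a red point directly above a blue point with the same $x$-coordinate is compatible with it, yet no staircase scores both. In the paper's coincident instance this happens for any $p<q$ with $a_p=a_q$. So your offsets and distinct ranks are necessary for your proof, whereas the paper's direct argument tolerates coincident points.

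That perturbation is where your write-up falls slightly short of the statement. Computing $\sigma$ costs $O(n\log n)$, so as written you obtain preprocessing $f(2n)+O(n\log n)$. Saying the sort can be charged to the LIS input does not justify the claimed $f(2n)+O(n)$.

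The repair is easy, because only the red--blue dominance pairs matter. In your instance, $r_k$ dominates $b_{k'}$ exactly when $k<k'$ and $a_k>a_{k'}$. The same holds if you use $2a_k$ in place of $2\sigma(k)$, which is legitimate since the values are integers, and it also holds for the paper's coincident points. So build the $2a_k$ version in $O(n)$ time, with the same $y$-coordinates and the same query interval, and hand it to the assumed structure. Use the ranked instance only inside the analysis: both instances have the same compatible sets, so your staircase argument transfers verbatim.
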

\begin{proof}
For each index $i$ we place a blue point $B_i$ and a red point $R_i$ at the same location $(a_i,\, n-i)$, both of weight $w_i$: the $x$-coordinate carries the value, and the $y$-coordinate decreases with the index. The points of the indices in an interval $[l, r]$ are then exactly those with $y$-coordinate in $[n-r,\, n-l]$. We claim that the maximum weight of a compatible subset of these points is $\sum_{i=l}^{r} w_i$ plus the answer of the LIS query on $[l, r]$. The reduction follows: we build the $2n$ points and the prefix sums of the weights in $O(n)$ time, and an LIS query is then answered by one interval query, subtracting the baseline $\sum_{i=l}^{r} w_i$, read from the prefix sums.

Consider a compatible subset of the points of the indices in $[l, r]$, and let $I$ be the set of indices with both points in it. For $p < q$ in $I$, the red point $R_p$ lies above the blue point $B_q$, so compatibility forces $a_p \le a_q$, and the values along $I$ are non-decreasing. Since every other index contributes with at most one point, the weight is at most $\sum_{i=l}^{r} w_i + \sum_{i \in I} w_i$.

Conversely, let $I \subseteq [l, r]$ have non-decreasing values, and for an index $i$ let $m_i$ be the largest value associated to an index of $I$ at most $i$, when there is one. Take both points of every index of $I$ and, for every other index of $[l, r]$, its red point when $m_i$ exists and $a_i < m_i$, and its blue point otherwise. Every red point taken has $a_p \le m_p$, and every blue point taken has $a_q \ge m_q$ whenever $m_q$ exists. A red point can dominate only a blue point below it, that is $R_p$ against $B_q$ with $p < q$, since otherwise $R_p$ does not lie above $B_q$. For such a pair $m_p$ exists, hence so does $m_q$, and $a_p \le m_p \le m_q \le a_q$, so the pair is safe. The set is therefore compatible, of weight $\sum_{i=l}^{r} w_i + \sum_{i \in I} w_i$, which proves the claim.
\end{proof}

Our structure answers exactly these interval queries. After the reassignment of Lemma~\ref{lem:distinct}, the interval is taken with endpoints strictly between the point heights, the across matrix of Lemma~\ref{lem:across-queries} is read at the grid rows of the two endpoints, and the false penalties, here the red points above the interval and the blue points below it, are corrected through prefix sums in a similar way to what is done in Lemma~\ref{lem:all-ell}. A core of size $O(n^{2-\varepsilon})$ for the across matrix of every weighted instance would therefore give weighted range LIS data structure with truly subquadratic preprocessing and polylogarithmic query time.

\section{Parametrization by minimum load}
\label{sec:parameterized}

Let $C$ be a clique of $\mathcal{S}$, the set of arcs of the model, and let $a\in C$ be an arc which does not contain any other arc of $C$. Following Apostolico and Hambrusch~\cite{ApostolicoH87}, we say that $a$ is a \emph{base} for $C$ and we call $C$ a clique of base $a$. Every nonempty clique has a base, since containment is a partial order on its finite set of arcs.

Shih and Hsu~\cite{ShihH89} find a maximum weight clique by trying every arc as a base. After sorting the endpoints in $O(n\log n)$ time, they compute for each arc $a$ a heaviest clique having $a$ as a base in $O(\deg(a)\log\log n)$ time, where $\deg(a)$ is the number of arcs intersecting $a$, and they keep the heaviest such clique.
Since $\sum_a \deg(a) = 2m$, where $m$ is the number of intersecting pairs of arcs, computing a heaviest clique for every base takes $O(m\log\log n)$ time, and with the sort the total is $O(n\log n + m\log\log n)$, to our knowledge still the fastest bound for the weighted problem. We show that it suffices to fix a single point on the circle, chosen to be met by as few arcs as possible, and to consider only the arcs intersecting this point as a base.

\begin{lemma}
    \label{lem:base-dichotomy}
    Given a clique $C$ of $\mathcal{S}$, one of the two following conditions holds:
    \begin{enumerate}[label=\normalfont(\roman*)]
        \item  \label{thm:cond1} There is a point of the circle contained in every arc of $C$.
        \item \label{thm:cond2} Every point of the circumference is covered by at least one base of $C$.
    \end{enumerate}
\end{lemma}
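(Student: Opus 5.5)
The plan is to prove the contrapositive. Assume that some point $p$ of the circle is covered by no base of $C$, and show that condition~\ref{thm:cond1} then holds. Every arc of $C$ contains a base of $C$: follow the containment order downward inside the finite set $C$ until reaching an arc that contains no other arc of $C$. So it suffices to find a single point that is common to all bases of $C$. Any arc of $C$ contains some base, hence contains that point as well.

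First I handle the bases. They pairwise intersect, since $C$ is a clique, and none contains another, by the definition of a base. None of them contains $p$. I cut the circle open at $p$. Each base is then an interval on the line: it cannot wrap around the cut, because it misses $p$. The bases now form a family of pairwise intersecting intervals on a line. By the Helly property of intervals, they share a common point $q$. Concretely, $q$ can be taken as the maximum of the heads in the unrolled order, which is at most the minimum of the tails, because every pair of intervals intersects.

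Second, I lift back from the bases to all of $C$. Every arc $a \in C$ contains a base $b$, so $a \supseteq b \ni q$, and condition~\ref{thm:cond1} follows. This step is immediate once containment is read as containment of point sets on the circle.

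The only delicate point is the cutting step. I need that, on the circle, pairwise intersecting arcs avoiding a common point $p$ behave exactly like intervals. Two arcs that both miss $p$ intersect on the circle if and only if their unrolled intervals intersect, since the cut removes only $p$, which neither arc contains. Everything else, including the degenerate case where $C$ is empty (which is vacuous), is routine. If $C$ has a single base, the argument still works trivially.
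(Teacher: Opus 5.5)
Your proposal is correct and follows essentially the same route as the paper's proof. You argue by contrapositive, note that every arc of $C$ contains a base, cut the circle at the uncovered point, apply the Helly property to the pairwise intersecting base intervals, and lift the common point back to all of $C$.
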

\begin{proof}
    Suppose condition~\ref{thm:cond2} does not hold, i.e., some point $\phi$ on the circumference is not covered by any base arc of $C$. We show that in this case condition~\ref{thm:cond1} holds.

    First, we argue that every arc of $C$ contains a base arc of $C$. Indeed, an arc that is not itself a base arc by definition contains another arc of $C$, and the chain of containments ends at a base arc inside the original one.

    The base arcs of $C$ pairwise intersect, as they belong to the clique $C$, and none of them contains $\phi$. Cutting the circle open at $\phi$ turns each base arc into an interval on a line and, since no base arc meets $\phi$, leaves every base arc and every intersection between any two of them untouched. By the Helly property for real-line intervals, the base arcs share a common point. Every arc of $C$ contains a base arc, so this common point lies in every arc of $C$, and hence condition~\ref{thm:cond1} holds.
\end{proof}

The lemma lets us reason about a single point. For a clique whose arcs share a common point we can find the heaviest one directly. For any other clique every point of the circle is covered by one of its bases, so it suffices to enumerate the bases through a single point, and we choose the point covered by the fewest arcs to keep that enumeration small. The running time then depends on that minimum, which we now make precise.

\subsection{The pivot point}

For a point $\theta$ of the circle, let $\Gamma(\theta)$ be the set of arcs containing $\theta$. Any two arcs containing $\theta$ intersect at $\theta$, so $\Gamma(\theta)$ is a clique. We call $d(\theta) = |\Gamma(\theta)|$ the \emph{load} of $\theta$ and $w(\Gamma(\theta))=\sum_{a\in\Gamma(\theta)}w(a)$ its \emph{weighted load}.
Let
\[
    d \,=\, \min_\theta d(\theta)
\]
be the minimum load, attained at a point we fix and call the \emph{pivot} $\psi$. We write $\Gamma(\psi)$ for the set of $d$ arcs through it.
\begin{lemma}
\label{lem:sweep}
The pivot $\psi$, the set $\Gamma(\psi)$, and a point of maximum weighted load can be computed in $O(n)$ time once the endpoints are sorted, which costs $O(n\log n)$.
\end{lemma}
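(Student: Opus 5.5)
We sweep a point once around the circle over the sorted endpoint sequence and maintain the load and the weighted load of the current position incrementally. The $2n$ endpoints are sorted, and we may assume they are distinct; they split the circle into $2n$ elementary open gaps between consecutive endpoints, plus the endpoints themselves. Load and weighted load are constant on each gap. At an endpoint they equal the value on one of the two adjacent gaps plus possibly one arc: a head belongs to the arc that starts there, and so does a tail. Since arcs are closed, an endpoint has load at least that of either neighbouring gap. Hence the minimum load is attained inside some gap, and we may pick $\psi$ as an interior point of a gap. For the maximum weighted load we must also look at endpoints. Here it suffices to check heads: just after a head, the point lies in the adjacent gap with the same set of arcs, unless some tail sits exactly there, which cannot happen since endpoints are distinct. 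So the value on each gap equals the value at its left endpoint when that endpoint is a head. Gaps and heads together therefore cover every distinct set $\Gamma(\theta)$, up to inclusion of sets with smaller weight.

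The steps are as follows.
\begin{enumerate}
\item Fix a starting gap, say the one immediately after the first endpoint in sorted order. Compute its load and weighted load directly in $O(n)$ time by testing, for each arc, whether it contains that gap. An arc $(h,t)$ in normalized coordinates contains the position $p+\tfrac12$ iff $h\le p<t$ for a non-wrapping arc, and iff $p\ge h$ or $p<t$ for a wrapping one.
\item Traverse the endpoints in clockwise order. Crossing a head adds its arc, with $+1$ to the load and $+w(a)$ to the weighted load. Crossing a tail removes its arc. Each update takes $O(1)$ time.
\item Record the minimum load over all gaps together with one gap attaining it, and record the maximum weighted load over all gaps and over the head points. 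Weights are nonnegative, so heads suffice, as argued above.
\item Set $\psi$ to the midpoint of the minimizing gap, and recover $\Gamma(\psi)$ with one more $O(n)$ scan of the arcs using the containment test from the first step.
\end{enumerate}

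Every step is linear once the endpoints are sorted, so the only superlinear cost is the initial sort, which takes $O(n\log n)$ time.

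The only delicate point is the correctness of restricting attention to gaps for the minimum and to gaps plus heads for the maximum. This relies on the closedness of the arcs and on the distinctness of endpoints, both assumed in the preliminaries.
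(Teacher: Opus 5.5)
Your proposal is correct and follows the paper's proof: one clockwise sweep over the sorted endpoints, adding or removing a single arc (and its weight) at each head or tail, in $O(n)$ time after the $O(n\log n)$ sort. Your extra care about endpoints is sound, though checking heads separately is redundant: by closedness and distinct endpoints, a head has exactly the arc set of the gap following it and a tail that of the gap preceding it, so the gaps alone realize every set $\Gamma(\theta)$, for both the minimum and the maximum (this also covers the tails, for which your phrase ``up to inclusion'' gives no explicit argument).
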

\begin{proof}
The load is constant on each of the $2n$ gaps between consecutive endpoints and changes by one at each endpoint, so a single sweep of the circle reports every load value, and with it the minimum load, the pivot $\psi$, and $\Gamma(\psi)$. The weighted load $w(\Gamma(\theta))$ likewise changes by one arc weight at each endpoint, so the same sweep reports a point of maximum weighted load. The sweep takes $O(n)$ time once the endpoints are sorted, and the sort costs $O(n\log n)$.
\end{proof}

\subsection{Algorithm and analysis}

\thmParamCircularArcs*

\begin{proof}
The algorithm computes two candidate cliques and returns the heavier one. By Lemma~\ref{lem:sweep} it first finds the pivot $\psi$, the set $\Gamma(\psi)$, and a point $\theta^\star$ of maximum weighted load. The first candidate is the clique $\Gamma(\theta^\star)$. For the second, it runs the procedure of Shih and Hsu~\cite{ShihH89} on each of the $d$ arcs $a\in\Gamma(\psi)$, obtaining a heaviest clique having $a$ as a base, and keeps the heaviest over all such $a$. The output is whichever candidate is heavier, together with the clique attaining it.

We first prove the output has maximum weight. Both candidates are cliques, so neither exceeds the weight of a maximum weight clique $C^\star$, and it remains to show that one of them has weight at least $w(C^\star)$. Apply Lemma~\ref{lem:base-dichotomy} to $C^\star$. In case~\ref{thm:cond1} some point $\theta$ lies in every arc of $C^\star$, so $C^\star\subseteq\Gamma(\theta)$. Since the weights are nonnegative and $\theta^\star$ has maximum weighted load, $w(C^\star)\le w(\Gamma(\theta))\le w(\Gamma(\theta^\star))$. In case~\ref{thm:cond2} every point of the circle is covered by a base of $C^\star$, in particular the pivot $\psi$, so some arc of $\Gamma(\psi)$ is a base of $C^\star$ and the enumeration over $\Gamma(\psi)$ returns a clique of weight $w(C^\star)$. Either way a candidate attains $w(C^\star)$.

We now bound the running time. By Lemma~\ref{lem:sweep} the pivot $\psi$, the set $\Gamma(\psi)$, and the point $\theta^\star$ are found in $O(n\log n)$ time. The procedure of Shih and Hsu computes a heaviest clique of base $a$ in $O(\deg(a)\log\log n)$ time, where $\deg(a)$ is the number of arcs intersecting $a$. Running it over the arcs of $\Gamma(\psi)$ takes $O(\sum_{a \in \Gamma(\psi)} \deg(a) \log\log n)$ time, and the sum is at most $dn$, as $\Gamma(\psi)$ holds $d$ arcs of degree less than $n$. The total running time is $O(dn\log\log n + n\log n)$.
\end{proof}

Since $dn \le 2m$, as observed in the introduction, the bound of Theorem~\ref{thm:parameterized} never exceeds the $O(n\log n + m\log\log n)$ of Shih and Hsu, and the advantage grows as $d$ shrinks. For $d = \Theta(n)$ every point of the circle is covered by $\Theta(n)$ arcs, the graph has $\Theta(n^2)$ edges, and the two bounds meet at $O(n^2 \log\log n)$, while for $d = 0$ the model is an interval model, the sweep alone solves the problem, and the bound drops to $O(n \log n)$. With unit weights the bound also holds for the unweighted problem, where it improves on the algorithm of Theorem~\ref{thm:clique-circ-arc} when $d = o(\log^2 n / \log\log n)$.

\section{Maximum clique in interval filament graphs}
\label{sec:filaments}

We turn to interval filament graphs. An \emph{interval filament} $f_u$ is a curve from $(\ell_u, 0)$ to $(r_u, 0)$, attached to an interval $i_u = [\ell_u, r_u]$ of the $x$-axis, drawn in the upper half plane and confined to the vertical strip of $i_u$. That means that two filaments whose intervals are disjoint never intersect. The intersection graph of a family $\mathcal{F} = \{f_1, \ldots, f_n\}$ of interval filaments is an \emph{interval filament graph}. This class contains the interval, circle, chordal, circle-polygon and co-comparability graphs~\cite{Gavril00}.

As in earlier work on these graphs~\cite{Gavril00, BestW21}, the input consists of (1) the explicit intersection graph of the curves, and (2) the interval corresponding to the curve of each vertex in the intersection graph. The abstract graph does not determine a filament representation, and the only geometric data the algorithms ever use are the interval endpoints. We assume, without loss of generality, that the $2n$ endpoints are distinct.

Gavril's algorithm~\cite{Gavril00} finds a maximum weight clique in time $O(n^4)$. It can be improved to $n^{3+o(1)}$ by replacing the black-box invocations of a bipartite matching algorithm with the recent almost-linear time max flow algorithm~\cite{ChenKLPGS25}. We show that, in the unweighted case, a maximum clique can be found in time $O(n^\omega)$, where $\omega < 2.372$ is the exponent of matrix multiplication~\cite{AlmanDWXXZ25}.

\subsection{Gavril's reduction}

We recall the reduction of Gavril~\cite{Gavril00}. Any two filaments of a clique intersect, so their intervals intersect as well, and since intervals have the Helly property all the intervals of a clique share a common point. Every clique is therefore contained in one of the sets
\[
    V_p = \{u : p \in i_u\},
\]
the filaments whose interval contains the point $p$ of the $x$-axis.

Fix a point $p$ and consider two filaments of $V_p$ that do not intersect. Their intervals are not disjoint, as both contain $p$. They do not \emph{overlap} (i.e., intersect with neither containing the other) either, since any two filaments with overlapping intervals must cross. The two intervals are therefore nested. Write $u \prec v$ when the filaments of $u$ and $v$ do not intersect and $i_u \subset i_v$. Gavril proves that $\prec$ is a partial order on $V_p$~\cite{Gavril00}. In particular, when $u \prec v$ and $v \prec w$, the filaments of $u$ and $w$ do not intersect either.

The \emph{comparability graph} of a partial order has the elements as vertices and the comparable pairs as edges. By the previous paragraph, the comparability graph of $\prec$ is exactly the complement of the intersection graph induced by $V_p$: two filaments of $V_p$ intersect precisely when they are incomparable. A maximum clique of $\mathcal{F}$ inside $V_p$ is therefore a maximum \emph{antichain} of $\prec$, a largest set of pairwise incomparable elements, and with weights on the filaments a maximum weight clique is a maximum weight antichain.

The set $V_p$ changes only when $p$ crosses an endpoint, so $O(n)$ choices of $p$, one per gap between consecutive endpoints, capture every clique. Gavril computes a maximum weight antichain of each of the $O(n)$ orders from scratch, by a minimum cost flow computation taking $O(n^3)$ time per order, and the $O(n^4)$ bound follows.

In the unweighted case, computing a maximum antichain is a well-known application of bipartite matching, going back to Fulkerson's proof of Dilworth's theorem~\cite{Fulkerson56}. For a partial order $\prec$ on a finite set $V$, let $B$ be the bipartite graph with a left vertex $u^-$ and a right vertex $u^+$ for every $u \in V$, and an edge $u^- v^+$ whenever $u \prec v$. We write $B_p$ for the graph of the order on $V_p$.

\begin{lemma}
\label{lem:if-matching}
A maximum antichain of $\prec$ has size $|V| - \nu(B)$, where $\nu(B)$ is the size of a maximum matching of $B$.
\end{lemma}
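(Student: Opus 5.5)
The plan is the classical Fulkerson argument: pass through minimum chain covers via Dilworth's theorem, and relate chain covers to matchings of $B$. Throughout, $\prec$ is a strict partial order, so it is irreflexive and transitive. The statement follows from two equalities. The first is Dilworth's theorem: the maximum antichain size equals the minimum number of chains covering $V$. The second is that the minimum number of chains in a cover equals $|V| - \nu(B)$. Since Dilworth's theorem is classical, we may cite it. Alternatively, we can make the proof self-contained via K\H{o}nig's theorem, which is the route Fulkerson took and which I sketch below.

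\textbf{Matchings and chain covers.}
Given a matching $\mathcal{M}$ of $B$, build a directed graph on $V$ with an arc $u \to v$ for each matched edge $u^- v^+$.
\begin{itemize}
\item Every vertex has out-degree at most one (since $u^-$ is matched at most once) and in-degree at most one (since $v^+$ is matched at most once).
\item Because $u \prec v$ along every arc, this graph is acyclic.
\end{itemize}
It therefore decomposes into vertex-disjoint directed paths. Each path is a chain of $\prec$ by transitivity, and the number of paths is $|V| - |\mathcal{M}|$. Conversely, a partition of $V$ into $k$ chains, each listed in increasing order, yields a matching of size $|V| - k$ by matching consecutive elements. Hence the minimum size of a chain partition is $|V| - \nu(B)$.

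\textbf{The easy inequality.}
An antichain meets each chain in at most one element. Thus the maximum antichain has size at most the minimum chain partition size, which is $|V| - \nu(B)$.

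\textbf{The reverse inequality.}
This is the main step, and I would prove it with K\H{o}nig's theorem. Take a minimum vertex cover $K$ of $B$, so $|K| = \nu(B)$. Let
\[
A = \{u \in V : u^- \notin K \text{ and } u^+ \notin K\}.
\]
Each $u \in V$ excluded from $A$ has at least one of its two copies in $K$, so at most $|K|$ elements are excluded, and $|A| \ge |V| - \nu(B)$.

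It remains to check that $A$ is an antichain. Take two distinct $u, v \in A$ and suppose $u \prec v$. Then $u^- v^+$ is an edge of $B$. Since $u \in A$, the vertex $u^-$ is not in $K$, and since $v \in A$, the vertex $v^+$ is not in $K$. So this edge is uncovered, contradicting that $K$ is a vertex cover. The same argument rules out $v \prec u$. Hence $A$ is an antichain.

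Combining the two inequalities gives the lemma. The only subtlety is remembering that irreflexivity guarantees $B$ has no edge $u^- u^+$. That fact is not actually needed in the antichain argument, since there we only compare distinct elements.
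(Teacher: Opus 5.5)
Your proof is correct and follows the paper's route: the same correspondence between matchings of $B$ and partitions of $V$ into $|V|-\nu$ chains, combined with Dilworth's theorem. The only difference is that where the paper simply cites Dilworth, you also derive the needed inequality (a maximum antichain has size at least $|V|-\nu(B)$) from K\H{o}nig's theorem, taking the elements with neither copy in a minimum vertex cover. This makes the argument self-contained, and in that version the chain-partition-to-matching direction is no longer needed.
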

\begin{proof}
In a matching of $B$, every element of $V$ has at most one matched successor and at most one matched predecessor, so the matched pairs decompose $V$ into chains, a matching of size $\nu$ leaving $|V| - \nu$ of them. Conversely, the consecutive pairs of a partition of $V$ into $k$ chains form a matching of size $|V| - k$. Minimum chain partitions therefore correspond to maximum matchings, and by Dilworth's theorem the minimum number of chains partitioning $V$ equals the size of a maximum antichain.
\end{proof}

\subsection{Unweighted cliques via dynamic matching}

We run the sweep as a single dynamic process. Order the $2n$ endpoints from left to right. As $p$ moves along the $x$-axis, a filament $u$ enters $V_p$ when $p$ passes $l_u$ and leaves when $p$ passes $r_u$, so the sweep consists of exactly $2n$ insertions and deletions of elements of the order. The comparabilities of an entering filament are found in $O(n)$ time, comparing its interval with those of the current members of $V_p$ and querying the intersection oracle, for $O(n^2)$ time over the whole sweep. The sequence of updates is the sorted endpoint list, known before the sweep starts, so the maintenance problem we face is offline, and the next lemma solves it.

\begin{lemma}
\label{lem:offline-antichain}
Let a partial order over at most $n$ elements undergo a sequence of $O(n)$ insertions and deletions of elements, each insertion carrying the comparabilities of its element, with the whole sequence known in advance. The size of a maximum antichain after every update can be maintained in $O(n^\omega)$ total time, by a randomized algorithm correct with high probability.
\end{lemma}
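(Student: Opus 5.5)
The plan is to compute the size of a maximum antichain through Lemma~\ref{lem:if-matching} as $|V_p| - \nu(B_p)$, and to obtain $\nu(B_p)$ algebraically. Fix the universe of all $n$ elements that are ever inserted. Build one $n \times n$ matrix $A$ with $A[u,v] = x_{uv}$ whenever $u \prec v$ holds while both are present, and $0$ otherwise. If two elements never coexist, their entry is irrelevant and we set it to $0$. Each $x_{uv}$ is an independent uniform element of $\mathbb{F}_q$ with $q = n^{c+3}$. By the Lovász/Edmonds argument and Schwartz--Zippel, with high probability $\nu(B_p) = \operatorname{rank} A[V_p, V_p]$ simultaneously for all $2n$ time steps, by a union bound. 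So it suffices to maintain the rank of a principal submatrix under insertion and deletion of an index, offline.

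\textbf{Reduction from rank to nonsingularity.} Write $r_\tau$ for the rank at time $\tau$. An update inserts or deletes one row and one column, so $|r_{\tau+1} - r_\tau| \le 2$, and in fact $\le 1$, since the antichain size changes by at most one. To test whether $\operatorname{rank} A[S,S] \ge |S| - t$, I border $A[S,S]$ by $t$ random columns $P$ and $t$ random rows $Q$. The matrix $\begin{pmatrix} A[S,S] & P \\ Q & 0 \end{pmatrix}$ is then nonsingular, with high probability, iff the rank is at least $|S|-t$. To avoid adaptivity, I keep $n+1$ candidate padding levels in a single matrix. The matrix is $A$ together with a block of $n$ dummy rows and columns with random entries. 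Which dummies are ``active'' is controlled by the same index-insertion/deletion mechanism. So rank queries for level $t$ become nonsingularity queries of principal submatrices over a fixed $2n \times 2n$ matrix.

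\textbf{Offline nonsingularity.} The core subroutine is offline dynamic determinant (or nonsingularity) for principal submatrices indexed by sets that change by one index per step. I solve it by divide and conquer over time, in the style of Sankowski. At a recursion node covering a time window of length $L$, all indices present throughout the window are eliminated by a Schur complement, computed from the parent's Schur complement in $O(L^\omega)$ time via fast inversion. This leaves a matrix of size $O(L)$ on the indices touched in the window. Pivoting is done with random preconditioning, so that the eliminated blocks are invertible whenever the full matrix is. The recurrence $T(L) = 2T(L/2) + O(L^\omega)$ gives $O(n^\omega)$ total, with $O(n^2)$ initial work.

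\textbf{Main obstacle.} The hard part is combining the padding levels with the time recursion without paying a factor $n$. The level needed at time $\tau$ is $t_\tau = |V_p| - r_\tau$, and it changes by $O(1)$ per step, but it is not known in advance. I would resolve this by treating the level as part of the recursion. Within a window of length $L$, the level can move by at most $O(L)$. So each node only needs the padding indices in a range of size $O(L)$ around the level at the window's start. That starting level is determined by the left-to-right order of recursion: left children are processed first, so the level is known on entering each node. The padding indices outside this range are either fixed-active or fixed-inactive for the whole window. They are eliminated or dropped together with the other fixed indices. This keeps node matrices of size $O(L)$ and the total at $O(n^\omega)$. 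The final correctness with high probability follows by a union bound over the $O(n^2)$ polynomial identity tests performed.
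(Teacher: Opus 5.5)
Your preliminary reductions are sound: the Edmonds-matrix rank with a union bound over the $2n$ times, and the bordering by random rows and columns that turns ``rank at least $|S|-t$'' into nonsingularity of a principal submatrix of one fixed $2n\times 2n$ matrix. Fixing a window's level range on entry, because the earlier leaves are already processed, is also fine. The gap is the offline nonsingularity subroutine, specifically the claim that ``pivoting is done with random preconditioning, so that the eliminated blocks are invertible whenever the full matrix is.''

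At a node you eliminate the principal block on the elements $E$ present throughout the window, together with the $k$ padding indices active throughout it. That block is $\begin{pmatrix} A[E,E] & P[E,D] \\ Q[D,E] & 0 \end{pmatrix}$ with $|D|=k$. For every choice of the random entries, $\operatorname{rank} A[E,E] \le \nu(B_E) = |E| - a(E)$, where $a(E)$ is the maximum antichain size of $E$ (Lemma~\ref{lem:if-matching}). So the block has rank at most $|E| - a(E) + 2k$, and it is singular whenever $k < a(E)$.

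Your padding range must cover every level that may be queried in the window, so $k \le t_0 - L$. On the other hand, $a(E)$ lies between $t_0 - L$ and $t_0$. Hence $k < a(E)$ is the normal situation, not a low-probability event. A chain already breaks the scheme: $t_\tau = 1$, no padding index lies below your range, and the eliminated block $A[E,E]$ is strictly upper triangular in the chain order. It is therefore nilpotent, even though the queried level-$1$ matrices are nonsingular. No preconditioning repairs this. Multiplying by random matrices does not commute with taking principal submatrices, and nonsingularity of a matrix says nothing about its principal blocks.

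So the part that carries the lemma's weight is asserted rather than proved. That part is maintaining a rank, with queries that are singular by design, under vertex updates in $O(n^{\omega-1})$ time per update when the sequence is known. A working version needs a genuinely different way to control the pivots. For instance, it would have to guarantee $a(E) \le k \le |E|$ for the eliminated padding, which a range fixed symmetrically around $t_0$ cannot do.

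The paper does not rebuild this machinery. It notes that each insertion or deletion of an element is two vertex updates of $B$, one on each side. It then invokes the data structure of van den Brand, Forster, Nazari, and Polak~\cite[Theorem~5.1]{BrandFNP24}, which maintains the maximum matching size under vertex updates with a predicted sequence in $O(n^{\omega-1})$ time per update. Here the predictions are exact, giving $O(n^\omega)$ total time including initialization, correct with high probability. Lemma~\ref{lem:if-matching} then converts matching size into antichain size.
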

\begin{proof}
We maintain the bipartite graph $B$ of the matching formulation and the size of its maximum matching, which by Lemma~\ref{lem:if-matching} determines the size of a maximum antichain. An insertion or deletion of an element amounts to two vertex updates of $B$, one on each side, where a vertex update inserts or deletes one vertex together with its incident edges. The data structure of van den Brand, Forster, Nazari, and Polak~\cite[Theorem 5.1]{BrandFNP24} maintains the maximum matching size under vertex updates with a predicted update sequence, and our sequence is known in advance, so the predictions are exact and each update takes $O(n^{\omega - 1})$ time. Over the $O(n)$ vertex updates this is $O(n^\omega)$ time in total, including the $O(n^\omega)$ initialization. The maintenance is randomized and correct with high probability, as the matching size is read from the rank of a matrix with random entries.
\end{proof}

\thmAlgIntervalFilament*
\begin{proof}
The sweep feeds Lemma~\ref{lem:offline-antichain} with the $2n$ updates, so after every event we hold the size of a maximum antichain of the order on the current $V_p$, that is, the size of a maximum clique inside $V_p$, and we keep the point $p^\star$ attaining the largest one. The $O(n)$ gaps capture every clique, so the recorded maximum is the clique number.

It remains to produce a clique of that size. At the single point $p^\star$ we recompute an actual maximum matching of $B_{p^\star}$ by one maximum flow computation, in $n^{2 + o(1)}$ time~\cite{ChenKLPGS25}, and read off a maximum antichain from it through the constructive form of Dilworth's theorem. This antichain is a maximum clique. The extraction is dominated by $O(n^\omega)$, so the whole algorithm runs in $O(n^\omega)$ time. The only randomized step is the matching size, maintained algebraically and correct with high probability, so the same holds for the algorithm.
\end{proof}

This improves the $O(n^4)$ bound of Gavril~\cite{Gavril00} in the unweighted case.

\paragraph{The weighted case.}
For completeness we record the bound that recent flow algorithms yield for the weighted problem. Gavril solves each of the $O(n)$ comparability instances as a minimum cost flow, computing a maximum weight antichain on a network with $O(n)$ vertices and $O(n^2)$ edges. Plugging in the almost linear time minimum cost flow of~\cite{ChenKLPGS25}, each instance is solved in time $n^{2 + o(1)}$, so a maximum weight clique is found in time $n^{3 + o(1)}$. The gain comes entirely from faster flow, with Gavril's reduction left unchanged.

\section{Conditional lower bounds}
\label{sec:lower-bounds}

We complement our algorithm for interval filament graphs with a conditional lower bound, by reduction from the \emph{orthogonal vectors} problem (OV). This problem asks, given two sets $A, B \subseteq \{0,1\}^d$ of $N$ vectors each, whether there is a pair $a \in A$, $b \in B$ that is orthogonal, i.e., $\sum_{i=1}^d a_i b_i = 0$. Under the Orthogonal Vectors Hypothesis (OVH), which is implied by the Strong Exponential Time Hypothesis (SETH)~\cite{Williams05}, this problem admits no $O(N^{2-\epsilon})$-time algorithm, for any constant $\epsilon > 0$, when $d = c\log N$, for a constant $c$ that depends only on $\epsilon$.

We prove the bound for co-comparability graphs of posets, a subclass of the interval filament graphs. Establishing the bound for a smaller class is a stronger result, and on the way it gives a similar bound for the class of $k$-trapezoid graphs, which sits between the other two. Our lower bounds hold already for posets of dimension $c \log n$, $k$-trapezoid graphs with $k=c \log n$, and interval filament graphs where each filament is a polygonal line with only $c \log n$ segments. As a matter of fact, we show not only that finding a maximum clique is hard in these classes, but that quadratic time is already required to merely distinguish between the case where the entire (implicitly given) graph is a clique and the case where the maximum clique size is at most $n-1$.

\subsection{Co-comparability graphs}

For points $x, y \in \mathbb{R}^k$, we say $x$ is \emph{dominated} by $y$, written $x \le y$, when $x_i \le y_i$ in every coordinate $i \in [k]$.
A poset has \emph{dimension} at most $k$ when it is isomorphic to a subset of points of $\mathbb{R}^k$ under this order. Two points are \emph{incomparable} when neither dominates the other. The \emph{co-comparability graph} of a point set $P$ is the graph $G = (P, E)$ where $E$ is the set of incomparable pairs. So, a clique of $G$ is a set of pairwise incomparable points, an antichain of the order.

\begin{theorem}
Unless OVH fails, there is no $O(n^{2-\epsilon})$-time algorithm that, given a $k$-dimensional poset $P \subseteq \mathbb{R}^k$ of $n$ points, decides if the co-comparability graph of $P$ is a clique, for any constant $\epsilon > 0$ and dimension $k = c\log n$, for a constant $c$ that depends only on $\epsilon$.    
\end{theorem}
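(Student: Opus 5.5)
We reduce from OV: given $A, B \subseteq \{0,1\}^d$ of $N$ vectors each, we build a poset $P \subseteq \mathbb{R}^k$ with $n = 2N$ points and $k = d + O(1)$. The co-comparability graph will be a clique, that is, $P$ will be an antichain, if and only if no orthogonal pair exists. An $O(n^{2-\epsilon})$ algorithm would then solve OV in $O(N^{2-\epsilon})$ time with $d = \Theta(\log N)$, so the constant $c$ for posets is inherited from the OV constant, adjusted by the additive $O(1)$ coordinates and by $\log n = \log N + 1$.

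The construction makes pairs inside $A$ and pairs inside $B$ automatically incomparable, and it makes a cross pair comparable exactly when the two vectors are orthogonal. We give every point two extra coordinates. For $a_j \in A$ (index $j \in [N]$) we set these to $(j, -j)$, and for $b_j \in B$ to $(j, -j)$ as well. Then two distinct points from the same side differ in opposite directions on these coordinates and are incomparable. We want a cross pair to be comparable only via $a \le b$, never $b \le a$. To arrange this, we shift: $a$ gets extra coordinates $(j, -j - N - 1)$... but that alone does not settle it. A cleaner choice is to use the pair $(j, -j)$ on the $A$ side and $(j' + 2N, -j' + 2N)$ on the $B$ side. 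Then $a$ is below $b$ on both extra coordinates, within-side pairs remain incomparable, and $b \le a$ is impossible.

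On the $d$ vector coordinates we encode orthogonality as a domination. Set $p(a)_i = a_i$ and $p(b)_i = 1 - b_i$. Then $p(a) \le p(b)$ coordinatewise iff $a_i \le 1 - b_i$ for all $i$, that is, iff no $i$ has $a_i = b_i = 1$, which is exactly orthogonality. Combining with the extra coordinates, $a$ and $b$ are comparable iff they are orthogonal, and all other pairs are incomparable. Hence the graph is complete iff there is no orthogonal pair. Building $P$ takes $O(Nd)$ time.

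The main step needing care is guaranteeing that same-side pairs are never comparable regardless of their vectors, and that $b \le a$ never holds spuriously. Both are handled by the two auxiliary coordinates, so the check is routine. The parameter bookkeeping (dimension $k = d+2 \le c' \log n$) is then straightforward.
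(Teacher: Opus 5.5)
Your reduction is correct and is essentially the paper's: both encode orthogonality as the coordinatewise dominance $a_i \le 1-b_i$ on $d$ coordinates, and both add auxiliary coordinates that make same-side pairs incomparable and rule out $b \le a$. The only difference is the gadget. The paper appends the $d$ coordinates $1-a_i$ on the $A$ side and $2+b_i$ on the $B$ side, giving dimension $2d$ and relying on the vectors within a side being distinct, since $A$ and $B$ are sets; your index coordinates $(j,-j)$ and $(j'+2N,\,2N-j')$ give dimension $d+2$ and also tolerate repeated vectors.
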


\begin{proof}
We prove the theorem by showing that a $T(n,k)$-time algorithm for maximum clique in co-comparability graphs of dimension-$k$ posets would imply an $O(T(2N,2d)+Nd)$-time algorithm for OV.
We encode an OV instance $A,B \subseteq \{0,1\}^d$ as the point set $\{\hat a : a \in A\} \cup \{\hat b : b \in B\}$ in $\mathbb{R}^{2d}$, where
\[
    \hat a := (a_1, \ldots, a_d,1-a_1,\ldots,1-a_d), \qquad \hat b := (1 - b_1, \ldots, 1 - b_d,2 + b_1, \ldots, 2 + b_d).
\]
The point set above has $n = 2N$ points in dimension $k = 2d$, and it can be built in time $O(Nd)$.

For any two vectors from $A$ that differ in the $i$-th coordinate, their corresponding points compare in the two opposite directions in coordinates $i$ and $d+i$, hence they are incomparable.
The same argument applies to any two vectors from $B$.
A point $\hat b$ is never dominated by a point $\hat a$, since their last $d$ coordinates are in $\{2,3\}$ and $\{0,1\}$, respectively. Finally, $\hat a \le \hat b$ asks for $a_i \le 1 - b_i$, for every coordinate $i \in [d]$, that is $a_i + b_i \le 1$, which fails exactly when $a_i = b_i = 1$. So $\hat a \le \hat b$ if and only if no coordinate $i \in [d]$ has $a_i = b_i = 1$, that is if and only if $a$ and $b$ are orthogonal.

Hence, the entire co-comparability graph of this poset is a clique if and only if no pair $a, b$ is orthogonal; otherwise the maximum clique has size at most $n - 1$.
\end{proof}

As a direct consequence we establish the following theorem from the introduction.

\thmLBCocomp*

\subsection{K-trapezoid graphs}

As a corollary of the previous theorems, we obtain a lower bound for maximum clique in $k$-trapezoid graphs, which are a superclass of co-comparability graphs of dimension-$k$ posets. Let $L_1, \ldots, L_{k}$ be parallel lines. A \emph{$k$-trapezoid} picks an interval on each line, and joins consecutive intervals into a polygon across the lines. An intersection graph of a family of $k$-trapezoids is a \emph{$k$-trapezoid graph}.

\corLBktrapezoid*

\begin{proof}
The lower bound follows from the fact that every co-comparability graph of a $k$-dimensional poset is a $k$-trapezoid graph, and that the input poset model can be converted to a trapezoid model in time $O(nk)$. Hence, a subquadratic algorithm for trapezoids would imply a subquadratic algorithm for posets and OVH would fail by \Cref{thm:poset-lower}.

The main idea behind this conversion is very simple: we would like to interpret each point of $\mathbb{R}^k$ as a degenerate zero-width $k$-trapezoid such that, for every $i \in [k]$, its zero-width interval on line $L_i$ is the single value of the point's $i$-th coordinate. Two such trapezoids are disjoint exactly when one lies entirely to the left of the other on every line, that is when one point \emph{strictly} dominates the other. However, in order to handle ties correctly, we need a bit more careful approach.

Given a poset $P \subseteq \mathbb{R}^k$, we first replace each coordinate of each point by its rank among distinct values appearing in that coordinate. This requires sorting $n$ values for each of the $k$ dimensions, which takes time $O(kn \log n)$ and dominates the running time of the conversion. This step does not change pairwise comparisons between elements of $P$, but now we can assume that $P\subseteq [n]^k$.

In the second step, we take each point $p=(p_1, p_2, \dots, p_k) \in P$, compute the sum of its coordinates $s=p_1 + p_2 + \cdots + p_k$ and replace each $p_i$ by $nkp_i + s$. This also does not change pairwise comparisons, but now we have the guarantee that each dominance is a strict dominance, i.e., for distinct $p, q$, whenever $p \leq q$ then $p_i < q_i$ for every $i \in [k]$.

In the final step, we convert each point $p=(p_1, p_2, \dots, p_k) \in P$ to a $k$-trapezoid with interval $[2p_i, 2p_i+1]$ on line $L_i$. Two such trapezoids intersect exactly when the two corresponding points are incomparable.

\end{proof}

\subsection{Interval filament graphs}

The class of interval filament graphs contain the class of $k$-trapezoid graphs~\cite{FelsnerMW97,Gavril00}.
However, to argue that the lower bound for $k$-trapezoids graphs (Corollary~\ref{cor:lb-trapezoid}) implies the lower bound for interval filament graphs (Corollary~\ref{cor:lb-interval-filament}) we need to argue that every $k$-trapezoid graph can be represented as an interval filament graph where each filament is a polygonal line of only $O(k)$ segments.
To this end we design (\Cref{fig:filaments}) a map $\Phi$ that assigns to each $k$-trapezoid $u=([a_1, b_1], [a_2,b_2], \dots, [a_k, b_k])$ the polygonal line $\Phi(u)$ that goes through points
\[
(1-a_1, 0),
(1, a_1),
(1, b_1),
(2, a_2),
(2, b_2),
(3, a_3),
\dots,
(k, b_k),
(k+b_k, 0).
\]
We assume without loss of generality that $a_i \geq 0$ for every $i \in [k]$. Observe that $\Phi(u)$ is an interval filament of $2k + 1$ segments running in the upper half plane from the x-axis at $1 - a_1$ back to the x-axis at $k + b_k$. The first and last segment have slope $+1$ and $-1$, the two pins that bring the curve down to the axis.

\begin{figure}
    \centering
    \def\RedData{1/0.10/0.22,2/0.12/0.23,3/0.40/0.52,
             4/0.43/0.55,5/0.48/0.60}
\def\BlueData{1/0.30/0.53,2/0.36/0.49,3/0.08/0.26,
              4/0.14/0.22,5/0.16/0.29}
\def\GreenData{1/0.69/0.80,2/0.64/0.91,3/0.70/0.87,
               4/0.77/0.94,5/0.75/0.92}

\newcommand{\DrawTrapezoid}[2]{%
  \foreach \i/\a/\b in #2 {
    \coordinate (#1-a-\i) at (\i,\a);
    \coordinate (#1-b-\i) at (\i,\b);
  }
  \path[draw=#1,fill=#1,fill opacity=0.12]
    (#1-a-1)--(#1-a-2)--(#1-a-3)--(#1-a-4)--(#1-a-5)
    --(#1-b-5)--(#1-b-4)--(#1-b-3)--(#1-b-2)--(#1-b-1)
    --cycle;
  \foreach \i in {1,...,5} {
    \draw[#1,line width=1.25pt] (#1-a-\i)--(#1-b-\i);
    \fill[#1] (#1-a-\i) circle[radius=1.5pt];
    \fill[#1] (#1-b-\i) circle[radius=1.5pt];
  }
}

\newcommand{\DrawFilament}[2]{%
  \foreach \i/\a/\b in #2 {
    \coordinate (#1-p-\i) at (\i,\a);
    \coordinate (#1-q-\i) at (\i,\b);
    \ifnum\i=1\relax
      \coordinate (#1-L) at ({1-\a},0);
    \fi
    \ifnum\i=5\relax
      \coordinate (#1-R) at ({5+\b},0);
    \fi
  }
  \draw[#1,line width=1.25pt]
    (#1-L)--(#1-p-1)--(#1-q-1)
    --(#1-p-2)--(#1-q-2)--(#1-p-3)--(#1-q-3)
    --(#1-p-4)--(#1-q-4)--(#1-p-5)--(#1-q-5)--(#1-R);
  \fill[#1] (#1-L) circle[radius=2pt];
  \fill[#1] (#1-R) circle[radius=2pt];
}

\begin{tikzpicture}[font=\small]
  \draw[->] (6.85,1)--(8.25,1) node[midway,above=1pt] {$\Phi$};

  \foreach \i in {0, 0.5, 1, 1.5, 2, 2.5} {
    \draw[black!30,line width=0.4pt,dotted] (1.25,\i) -- (15.7,\i);
  }

  \begin{scope}[x=1.25cm,y=2.5cm]
  \begin{scope}
    \foreach \i in {1,...,5} {
      \draw[black!30,line width=0.4pt]
        (\i,0)--(\i,1.015);
      \node[anchor=north,text=black!65,inner sep=3pt]
        at (\i,-0.035) {$L_{\i}$};
    }
    \DrawTrapezoid{teal}{\GreenData}
    \DrawTrapezoid{blue}{\BlueData}
    \DrawTrapezoid{red}{\RedData}
    \node[red,anchor=east,inner sep=4pt] at (0.94,0.16) {$w$};
    \node[blue,anchor=east,inner sep=4pt] at (0.94,0.465) {$v$};
    \node[teal,anchor=east,inner sep=4pt] at (0.94,0.765) {$u$};
  \end{scope}

  \begin{scope}[xshift=8cm]
    \draw[black!60,line width=0.5pt] (0.05,0)--(6.15,0);

    \DrawFilament{teal}{\GreenData}
    \DrawFilament{blue}{\BlueData}
    \DrawFilament{red}{\RedData}

    \node[red,anchor=south east,inner sep=0.5pt]
      at (red-q-4) {$\Phi(w)$};
    \node[blue,anchor=south east,inner sep=0.5pt]
      at (blue-q-4) {$\Phi(v)$};
    \node[teal,anchor=south east,inner sep=0.5pt]
      at (teal-q-4) {$\Phi(u)$};
  \end{scope}
  \end{scope}
\end{tikzpicture}
    \caption{The reduction from $k$-trapezoid graphs (here $k=5$) to interval filament graphs, where each filament is a polygonal line with $2k+1$ segments, proving Corollary~\ref{cor:lb-interval-filament}.}
    \label{fig:filaments}
\end{figure}

We claim that, for any two $k$-trapezoids $u=([a_i,b_i])_{i=1}^k$ and $v=([c_i,d_i])_{i=1}^k$, they intersect if and only if the filaments $\Phi(u)$ and $\Phi(v)$ intersect.

We first argue that disjoint trapezoids give disjoint filaments. Two trapezoids are disjoint when their intervals have the same strict order on every line $L_i$ for $i\in [k]$. Assume w.l.o.g.~$b_i < c_i$ for every $i \in [k]$. The vertical segments of $\Phi(u)$ and $\Phi(v)$, i.e., $((i, a_i),(i, b_i))$ and $((i,c_i),(i,d_i))$ are disjoint. Moreover, the segments connecting two consecutive vertical segments are also disjoint, because $((i,b_i),(i+1,a_{i+1}))$ stays strictly below $((i,d_i),(i+1,c_{i+1}))$, because this is true at both endpoints. The leftmost segments of $\Phi(u)$ and $\Phi(v)$ are also disjoint because they lie on distinct parallel lines with slope +1, and the rightmost segments lie on distinct parallel lines of slope -1.

It remains to argue that intersecting trapezoids give intersecting filaments. There are two cases:
\begin{enumerate}
    \item The defining intervals of trapezoids intersect on some line $L_i$. Since the elements contain segments identical to those intervals, they also intersect. 
    \item The intervals are disjoint on every line $L_i$ but their order reverses between two consecutive lines $L_i$ and $L_{i+1}$, i.e., w.l.o.g., $b_i < c_i$ but $d_{i+1} < a_{i+1}$. Then, segments $((i,b_i),(i+1,a_{i+1}))$ and $((i,d_i),(i+1,c_{i+1}))$ intersect.
\end{enumerate}
Having shown that $u$ and $v$ intersect if and only if $\Phi(u)$ and $\Phi(v)$ intersect, we are ready to prove our lower bound.

\corLBIntervalFilament*

\begin{proof}
Consider the following algorithm for the maximum clique problem in $k$-trapezoid graphs. Upon receiving the input trapezoids, it applies $\Phi$ to each of them, which can be done in time $O(nk)$ in total. The intersection graph of the obtained filaments is identical to the intersection graph of the input trapezoids. Hence, it suffices to call the supposed algorithm for maximum clique in interval filament graphs, and returns its answer.

To conclude, a maximum clique algorithm running in time $O(n^{2-\epsilon})$ on interval filament graphs, with filaments given as polygonal lines with $k=c\log n$ segments, implies an $O(n^{2-\epsilon})$-time algorithm for $(\frac{k-1}{2})$-trapezoid graphs, which contradicts OVH by Corollary~\ref{cor:lb-trapezoid}.
\end{proof}

In case that $\omega = 2$, this lower bound matches up to subpolynomial factors the $O(n^\omega)$ upper bound of Theorem~\ref{thm:filament-clique}.

\section*{Acknowledgments}

Generative AI tools (Claude, Codex) were used for preparing initial versions of TikZ figures, and for suggesting improvements in the exposition. The authors assume responsibility for all content.

The work of Adam Polak has been funded by the Italian Ministry of University and Research (MUR) under Ministerial Decree No.\ 18169 of 17 November 2025 -- FIS 3 Call CUP: J53C25002160001. Legally required funding acknowledgment in Italian: \emph{Finanziato con il contributo del Ministero dell’Università e della ricerca ai sensi del D.D.\ n.\ 18169 del 17/11/2025 -- BANDO FIS 3. CUP J53C25002160001.}

\bibliographystyle{alphaurl}
\bibliography{references}

\end{document}